\Crefname{definition}{Definition}{Definitions}
\crefname{definition}{Def.}{Defs.}
\Crefname{theorem}{Theorem}{Theorems}
\crefname{theorem}{Theorem}{Theorems}
\Crefname{lemma}{Lemma}{Lemmas}
\crefname{lemma}{Lemma}{Lemmas}
\Crefname{section}{Section}{Sections}
\crefname{section}{Section}{Sections}
\Crefname{subsection}{Subsection}{Subsections}
\crefname{subsection}{Subsection}{Subsections}
\Crefname{appendix}{Appendix}{Appendices}
\crefname{appendix}{Appendix}{Appendices}
\Crefname{figure}{Figure}{Figures}
\crefname{figure}{Fig.}{Figs.}
\Crefname{algorithm}{Algorithm}{Algorithms}
\crefname{algorithm}{Alg.}{Algs.}
\Crefname{equation}{Equation}{Equations}
\crefname{equation}{equation}{equations}
\newcommand{\lowerbound}{{\sc LB}\xspace}
\newcommand{\maxp}{{\sc P_{\max}}\xspace}
\newcommand{\lowerboundmatrix}{{\sc LB(Matrix)}\xspace}
\newcommand{\lowerboundtensor}{{\sc LB(Tensor)}\xspace}
\newcommand{\odata}{{\sc O}\xspace}
\newcommand{\init}[1]{\hat{#1}}
\newcommand{\tmp}[1]{q_{prev}}
\newcommand{\lbbasedpartition}{{\it \cref{alg:3dmultittm}~(fast)}\xspace}
\newcommand{\bestconfigAAO}{{\it \cref{alg:3dmultittm}~(best)}\xspace}
\newcommand{\bestconfigAAOGen}{{\it \cref{alg:genMultittm}~(best)}\xspace}
\newcommand{\bestconfigSeq}{{\it TTM-in-Seq}\xspace}
\newcommand{\lbSeq}{{\it $C_{LB}$(TTM-in-Seq)}\xspace}
\newcommand{\lbFirstTTM}{{\it $C_{LB}$(1st TTM)}\xspace}
\definecolor{pastelviolet}{rgb}{0.8, 0.6, 0.79}
\definecolor{babyblueeyes}{rgb}{0.63, 0.79, 0.95}
\definecolor{pastelyellow}{rgb}{0.99, 0.99, 0.59}
\definecolor{pastelgreen}{rgb}{0.47, 0.87, 0.47}
\definecolor{lightgreen}{rgb}{0, 0.5, 0}
\definecolor{pastelred}{rgb}{1.0, 0.41, 0.38}
\colorlet{patternblue}{blue!60}
\definecolor{shadecolor}{RGB}{225,225,225}
\newcommand{\Tra}{{\sf T}}
\newcommand{\V}[2][]{{\bm{#1\mathbf{\MakeLowercase{#2}}}}} 
\newcommand{\M}[2][]{{\bm{#1\mathbf{\MakeUppercase{#2}}}}} 
\newcommand{\Mn}[3][]{{\bm{#1\mathbf{\MakeUppercase{#2}}}}^{(#3)}} 
\newcommand{\MnTra}[4][]{{\bm{#1\mathbf{\MakeUppercase{#2}}}}^{(#3)\Tra}} 
\newcommand{\T}[2][]{\boldsymbol{#1\mathscr{\MakeUppercase{#2}}}} 
\newcommand{\X}{\T{X}}
\newcommand{\Y}{\T{Y}}
\newcommand{\mult}{$(d{+}1)$-ary multiplication}
\newcommand{\starontop}[1]{{#1}^*}
\definecolor{pdfurlcolor}{rgb}{0,0,0.6}
\definecolor{pdfcitecolor}{rgb}{0,0.6,0}
\definecolor{pdflinkcolor}{rgb}{0.6,0,0}
\title{Communication Lower Bounds and Optimal Algorithms for Multiple Tensor-Times-Matrix Computation}
\author{Hussam Al Daas\thanks{Rutherford Appleton Laboratory, Didcot, Oxfordshire, UK
		(\email{hussam.al-daas@stfc.ac.uk}).}
	\and Grey Ballard \thanks{Wake Forest University, Winston-Salem, NC, USA
		(\email{ballard@wfu.edu}).}
	\and Laura Grigori \thanks{Inria Paris, France (\email{laura.grigori@inria.fr}, \email{suraj.kumar@inria.fr}).}
	\and Suraj Kumar\footnotemark[3]
	\and Kathryn Rouse \thanks{Inmar Intelligence, Winston-Salem, NC, USA
		(\email{kathryn.rouse@inmar.com}).}}
\newcommand{\response}[1]{{#1}}
\begin{document}

\maketitle

\begin{abstract}
Multiple Tensor-Times-Matrix (Multi-TTM) is a key computation in algorithms for computing and operating with the Tucker tensor decomposition, which is frequently used in multidimensional data analysis. We establish communication lower bounds that determine how much data movement is required to perform the Multi-TTM computation in parallel. The crux of the proof relies on analytically solving a constrained, nonlinear optimization problem. We also present a parallel algorithm to perform this computation that organizes the processors into a logical grid with twice as many modes as the input tensor. We show that with correct choices of grid dimensions, the communication cost of the algorithm attains the lower bounds and is therefore communication optimal. Finally, we show that our algorithm can significantly reduce communication compared to the straightforward approach of expressing the computation as a sequence of tensor-times-matrix operations.
\end{abstract}

\begin{keywords}
Communication lower bounds, Multi-TTM, Tensor computations, Parallel algorithms, HBL-inequalities
\end{keywords}


\begin{MSCcodes}
	15A69, 68Q17, 68Q25, 68W10, 68W15, 68W40
\end{MSCcodes}

\section{Introduction}
\label{sec:intro}

The Tucker tensor decomposition is a low-rank representation or approximation that enables significant compression of multidimensional data.
The Tucker format consists of a core tensor, which is much smaller than the original data tensor, along with a factor matrix for each mode, or dimension, of the data.
Computations involving Tucker-format tensors, such as tensor inner products, often require far fewer operations than with their full-format, dense representations.
As a result, the Tucker decomposition is often used as a dimensionality reduction technique before other types of analysis are done, including computing a CP decomposition \cite{BA98}, for example.

A 3-way Tucker-format tensor can be expressed using the tensor notation $\T{T} = \T{G} \times_1 \M{A}^{(1)} \times_2 \M{A}^{(2)} \times_3 \M{A}^{(3)}$, where $\T{G}$ is the 3-way core tensor, $\M{A}^{(n)}$ is a tall-skinny factor matrix corresponding to mode $n$, and $\times_n$ denotes the tensor-times-matrix (TTM) operation in the $n$th mode \cite{KB09}.
Here, $\T{T}$ is the full-format representation of the tensor that can be constructed explicitly by performing multiple TTM operations.
We call this collective operation the Multi-TTM computation, which is the focus of this work.

Multi-TTM is a fundamental computation in the context of Tucker-format tensors.
When the Tucker decomposition is used as a data compression tool, Multi-TTM is exactly the decompression operation, which is necessary when the full format is required for visualization \cite{KAC20}, for example.
In the case of full decompression, the input tensor is small and the output tensor is large.
One of the quasi-optimal algorithms for computing the Tucker decomposition is the Truncated Higher-Order SVD algorithm \cite{Tucker66,LMV00}, in which each factor matrix is computed as the leading left singular vectors of a matrix unfolding of the tensor.
In this algorithm, the smaller core tensor is computed via Multi-TTM involving the larger data tensor and the computed factor matrices.
When the computational costs of the matrix SVDs are reduced using randomization, Multi-TTM becomes the overwhelming bottleneck computation \cite{MSK20,SGLTU20}.

Since the overall size of multidimensional data grows quickly, there have been many recent efforts to parallelize the computation of the Tucker decomposition and the operations on Tucker-format tensors \cite{ABK16,CC+17,MS22,CLC18,Ballard:TuckerMPI:TOMS20}. 
There has also been recent progress in establishing lower bounds on the communication costs of parallel algorithms for tensor computations, including the Matricized-Tensor Times Khatri-Rao product (MTTKRP)~\cite{Ballard:MTTKRP:IPDPS18,Ballard:PPSC:MTTKRP,ZKBSH22} and symmetric tensor contractions \cite{Solomonik:SISC:2021}.
However, to our knowledge, no communication lower bounds have been previously established for computations relating to Tucker-format tensors.
In this work, we prove communication lower bounds for a class of Multi-TTM algorithms. Additionally, we provide a parallel algorithm that attains the lower bound to within a constant factor and is therefore communication optimal.

To minimize the number of arithmetic operations in a Multi-TTM computation, the TTM operations should be performed in sequence, forming temporary intermediate tensors after each step. \response{A single TTM corresponds to a matrix multiplication along a particular mode of the tensor, therefore a series of matrix multiplications is performed in the sequence approach to compute the final result.}
One of the key observations of this work is that when Multi-TTM is performed in parallel, this approach may communicate more data than necessary, even if communication-optimal algorithms are used for each individual TTM.
By considering the Multi-TTM computation as a whole, we can devise \response{\emph{atomic}} parallel algorithms that can communicate less than this TTM-in-Sequence approach, often with negligible increase in computation.
Our proposed algorithm provides greatest benefit when the input and output tensors vary greatly in size.

\noindent The main contributions of this paper are to
\begin{itemize}
	\item establish communication lower bounds for the parallel \response{atomic} Multi-TTM computation; 
	\item propose a communication optimal parallel algorithm;
	\item show that in many typical scenarios, the straightforward approach based on a sequence of TTM operations communicates more than performing Multi-TTM as a whole.
\end{itemize}

The rest of the paper is organized as follows. \cref{sec:relatedWork} describes previous work on communication lower bounds for matrix multiplication and some tensor operations. In~\cref{sec:notations}, we present our notations and preliminaries for the general Multi-TTM computation. 
To reduce the complexity of notations, we first focus on $3$-dimensional Multi-TTM computation for which we present communication lower bounds and a communication optimal algorithm in~\cref{sec:3dLowerBounds} and~\cref{sec:3dUpperBounds}, respectively.
In~\cref{sec:experiments}, we validate the optimality of the proposed algorithm and show that it significantly reduces communication compared to the TTM-in-Sequence approach with negligible increase in computation in many practical cases. 
We present our general results in~\cref{sec:genLowerBounds,sec:genUpperBounds}, and propose conclusions and perspectives in~\cref{sec:conclusions}.

\section{Related Work}
\label{sec:relatedWork}
A number of studies have focused on communication lower bounds for matrix multiplication, starting with the work by Hong and Kung~\cite{Hong:STOC81} to determine the minimum number of I/O operations for sequential matrix multiplication using the red-blue pebble game. Irony et al.~\cite{IRONY:JPDC04} extended this work for the parallel case. Demmel et al.~\cite{Demmel:IPDPS:CARMA} studied memory independent communication lower bounds for rectangular matrix multiplication based on aspect ratios of matrices. Recently, Smith et al.~\cite{smith2019tight:mm} and Al Daas et al.~\cite{ABGKR22} have tightened communication lower bounds for matrix multiplication. Ballard et al.~\cite{Ballard:3NL} extended communication lower bounds of the matrix multiplication for any computations that can be written as $3$ nested loops. Christ et al.~\cite{Christ:EECS-2013-61} generalized the method to prove communication lower bounds of $3$ nested loop computations for arbitrary loop nesting. We apply their approach to our Multi-TTM definition.

There is limited work on communication lower bounds for tensor operations. Solomonik et al.~\cite{Solomonik:SISC:2021} proposed communication lower bounds for symmetric tensor contraction algorithms. 
Ballard et al.~\cite{Ballard:MTTKRP:IPDPS18} proposed communication lower bounds for MTTKRP computation with cubical tensors.
This work is extended in~\cite{Ballard:PPSC:MTTKRP} to handle varying tensor dimensions. 
A sequential lower bound for tile-based MTTKRP algorithms is proved by Ziogas et al. \cite{ZKBSH22}.
We use some results from~\cite{Ballard:MTTKRP:IPDPS18,Ballard:PPSC:MTTKRP} to prove communication lower bounds for Multi-TTM.


\section{Notations and Preliminaries}
\label{sec:notations} 
In this section, we present our notations and basic lemmas for $d$-dimensional Multi-TTM computation. In \cref{sec:3dLowerBounds,sec:3dUpperBounds,sec:experiments}, we focus on $d=3$, i.e., $\T{Y}$ = $\T{X} \times_1 {\Mn{A}{1}}^\Tra \times_2 {\Mn{A}{2}}^\Tra \times_3 {\Mn{A}{3}}^\Tra$. We present our general results in \cref{sec:genLowerBounds,sec:genUpperBounds}.

We use boldface uppercase Euler script letters to denote tensors ($\T{X}$) and boldface uppercase letters with superscripts to denote matrices ($\Mn{A}{1}$). 
We use lowercase letters with subscripts to denote sizes ($n_1$) and add the prime symbol to them to denote the indices ($n_1^\prime$). 
We use one-based indexing throughout and $[d]$ to denote the set $\{1,2,\cdots, d\}$. 
To improve the presentation, we denote the product of elements having the same lowercase letter with all subscripts by the lowercase letter only ($n_1\cdots n_d$ by $n$ and $r_1\cdots r_d$ by $r$).
We denote the product of the $i$ rightmost terms with the capital letter with subscript $i$, $N_i=\prod_{j=d-i+1}^dn_j$ and $R_i = \prod_{j=d-i+1}^d r_i$, thus $n=N_d$, and $n_d=N_1$.



Let $\Y\in \mathbb{R}^{r_1\times\cdots\times r_d}$ be the $d$-mode output tensor, $\X\in \mathbb{R}^{n_1\times\cdots\times n_d}$ be the $d$-mode input tensor, and $\Mn{A}{k} \in \mathbb{R}^{n_k\times r_k}$ be the matrix of the $k$th mode. 
Then the Multi-TTM computation can be represented as $\Y = \X \times_1 {\Mn{A}{1}}^\Tra \cdots \times_d {\Mn{A}{d}}^\Tra$.
Without loss of generality and to simplify notation, we consider that the input tensor $\X$ is larger than the output tensor $\Y$, or $n\geq r$.
This corresponds to computing the core tensor of a Tucker decomposition given computed factor matrices, for example.
However, the opposite relationship where the output tensor is larger (e.g., $\X = \Y \times_1 {\Mn{A}{1}} \cdots \times_d {\Mn{A}{d}}$) is also an important use case, corresponding to forming an explicit representation of a (sub-)tensor of a Tucker-format tensor.
Our results extend straightforwardly to this case.
\response{We consider $d$-dimensional input and output tensors and therefore assume $n_i\ge 2$ and $r_i\ge 2$ for $1\le i\le d$.}
We also assume without loss of generality that the tensor modes are ordered in such a way that $n_1r_1 \le n_2r_2 \le \cdots\le n_dr_d$.


\begin{definition}
	\label{def:amttm}
	Let $\X$ be an $n_1\times \cdots \times n_d$ tensor, $\Y$ be an $r_1\times \cdots \times r_d$ tensor, and $\Mn{A}{j}$ be an $n_j \times r_j$ matrix for $j\in[d]$.
	Multi-TTM computes
	$$\Y = \X \times_1 {\Mn{A}{1}}^\Tra \cdots \times_d {\Mn{A}{d}}^\Tra$$
	where for each $(r_1^\prime,\ldots,r_d^\prime) \in [r_{1}] \times \cdots \times [r_d]$,
	\begin{equation}\label{eq:ourMultiTTMDef}
		\Y(r_1^\prime,\ldots,r_d^\prime) = \sum_{\{n_k^\prime \in [n_k]\}_{k \in [d]}} \X(n_1^\prime,\ldots,n_d^\prime) \prod_{j \in [d]} \Mn{A}{j}(n_j^\prime,r_j^\prime).
	\end{equation}
\end{definition}

Let us consider an example when $d=2$. In this scenario, the input and output tensors are in fact matrices $\M{X},\M{Y}$, and $\M{Y} = \MnTra{A}{1}{}\M{X}\Mn{A}{2}$. As mentioned earlier, Multi-TTM computation can be performed as a sequence of TTM operations, in this case two matrix multiplications. However, we define the Multi-TTM to perform all the products at once for each term of the summation of \cref{eq:ourMultiTTMDef}. 
Our definition comes at greater arithmetic cost, as partial $(d+1)$-ary multiplies are not computed and reused, but we will see that this approach can reduce communication cost.
We describe how the extra computation can often be reduced to a negligible cost in \cref{sec:cost-3d} \response{and compare it to the computation cost of TTM-in-Sequence in \cref{sec:computation-cost-comparison}}.

We can write pseudocode for the Multi-TTM with the following:
\begin{align*}
&\text{for $n_1^\prime = 1{:}n_1$, \ldots, for $n_d^\prime = 1{:}n_d$,}\\
&\quad \text{for $r_1^\prime = 1{:}r_1$, \ldots, for $r_d^\prime = 1{:}r_d$,}\\
&\quad \quad \Y(r_1^\prime,\ldots,r_d^\prime)\  +=  \X(n_1^\prime,\ldots,n_d^\prime) \,\cdot  \Mn{A}{1}(n_{1}^\prime,r_1^\prime) \cdot \cdots \cdot \Mn{A}{N}(n_d^\prime,r_d^\prime)\\
\end{align*}

\begin{definition}
	A \emph{parallel atomic Multi-TTM algorithm} computes each term of the summation of \cref{eq:ourMultiTTMDef} atomically on a unique processor, but it can distribute the $nr$ terms over processors in any way.
\end{definition}

Here atomic computation of a single \mult{} for a parallel algorithm means that all the multiplications of this operation are performed on only one processor, i.e., all $d+1$ inputs are accessed on that processor in order to compute the single output value. 
This assumption is necessary for our communication lower bounds. 
Processors can reorganize their local atomic operations to reduce computational costs without changing the communication or violating parallel atomicity.
However it is reasonable for an algorithm to break this assumption in order to improve arithmetic costs by reusing partial results across processors, and we compare against such algorithms in \cref{sec:experiments}.

\subsection{Parallel Computation Model}
\label{sec:prelim:costModel}

We consider that the computation is distributed across $P$ processors.
Each processor has its own local memory and is connected to all other processors via a fully connected network.
Every processor can operate on data in its local memory and must communicate to access data of other processors.
Hence, communication refers to send and receive operations that transfer data from local memory to the network and vice-versa.
Communication cost mainly depends on two factors -- the amount of data communicated (bandwidth cost) and the number of messages (latency cost).
Latency cost is dominated by bandwidth cost for computations involving large messages, so we focus on bandwidth cost in this work and refer it as communication cost throughout the text.
We assume the links of the network are bidirectional and that the communication cost is independent of the number of pairs of processors that are communicating.
Each processor can send and receive at most one message at the same time. 
In our model, the communication cost of an algorithm refers to the cost along the critical path.



\subsection{Existing Results}
\label{sec:pastResults}

Our work relies on two fundamental results.
The first, a geometric result on lattices, allows us to relate the volume of computation to the amount of data accessed by determining the maximum data reuse.
The result is a specialization of the H\"{o}lder-Brascamp-Lieb inequalities \cite{BCCT10}.
This result has previously been used to derive lower bounds for tensor computations \cite{Ballard:MTTKRP:IPDPS18, Ballard:PPSC:MTTKRP,Christ:EECS-2013-61, Knight15} in a similar way to the use of the Loomis-Whitney inequality~\cite{LoomisWhitney:AMS} in derivations of communication lower bounds for several linear algebra computations~\cite{Ballard:3NL}.
The result is proved in \cite{Christ:EECS-2013-61}, but we use the statement from \cite[Lemma 4.1]{Ballard:MTTKRP:IPDPS18}. Here $\V{1}$ represents a vector of all ones \response{and $\geq$ relation between vectors applies elementwise}.

\begin{lemma}
	\label{lem:hbl}
	Consider any positive integers $\ell$ and $m$ and any $m$ projections $\phi_j:\mathbb{Z}^\ell\rightarrow\mathbb{Z}^{\ell_j}$ ($\ell_j\leq \ell$), each of which extracts $\ell_j$ coordinates $S_j\subseteq [\ell]$ and forgets the $\ell-\ell_j$ others.
	Define
	$\mathcal{C} = \big\{\V{s}=\response{[s_1\ \cdots \ s_m]^\Tra: 0\leq s_i \leq 1 \text{ for } i=1,2,\cdots,m \text{ and }} \M{\Delta}\cdot\V{s}\ge\V{1}\big\}\text,$
	where the $\ell\times m$ matrix $\M{\Delta}$ has entries
	$\M{\Delta}_{i,j} = 1 \text{ if } i\in S_j \text{ and } \M{\Delta}_{i,j} = 0 \text{ otherwise}\text.$
	If $[s_1\ \cdots \ s_m]^\Tra\in\mathcal{C}$, then for all $F\subseteq \mathbb{Z}^\ell$,
	\vspace*{-0.15cm}$$ |F| \leq \prod_{j\in [m]}|\phi_j(F)|^{s_j}\text.$$
\end{lemma}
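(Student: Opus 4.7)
The plan is to prove this specialization of the Hölder--Brascamp--Lieb inequalities via the entropy method, which is the cleanest route when the maps $\phi_j$ are coordinate projections.

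First, I would introduce a random variable $X = (X_1,\ldots,X_\ell)$ drawn uniformly from $F$, so that the Shannon entropy satisfies $H(X) = \log|F|$. For each $j$, the pushforward $\phi_j(X)$ is supported on $\phi_j(F)$, and the standard bound on the entropy of a discrete random variable by the log-cardinality of its support yields $H(\phi_j(X)) \leq \log|\phi_j(F)|$. Thus the whole lemma reduces to proving the entropy inequality
\[
H(X) \;\leq\; \sum_{j\in[m]} s_j\, H(\phi_j(X)),
\]
since exponentiating then gives $|F|\leq \prod_j |\phi_j(F)|^{s_j}$ as desired.

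To establish the entropy inequality, I would fix an arbitrary ordering of the coordinates and apply the chain rule to write $H(X) = \sum_{i=1}^{\ell} H(X_i \mid X_1,\ldots,X_{i-1})$. For each $j$, applying the chain rule within the coordinates of $S_j$ gives $H(\phi_j(X)) = \sum_{i \in S_j} H(X_i \mid \{X_k : k \in S_j,\, k<i\})$. Since conditioning on additional variables can only decrease entropy, each summand is at least $H(X_i \mid X_1,\ldots,X_{i-1})$. Multiplying by $s_j\geq 0$ and summing over $j$, the coefficient of $H(X_i \mid X_1,\ldots,X_{i-1})$ on the right-hand side is $\sum_{j:\, i\in S_j} s_j$, which equals the $i$th entry of $\M{\Delta}\cdot\mathbf{s}$ and is at least $1$ by the hypothesis $\M{\Delta}\cdot\mathbf{s}\geq\mathbf{1}$. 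Collecting terms therefore gives $\sum_j s_j H(\phi_j(X)) \geq \sum_{i=1}^\ell H(X_i \mid X_1,\ldots,X_{i-1}) = H(X)$.

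The main obstacle is the generalized Shearer-type step: one must ensure that the bookkeeping over the $m$ projections combines correctly with the constraint matrix $\M{\Delta}$. The argument above resolves this by choosing a single global ordering of $[\ell]$ and exploiting only the monotonicity of conditional entropy, so that the row-by-row inequality $\sum_{j:\,i\in S_j} s_j \geq 1$ is exactly what is needed to absorb one copy of each $H(X_i \mid X_1,\ldots,X_{i-1})$. An alternative would be to derive the result from the continuous Brascamp--Lieb inequality via a lattice discretization, as in \cite{Christ:EECS-2013-61}, but the entropy proof is self-contained and avoids any real-variable analysis.
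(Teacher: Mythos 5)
Your proof is correct, and it takes a genuinely different route from the paper: the paper does not prove \cref{lem:hbl} itself, but imports the statement from \cite[Lemma 4.1]{Ballard:MTTKRP:IPDPS18} and defers the proof to \cite{Christ:EECS-2013-61}, where the discrete HBL inequality is established in much greater generality (arbitrary group homomorphisms, not just coordinate projections) using heavier machinery. What you give instead is the fractional-cover generalization of Shearer's lemma via the entropy method: chain rule for $H(X)$ under a single global coordinate ordering, chain rule for each $H(\phi_j(X))$ restricted to $S_j$, monotonicity of conditional entropy under enlarging the conditioning set, and then the row constraint $(\M{\Delta}\V{s})_i=\sum_{j:\,i\in S_j}s_j\ge 1$ together with nonnegativity of conditional entropies to absorb each term $H(X_i\mid X_1,\dots,X_{i-1})$ exactly once. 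This bookkeeping is sound, and the argument is self-contained and elementary, which is what it buys over the citation route; its limitation is that it exploits the projection structure and would not give the general group-homomorphism version, though that generality is not needed here. Two small observations: your argument never uses the upper bounds $s_i\le 1$ (only $s_j\ge 0$ and $\M{\Delta}\V{s}\ge\V{1}$), so you are in fact proving a slightly stronger statement than the lemma, which is harmless; and the uniform distribution on $F$ requires $F$ to be finite and nonempty, which you should state — the empty case is trivial, and if $F$ is infinite then, since $\M{\Delta}\V{s}\ge\V{1}$ forces every coordinate to belong to some $S_j$, at least one $|\phi_j(F)|$ is infinite and the inequality again holds trivially.
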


The second result, a general constrained optimization problem, allows us to cast the communication cost of an algorithm as the objective function in an optimization problem where the constraints are imposed by properties of the computation within the algorithm. 
A version of the result is proved in \cite[Lemma 5.1]{Ballard:PPSC:MTTKRP} and used to derive the general communication lower bound for MTTKRP.

\begin{theorem}
	\label{lem:mttkrpOpt}
	Consider the constrained optimization problem: 
	$$\min \sum_{j\in[d]}x_j$$
	such that 
	$$\frac{nr}{P} \le\prod_{j\in[d]}x_j \quad \text{and} \quad 0\le x_j \le k_j \quad \text{for all} \quad 1\le j\le d$$ 
	for some positive constants $k_1 \le k_2\le \cdots\le k_d$ with $\prod_{j\in[d]}k_j = nr$.
	Then the minimum value of the objective function is
	$$I\left(K_I/P\right)^{1/I} + \sum_{j\in[d-I]}k_j$$
	where we use the notation $K_I=\prod_{j=d-I+1}^d k_j$ and \response{$1\le I \le d$ is defined such that $L_I \leq P < L_{I+1}$.
	$$\text{Here} \quad L_j=\frac{K_j}{(k_{d-j+1})^j} \quad \text{for} \quad 1 \leq j \leq d \quad \text{and} \quad L_{d+1} = \infty.\qquad\qquad\qquad$$} 
	The minimum is achieved at the point $\V{x}^*$ defined by $\starontop{x_j}=k_j$ for $1\le j \le d-I$, $\starontop{x_\ell}=\left(K_I/P\right)^{1/I}$ for $d-I<\ell\le d$.
\end{theorem}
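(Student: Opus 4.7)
The plan is to treat this as a smooth constrained minimization and extract its structure via KKT conditions. A useful preliminary observation is that, under the change of variables $y_j = \log x_j$, the problem becomes a convex program: the objective $\sum_j e^{y_j}$ is convex, the product constraint becomes the linear inequality $\sum_j y_j \geq \log(nr/P)$, and each upper bound becomes $y_j \leq \log k_j$. Consequently every KKT point is a global minimizer, so it suffices to exhibit one KKT point matching the claimed formula. I would start by noting that the product constraint must be tight at the optimum---otherwise a small decrease in any coordinate would remain feasible and strictly reduce the sum---and that $x_j > 0$ for each $j$ since $\prod_i x_i \geq nr/P > 0$.

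Writing the Lagrangian $\mathcal{L} = \sum_j x_j - \lambda\bigl(\prod_j x_j - nr/P\bigr) + \sum_j \mu_j (x_j - k_j)$ with $\lambda, \mu_j \geq 0$, stationarity yields $\lambda \prod_{i \neq j} x_i = 1 + \mu_j$. Complementary slackness then forces $\lambda \prod_{i \neq j} x_i = 1$ at each \emph{inactive} index ($x_j < k_j$) and $\lambda \prod_{i \neq j} x_i \geq 1$ at each \emph{active} index ($x_j = k_j$). Using $\prod_i x_i = nr/P$, the first condition shows that all inactive variables share a common value $\alpha := \lambda\, nr/P$, while the second simplifies to $\alpha \geq k_j$ at every active index. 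Combined with $x_j = \alpha \leq k_j$ at inactive indices, this shows the active set must equal $\{j : k_j \leq \alpha\}$, which under the ordering $k_1 \leq \cdots \leq k_d$ is a prefix $\{1, \ldots, d-I\}$ for some $I \in \{1, \ldots, d\}$.

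Plugging $x_j = k_j$ for $j \leq d-I$ and $x_j = \alpha$ for $j > d-I$ into the tight product constraint gives $(nr/K_I)\,\alpha^I = nr/P$, hence $\alpha = (K_I/P)^{1/I}$ and the objective value $\sum_{j=1}^{d-I} k_j + I(K_I/P)^{1/I}$ stated in the theorem. The two consistency conditions on $\alpha$---namely $\alpha \leq k_{d-I+1}$ (feasibility of the inactive variables) and $\alpha \geq k_{d-I}$ (correctness of the active prefix)---translate, using the identity $K_{I+1} = k_{d-I} K_I$, into precisely $L_I \leq P \leq L_{I+1}$, pinning down the correct $I$.

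The main obstacle I expect is the bookkeeping: verifying that $L_I$ is monotone nondecreasing in $I$ under the ordering assumption on the $k_j$ (a short computation gives $L_{I+1}/L_I = (k_{d-I+1}/k_{d-I})^I \geq 1$), so that the intervals $[L_I, L_{I+1})$ tile $[1, \infty)$ and $I$ is unambiguously determined for every $P \geq 1$. Boundary cases require minor care: $I = d$ has an empty active prefix (covered by the convention $L_{d+1} = \infty$), and values of $P$ exactly at a threshold admit two adjacent values of $I$ yielding the same optimum. Once these are dealt with, convexity of the reformulated problem ensures the exhibited KKT point is a global minimum, completing the argument.
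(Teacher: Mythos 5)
Your proposal is correct, and it rests on the same fundamental tool as the paper---establishing a KKT point and invoking sufficiency---but it gets there by a noticeably different route. The paper cites two external results (quasiconvexity of the product constraint and sufficiency of KKT for quasiconvex constraints), then simply \emph{guesses} the primal point $\V{x}^*$ together with explicit dual variables and verifies the four KKT conditions case by case. You instead make the problem genuinely convex via the substitution $y_j=\log x_j$ (valid since every feasible point has $x_j>0$), which replaces the quasiconvexity lemmas by standard convex-programming sufficiency, and you then \emph{derive} the optimizer: stationarity plus complementary slackness force all inactive coordinates to a common value $\alpha$, the ordering $k_1\le\cdots\le k_d$ forces the active set to be a prefix, and the two consistency conditions $k_{d-I}\le\alpha\le k_{d-I+1}$ are exactly $L_I\le P< L_{I+1}$, which explains where the case structure in the statement comes from rather than just confirming it. What each buys: your argument is more self-contained and more illuminating about the thresholds $L_I$ (including the monotonicity check $L_{I+1}/L_I=(k_{d-I+1}/k_{d-I})^I\ge 1$); the paper's is shorter given its cited lemmas and avoids any change of variables. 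One bookkeeping item you should make explicit: you argue convexity in the $y$-coordinates but write the Lagrangian in the $x$-coordinates, so either carry out the KKT analysis directly in $y$ (multipliers there are $\tilde\lambda=\lambda\prod_i x_i$ and $\tilde\mu_j=\mu_j x_j$, so positivity and complementary slackness transfer), or note this rescaling; since you only need to \emph{exhibit} a KKT point of the convex reformulation, no constraint qualification is required, and your treatment of the degenerate case $P=1$ (all constraints active) as a boundary tie is fine.
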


While \cref{lem:mttkrpOpt} can be straightforwardly derived from the previous work, we provide an alternate proof in \cref{app:sec:optLemmaProof}.
We represent it in this form to be directly applicable to all the constrained optimization problems in this paper. 
\response{The constraints $nr/P \leq \prod_{j\in[d]}x_j$ and $\prod_{j\in[d]}k_j = nr$
are derived from the Multi-TTM computation. The equality constraint on $\prod_{j\in[d]}k_j$
implies that there is always a feasible solution to the optimization problem for $P\geq 1$.}
We calculate the ranges of $P$ for each $I$ in \Cref{lemma:matrixOptimalSolutions,lemma:tensorOptimalSolutions,lemma:genMatrixOptimalSolutions}.

\section{Lower Bounds for $3$-dimensional Multi-TTM}
\label{sec:3dLowerBounds}

We obtain the lower bound results for 3D tensors in this section, presented as \cref{theorem:lb:3DMultiTTM}.
The lower bound is independent of the size of the local memory of each processor, similar to previous results for matrix multiplication \cite{ABGKR22,Demmel:IPDPS:CARMA} and MTTKRP \cite{Ballard:MTTKRP:IPDPS18,Ballard:PPSC:MTTKRP}, and it varies with respect to the number of processors $P$ relative to the matrix and tensor dimensions of the problem.

\response{The proof focuses on a processor that performs $1/P$th of the computation and owns at most $1/P$th of the data.
It reduces the problem of finding a lower bound on the amount of data the processor must communicate to solving a constrained optimization problem:} we seek to minimize the number of elements of the matrices and tensors that the processor must \response{access or partially compute} in order to execute its computation subject to structure constraints of Multi-TTM.
The most important constraint derives from \cref{lem:hbl}, which relates a subset of the computation within a Multi-TTM algorithm to the data it requires.
The other constraints provide upper bounds on the data required from each array.
The upper bounds are necessary to establish the tightest lower bounds in the cases where $P$ is small.
We show that the optimization problem can be separated into two independent problems, one for the matrix data and one for the tensor data.
\Cref{lemma:matrixOptimalSolutions,lemma:tensorOptimalSolutions} state the two constrained optimization problems along with their analytic solutions, both of which follow from \cref{lem:mttkrpOpt}.
%
That is, setting $d=3$, $k_1=n_1r_1$, $k_2=n_2r_2$ and $k_3=n_3r_3$ in~\cref{lem:mttkrpOpt}, we obtain \cref{lemma:matrixOptimalSolutions}.
Similarly, setting $d=2$ with $k_1=r$ and $k_2=n$, we obtain \cref{lemma:tensorOptimalSolutions}. We recall here that $r=r_1r_2r_3$ and $n=n_1n_2n_3$.


\begin{corollary}
	\label{lemma:matrixOptimalSolutions}
	Consider the following optimization problem:
	$$\min_{x,y,z} x+y+z$$
	such that 
	\begin{align*}
	\frac{nr}{P} &\le xyz  \\ 
	0 &\le \phantom{y}x\phantom{z} \le n_1r_1 \\
	0 &\le \phantom{x}y\phantom{z} \le n_2r_2 \\
	0 &\le \phantom{x}z\phantom{y} \le n_3r_3,
	\end{align*}
	where $n_1r_1 \le n_2r_2 \le n_3r_3$, and $n_1,n_2,n_3,r_1,r_2,r_3,P \ge 1$.
	The optimal solution $(\starontop{x},\starontop{y},\starontop{z})$ depends on the relative values of the constraints, yielding three cases:
	\begin{enumerate}
		\item if $P < \frac{n_3r_3}{n_2r_2}$, then $\starontop{x}=n_1r_1$, $\starontop{y}=n_2r_2$, $\starontop{z}=\frac{n_3r_3}{P}$;
		\item if $\frac{n_3r_3}{n_2r_2}\le P < \frac{n_2n_3r_2r_3}{n_1^2r_1^2}$, then $\starontop{x}=n_1r_1$, $\starontop{y}=\starontop{z}= \big(\frac{n_2n_3r_2r_3}{P}\big)^{\frac{1}{2}}$;
		\item if $\frac{n_2n_3r_2r_3}{n_1^2r_1^2} \le P$, then $\starontop{x}=\starontop{y}=\starontop{z}= \big(\frac{nr}{P}\big)^{\frac{1}{3}}$;
	\end{enumerate}
	which can be visualized as follows.
	\begin{center}
		\begin{tikzpicture}[scale=0.815, every node/.style={transform shape}]
		\draw [->, thick] (-0.1,0) -- (15,0) node [below right, scale=1.2] {$P$};
		\draw (0, 0.1) -- node [below, pastelred, scale=1.2]{$1$}(0,-0.1);
		\draw (5, 0.1) -- node [below, pastelred, scale=1.2]{$\frac{n_3r_3}{n_2r_2}$}(5,-0.1);
		\draw (10, 0.1) -- node [below, pastelred, scale=1.2] {$\frac{n_2n_3r_2r_3}{n_1^2r_1^2}$}(10,-0.1);
		
		\node[align=left,below,scale=1.2] at (2.5, -0.4) {$\starontop{x}=n_1r_1$\\ $\starontop{y}=n_2r_2$\\ $\starontop{z}=\frac{n_3r_3}{P}$};
		\node[align=left,below,scale=1.2] at (7.5, -0.6) {$\starontop{x}=n_1r_1$\\$\starontop{y}=\starontop{z}= \big(\frac{n_2n_3r_2r_3}{P}\big)^{1/2}$};
		\node[align=center,below,scale=1.2] at (12.5, -0.6) {$\starontop{x}=\starontop{y}=\starontop{z}=$\\ $\qquad\quad \big(\frac{nr}{P}\big)^{1/3}$};	
		\end{tikzpicture}
	\end{center} 
\end{corollary}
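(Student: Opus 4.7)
The plan is to derive this corollary directly from \cref{lem:mttkrpOpt} by specializing to $d=3$ with $k_1=n_1r_1$, $k_2=n_2r_2$, $k_3=n_3r_3$, so the proof reduces to verifying the hypotheses, computing the regime thresholds $L_I$, and instantiating the optimal solution formula in each of the three resulting cases.

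First, I would check that the hypotheses of \cref{lem:mttkrpOpt} are satisfied. The ordering $k_1\le k_2\le k_3$ is exactly the standing assumption $n_1r_1\le n_2r_2\le n_3r_3$, and the product constraint holds since
\[
\prod_{j\in[3]} k_j = (n_1r_1)(n_2r_2)(n_3r_3) = (n_1n_2n_3)(r_1r_2r_3) = nr,
\]
matching the $nr/P$ appearing in the volume constraint $\prod_j x_j \ge nr/P$. The per-variable upper bounds $x\le n_1r_1$, $y\le n_2r_2$, $z\le n_3r_3$ correspond to $x_j\le k_j$, so all assumptions are in place.

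Next I would compute the regime thresholds. Using the definition $L_j = K_j/(k_{d-j+1})^j$ with $d=3$, I get $L_1 = k_3/k_3 = 1$, $L_2 = k_2k_3/k_2^2 = k_3/k_2 = n_3r_3/(n_2r_2)$, and $L_3 = k_1k_2k_3/k_1^3 = k_2k_3/k_1^2 = n_2n_3r_2r_3/(n_1^2r_1^2)$, with $L_4=\infty$. The three cases in the corollary partition $P\ge 1$ according to which interval $[L_I,L_{I+1})$ contains $P$, giving $I=1$, $I=2$, $I=3$, respectively.

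Finally, I would read off the optimizer in each case from the formula $x_j^* = k_j$ for $j\le d-I$ and $x_\ell^* = (K_I/P)^{1/I}$ for $d-I<\ell\le d$. For $I=1$ this yields $x^*=n_1r_1$, $y^*=n_2r_2$, $z^*=n_3r_3/P$; for $I=2$ it gives $x^*=n_1r_1$ and $y^*=z^*=(n_2n_3r_2r_3/P)^{1/2}$; for $I=3$ it gives $x^*=y^*=z^*=(nr/P)^{1/3}$, matching the three cases. There is no real obstacle here: the entire argument is a substitution, and the only things to be careful about are (i) correctly translating the indices of $K_I$ and $k_{d-I+1}$ into the specific products $n_jr_j$, and (ii) confirming that $L_1=1$ so that the first case covers all $P$ down to $1$.
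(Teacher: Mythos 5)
Your proposal is correct and matches the paper's own argument exactly: the paper obtains \cref{lemma:matrixOptimalSolutions} precisely by setting $d=3$, $k_1=n_1r_1$, $k_2=n_2r_2$, $k_3=n_3r_3$ in \cref{lem:mttkrpOpt}, and your computations of $L_1=1$, $L_2=n_3r_3/(n_2r_2)$, $L_3=n_2n_3r_2r_3/(n_1^2r_1^2)$ and the three optimizers are the same substitutions the paper leaves implicit. No gaps.
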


\begin{corollary}
	\label{lemma:tensorOptimalSolutions}
	Consider the following optimization problem:
	$$\min_{u,v} u+v$$
	such that 
	\begin{align*}
	\frac{nr}{P} &\le uv \\
	0 &\le \;u\; \le r \\
	0 &\le \;v\; \le n,
	\end{align*} 
	where $n\geq r$, and $n,r,P \ge 1$.
	The optimal solution $(\starontop{u},\starontop{v})$ depends on the relative values of the constraints, yielding two cases:
	\begin{enumerate}
		\item if $P < \frac{n}{r}$, then $\starontop{u}= r$, $\starontop{v} = \frac{n}{P}$;
		\item if $ \frac{n}{r} \le P$, then $\starontop{u}=\starontop{v}= \big(\frac{nr}{P}\big)^{\frac{1}{2}}$;
	\end{enumerate}
	which can be visualized as follows.
	\begin{center}
		\begin{tikzpicture}[scale=0.815, every node/.style={transform shape}]
		\draw [->, thick] (-0.1,0) -- (15,0) node [below right, scale=1.2] {$P$};
		\draw (0, 0.1) -- node [below, pastelred, scale=1.2]{$1$}(0,-0.1);
		\draw (7.5, 0.1) -- node [below, pastelred, scale=1.2]{$\frac{n}{r}$}(7.5,-0.1);
		
		\node[align=left,below,scale=1.2] at (3.75, -0.325) {$\starontop{u}= r$\\ $\starontop{v} = \frac{n}{P}$};
		\node[align=left,below,scale=1.2] at (11.25, -0.5) {$\starontop{u}=\starontop{v}= \big(\frac{nr}{P}\big)^{1/2}$};
		\end{tikzpicture}
	\end{center} 
\end{corollary}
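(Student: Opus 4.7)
The plan is to deduce \cref{lemma:tensorOptimalSolutions} as a direct specialization of \cref{lem:mttkrpOpt}, so the proof amounts to bookkeeping: verify the hypotheses under the substitution $d=2$, $k_1=r$, $k_2=n$, compute the threshold values $L_j$ that determine the active case, and then read off the corresponding optimal objective and optimizer from the general formula.

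First I would check that the hypotheses of \cref{lem:mttkrpOpt} are satisfied. The ordering $k_1 \le k_2$ follows from $n \ge r$, the product identity $k_1 k_2 = nr$ holds by construction, and the box constraints $0 \le u \le r$ and $0 \le v \le n$ together with $nr/P \le uv$ match the constraints of the theorem under the identification $(u,v) \leftrightarrow (x_1,x_2)$. Next I would compute the threshold values using $L_j = K_j/(k_{d-j+1})^j$ with $K_j = \prod_{i=d-j+1}^d k_i$: this gives $L_1 = n/n = 1$, $L_2 = nr/r^2 = n/r$, and $L_3 = \infty$. Hence Case~1 ($I=1$) is active precisely when $1 \le P < n/r$, and Case~2 ($I=2$) is active when $n/r \le P$.

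Finally I would substitute into the closed-form expressions supplied by \cref{lem:mttkrpOpt}. For $I=1$ the minimum equals $(K_1/P)^1 + k_1 = n/P + r$, attained at $u^* = k_1 = r$ and $v^* = K_1/P = n/P$, which matches the first case of the corollary. For $I=2$ the minimum equals $2(K_2/P)^{1/2} = 2\sqrt{nr/P}$, attained at $u^* = v^* = \sqrt{nr/P}$, which matches the second case.

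The step that requires the most care, rather than being a genuine obstacle, is confirming the threshold boundary and verifying that Case~1 is indeed reachable: one has to observe that $L_1 = 1$ so the case $1 \le P < n/r$ is nonempty precisely when $n > r$, and when $n = r$ only Case~2 applies (and yields $u^* = v^* = \sqrt{nr/P}$, consistent with the Case~1 formula at $P = n/r = 1$). With this sanity check in place, the corollary follows immediately from \cref{lem:mttkrpOpt}, so no separate derivation is needed.
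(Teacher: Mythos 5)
Your proposal is correct and matches the paper's own derivation, which likewise obtains \cref{lemma:tensorOptimalSolutions} directly from \cref{lem:mttkrpOpt} by setting $d=2$, $k_1=r$, $k_2=n$ and computing the thresholds $L_1=1$, $L_2=n/r$. Your explicit verification of the hypotheses, the case ranges, and the degenerate situation $n=r$ is exactly the bookkeeping the paper leaves implicit.
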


\subsection{Communication Lower Bounds for Multi-TTM}
\label{sec:lb:MutliTTM}
We now state the lower bounds for 3-dimensional Multi-TTM. After this, we also present a corollary for cubical tensors.
\begin{theorem}
	\label{theorem:lb:3DMultiTTM}
	Any computationally load balanced atomic Multi-TTM algorithm that starts and ends with one copy of the data distributed across processors involving 3D tensors with dimensions $n_1, n_2, n_3$ and $r_1, r_2, r_3$ performs at least $A+B-\left(\frac{n}{P}+\frac{r}{P} +\sum_{j=1}^3 \frac{n_jr_j}{P}\right)$ sends or receives where
	\begin{align*}
	A &= \begin{cases} n_1r_1 + n_2r_2 + \frac{n_3r_3}{P} & \text{ if } P<\frac{n_3r_3}{n_2r_2} \\ n_1r_1 + 2\left(\frac{n_2n_3r_2r_3}{P}\right)^{\frac{1}{2}} &\text{ if } \frac{n_3r_3}{n_2r_2}\leq P < \frac{n_2n_3r_2r_3}{n_1^2r_1^2} \\ 3\left(\frac{nr}{P}\right)^{\frac{1}{3}} &\text{ if } \frac{n_2n_3r_2r_3}{n_1^2r_1^2} \leq P\end{cases} \\
	B &= \begin{cases} r + \frac{n}{P} & \text{ if } P < \frac{n}{r} \\ 2\left(\frac{nr}{P}\right)^{\frac{1}{2}} &\text{ if } \frac{n}{r} \leq P\text.\end{cases}
	\end{align*}
\end{theorem}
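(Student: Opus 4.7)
\medskip

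\noindent\textbf{Proof proposal for \cref{theorem:lb:3DMultiTTM}.}
The plan is to single out one processor, count the elements of each of the five arrays ($\X$, $\Y$, and $\Mn{A}{1},\Mn{A}{2},\Mn{A}{3}$) it must touch in order to execute its share of the atomic \mult{}s, and then reduce the minimization of this element count to the two independent constrained problems solved in \cref{lemma:matrixOptimalSolutions,lemma:tensorOptimalSolutions}. The bound on one processor's data traffic is then a lower bound on the communication along the critical path.

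First, by the load-balancing and single-copy assumptions, some processor $p$ executes at least $nr/P$ of the $nr$ atomic \mult{}s and starts and ends with at most $n_kr_k/P$ entries of each $\Mn{A}{k}$, $n/P$ entries of $\X$, and $r/P$ entries of $\Y$; every other element it needs (as input) or produces (as a partial output) must be communicated. Let $F\subseteq [n_1]\times[n_2]\times[n_3]\times[r_1]\times[r_2]\times[r_3]$ be the set of index tuples of the atomic operations performed by $p$, and let $M_k=|\phi_{A_k}(F)|$, $X=|\phi_\X(F)|$, $Y=|\phi_\Y(F)|$, where $\phi_{A_k}$, $\phi_\X$, $\phi_\Y$ are the coordinate projections onto $(n_k',r_k')$, $(n_1',n_2',n_3')$, and $(r_1',r_2',r_3')$, respectively. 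Then $p$ must send or receive at least
\[
\bigl(X-\tfrac{n}{P}\bigr)+\bigl(Y-\tfrac{r}{P}\bigr)+\sum_{k=1}^{3}\bigl(M_k-\tfrac{n_kr_k}{P}\bigr)
\]
words of data, so a lower bound on the communication cost follows from a lower bound on $X+Y+\sum_k M_k$.

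Next, I apply \cref{lem:hbl} twice with different choices of the exponent vector $\V{s}=[s_\X,s_\Y,s_{A_1},s_{A_2},s_{A_3}]^\Tra$. Each of the six loop indices appears in exactly two of the five projections: $n_k'$ in $\phi_\X$ and $\phi_{A_k}$, and $r_k'$ in $\phi_\Y$ and $\phi_{A_k}$. Thus $\V{s}=(1,1,0,0,0)$ and $\V{s}=(0,0,1,1,1)$ both satisfy $\M{\Delta}\V{s}\ge\V{1}$, yielding the two feasibility constraints
\[
XY \;\ge\; |F| \;\ge\; \tfrac{nr}{P}, \qquad M_1M_2M_3 \;\ge\; |F| \;\ge\; \tfrac{nr}{P}.
\]
Because the matrix and tensor variables appear in disjoint constraints, the minimization of $X+Y+\sum_k M_k$ decouples. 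Minimizing $M_1+M_2+M_3$ subject to $M_1M_2M_3\ge nr/P$ and $M_k\le n_kr_k$ is an instance of \cref{lem:mttkrpOpt} with $d=3$ and $(k_1,k_2,k_3)=(n_1r_1,n_2r_2,n_3r_3)$, which is exactly \cref{lemma:matrixOptimalSolutions} and gives the three-case expression $A$. Minimizing $X+Y$ subject to $XY\ge nr/P$, $X\le n$, $Y\le r$ is the two-variable instance in \cref{lemma:tensorOptimalSolutions} and gives $B$. Subtracting the initial holdings produces the stated bound.

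The main obstacles I expect are the setup rather than the optimization. One has to argue carefully that the projection sizes $M_k,X,Y$ really do lower-bound the count of distinct array entries touched by $p$ (accounting for write accesses to $\Y$ as well as reads), that the upper bounds $M_k\le n_kr_k$ etc. may be imposed without loss, and that the two HBL applications with different $\V{s}$ can be used simultaneously to justify the decoupled minimization. Finally, since each term in the bound is nonnegative (the ``$-1/P$'' corrections cannot make the expression negative in the relevant regimes) and since the critical path communication is at least the traffic at any single processor, the per-processor bound transfers directly to the theorem's claim.
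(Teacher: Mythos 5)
Your proposal follows the paper's proof essentially step for step: the same processor-centric setup with $|F|=nr/P$, the same projections, the same application of \cref{lem:hbl} (your two exponent vectors are exactly the $a=1$ and $a=0$ endpoints of the paper's family $[a\;a\;a\;1{-}a\;1{-}a]$), the same decoupling into the two disjoint optimization problems solved by \cref{lemma:matrixOptimalSolutions,lemma:tensorOptimalSolutions}, and the same subtraction of owned data. The only nit is your claim that some processor owns at most $1/P$ of \emph{each} array at start and end, which need not hold; the paper only invokes a processor owning at most $1/P$ of the \emph{total} data, which suffices because your displayed per-array differences are used only through their sum.
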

\begin{proof}
	Let $F$ be the set of loop indices associated with the $4$-ary multiplication performed by a processor.
	As we assumed the algorithm is computationally load balanced, $|F| = nr/P$.
	We define $\phi_{\X}(F)$, $\phi_{\Y}(F)$ and $\phi_j(F)$ to be the projections of $F$ onto the indices of the arrays $\X, \Y$, and $\Mn{A}{j}$ for $1\leq j\leq 3$ which correspond to the elements of the array that must be accessed \response{or partially computed} by the processor.
	
	We use \Cref{lem:hbl} to obtain a lower bound on the number of array elements that must be accessed \response{or partially computed} by the processor. The computation involves $5$ arrays (2 tensors and 3 matrices) with $6$ loop indices (see the atomic Multi-TTM definition in \cref{sec:notations}), hence the $6\times 5$ matrix corresponding to the projections above is given by
	$$\M{\Delta} = \begin{bmatrix}\M{I}_{3\times 3} & \V{1}_3 & \V{0}_3\\ \M{I}_{3\times 3} & \V{0}_3 & \V{1}_3 \end{bmatrix}\text.$$
	Here $\V{1}_3$ and $\V{0}_3$ represent the 3-dimensional vectors of all ones and zeros, respectively, and $\M{I}_{3\times 3}$ represents the $3\times 3$ identity matrix. We recall from \Cref{lem:hbl} that $\M{\Delta}_{i,j}=1$ if loop index $i$ is used to access array $j$ and $\M{\Delta}_{i,j}=0$ otherwise. The first three columns of $\M{\Delta}$ correspond to matrices and the remaining two columns correspond to tensors. 
	In this case, we have
	$$\mathcal{C} =  \big\{\V{s}=\response{[s_1\ \cdots \ s_5]^\Tra: 0\leq s_i \leq 1 \text{ for } i=1,2,\cdots,5 \text{ and }} \M{\Delta}\cdot\V{s}\ge\V{1}\big\}\text.$$

	
	Recall that $\V{1}$ represents a vector of all ones. Here $\M{\Delta}$ is not full rank, therefore, we consider all vectors $\V{v} \in \mathcal{C}$ such that $\M{\Delta}\cdot\V{v}=\V{1}$. Such a vector $\V{v}$ is of the form $[a\; a\; a\; 1\text{-}a\; 1\text{-}a]$
	where $0\le a\le 1$. Therefore, we obtain 
	$$\frac{nr}{P} \leq \Big(\prod_{j\in[3]}|\phi_j(F)|\Big)^a \big(|\phi_{\X}(F)||\phi_{\Y}(F)|\big)^{1\text{-}a}\text{ for all } 0\le a\le 1.$$
	
	\noindent The above constraint is equivalent to $\frac{nr}{P} \leq \prod_{j\in[3]}|\phi_j(F)|$ and $\frac{nr}{P} \leq |\phi_{\X}(F)||\phi_{\Y}(F)|$. To see this equivalence note that the forward direction is implied by setting $a=0$ and $a=1$. For the opposite direction, taking the first of the two constraints to the power $a$ and the second to the power $1-a$ then multiplying the two terms yields the original.
	
	Clearly a projection onto an array cannot be larger than the array itself, thus $|\phi_{\X}(F)| \leq n$, $|\phi_{\Y}(F)|\leq r$, and $|\phi_j(F)|\leq n_jr_j$ for $1\leq j \leq 3$.
	
	\response{As the constraints related to projections of matrices and tensors are disjoint, we solve them separately and then sum the results to get a lower bound on the set of elements that must be accessed or partially computed by the processor. We obtain a lower bound on $A$, the number of relevant elements of the matrices by using \Cref{lemma:matrixOptimalSolutions}, and a lower bound on $B$, the number of relevant elements of the tensors by using \Cref{lemma:tensorOptimalSolutions}.} By summing both, we get the positive terms of the lower bound.
	
	
	To bound the sends or receives, we consider how much data the processor could have had at the beginning or at the end of the computation.
	Assuming there is exactly one copy of the data at the beginning and at the end of the computation, there must exist a processor which \response{owns} at most $1/P$ of the elements of the arrays at the beginning or at the end of the computation.
	By employing the previous analysis, this processor must \response{access or partially compute} $A+B$ elements of the arrays, \response{but can only own $\frac{n}{P}+\frac{r}{P} +\sum_{j\in[3]} \frac{n_jr_j}{P}$ elements of the arrays}.
	Thus it must perform the specified amount of sends or receives.
\end{proof}

We denote the lower bound of \cref{theorem:lb:3DMultiTTM} by $\lowerbound$ and use it extensively in \cref{sec:parallelAlgoritm:selectionpiqi} while analyzing the communication cost of our parallel algorithm.


We also state the result for $3$-dimensional Multi-TTM computation with cubical tensors, which is a direct application of \cref{theorem:lb:3DMultiTTM} with $n_1=n_2=n_3=n^{\frac{1}{3}}$ and $r_1=r_2=r_3=r^{\frac{1}{3}}$.

\begin{corollary}
	\label{corollary:lb:3DMultiTTM:cubicaltensors}
	Any computationally load balanced atomic Multi-TTM algorithm that starts and ends with one copy of the data distributed across processors involving 3D cubical tensors with dimensions $n^{\frac{1}{3}}\times n^{\frac{1}{3}}\times n^{\frac{1}{3}}$ and $r^{\frac{1}{3}}\times r^{\frac{1}{3}}\times r^{\frac{1}{3}}$ (with $n\geq r$) performs at least
	$$3\left(\frac{nr}{P}\right)^{\frac{1}{3}} + r - \frac{3(nr)^{\frac{1}{3}}+r}{P}$$
	sends or receives when $P<\frac{n}{r}$ and at least
	$$3\left(\frac{nr}{P}\right)^{\frac{1}{3}}+2\left(\frac{nr}{P}\right)^{\frac{1}{2}} - \frac{n+3(nr)^{\frac{1}{3}}+r}{P}$$
	send or receives when $P \geq \frac{n}{r}$.
\end{corollary}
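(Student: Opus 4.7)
The plan is to apply \cref{theorem:lb:3DMultiTTM} directly under the substitutions $n_1 = n_2 = n_3 = n^{\frac{1}{3}}$ and $r_1 = r_2 = r_3 = r^{\frac{1}{3}}$, and then perform the routine arithmetic simplifications.

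First, I would determine which of the three cases for $A$ applies. Under the cubical substitution, $n_1 r_1 = n_2 r_2 = n_3 r_3 = (nr)^{\frac{1}{3}}$, so the thresholds $\frac{n_3 r_3}{n_2 r_2}$ and $\frac{n_2 n_3 r_2 r_3}{n_1^2 r_1^2}$ in the case split of \cref{theorem:lb:3DMultiTTM} both evaluate to $1$. Since $P \geq 1$, only the third case is active, yielding $A = 3 (nr/P)^{\frac{1}{3}}$ regardless of the value of $P$ relative to $n/r$.

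Next, I would observe that the case split for $B$ is governed entirely by the comparison of $P$ with $n/r$, which is unaffected by the cubical substitution; hence $B = r + n/P$ when $P < n/r$, and $B = 2 (nr/P)^{\frac{1}{2}}$ when $P \geq n/r$. For the subtracted term, the sum $\sum_{j=1}^{3} \frac{n_j r_j}{P}$ collapses to $\frac{3 (nr)^{\frac{1}{3}}}{P}$, so the total subtracted quantity becomes $\frac{n + r + 3 (nr)^{\frac{1}{3}}}{P}$.

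Finally, I would combine these pieces in each regime. When $P < n/r$, the $n/P$ contribution from $B$ cancels against the $n/P$ inside the subtracted term, leaving $3 (nr/P)^{\frac{1}{3}} + r - \frac{3 (nr)^{\frac{1}{3}} + r}{P}$. When $P \geq n/r$, no such cancellation occurs and one obtains $3 (nr/P)^{\frac{1}{3}} + 2 (nr/P)^{\frac{1}{2}} - \frac{n + 3 (nr)^{\frac{1}{3}} + r}{P}$. There is no substantive obstacle here beyond careful bookkeeping of the cancellations; the only thing that is worth flagging explicitly is that the cubical assumption forces all three matrix-side thresholds to coincide, which is why the first two cases for $A$ disappear.
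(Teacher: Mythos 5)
Your proposal is correct and follows exactly the route the paper takes: the paper presents this corollary as a direct application of \cref{theorem:lb:3DMultiTTM} with $n_1=n_2=n_3=n^{\frac{1}{3}}$ and $r_1=r_2=r_3=r^{\frac{1}{3}}$, and your case analysis (both matrix thresholds collapsing to $1$ so only the third case of $A$ survives, the $B$ split on $P$ versus $n/r$ unchanged, and the cancellation of $n/P$ in the first regime) is precisely the bookkeeping that justifies it.
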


In particular, we note that the lower bound for cubical atomic Multi-TTM algorithms is smaller than that of a TTM-in-Sequence approach for many typical scenarios in the case $P<n/r$, as we discuss further in \cref{sec:experiments}.

\section{Parallel Algorithm for $3$-dimensional Multi-TTM}
\label{sec:3dUpperBounds}
We organize $P$ processors into a $6$-dimensional $p_1 \times p_2 \times p_3 \times q_1 \times q_2 \times q_3$ logical processor grid. We arrange the
grid dimensions such that $p_1$, $p_2$, $p_3$, $q_1$, $q_2$, $q_3$ evenly distribute $n_1$, $n_2$, $n_3$, $r_1$, $r_2$, $r_3$, respectively. A processor coordinate is represented as $(p_1^\prime, p_2^\prime, p_3^\prime, q_1^\prime, q_2^\prime, q_3^\prime)$, where $1\le p_{k}^\prime \le p_k$, $1\le q_{k}^\prime \le q_k$ for $k=1,2,3$. To be consistent with our notation, we denote $p_1p_2p_3$ and $q_1q_2q_3$ by $p$ and $q$.

$\T{X}_{p_1^\prime p_2^\prime p_3^\prime}$ denotes the subtensor of $\T{X}$ owned by processors $(p_1^\prime, p_2^\prime,$ $p_3^\prime, *, *, *)$. Similarly, $\T{Y}_{q_1^\prime q_2^\prime q_3^\prime}$ denotes the subtensor of $\T{Y}$ owned by processors $(*, *, *, q_1^\prime, q_2^\prime, q_3^\prime)$. $\Mn{A}{1}_{p_1^\prime q_1^\prime}$, $\Mn{A}{2}_{p_2^\prime q_2^\prime}$ and $\Mn{A}{3}_{p_3^\prime q_3^\prime}$ denote submatrices of $\Mn{A}{1}$, $\Mn{A}{2}$ and $\Mn{A}{3}$ owned by processors $(p_1^\prime, *, *, q_1^\prime, *, *)$, $(*, p_2^\prime, *, *, q_2^\prime, *)$ and $(*, *, p_3^\prime, *, *, q_3^\prime)$, respectively.

We impose that there is one copy of data in the system at the start and end of the computation, and every array is distributed evenly among the sets of processors whose coordinates are different for the corresponding dimensions of the variable.
For example, $\T{X}_{111}$ = $\T{X}(1:\frac{n_1}{p_1}, 1:\frac{n_2}{p_2}, 1:\frac{n_3}{p_3})$ is owned by processors $(1,1,1,*,*,*)$.
Similarly, $\Mn{A}{1}_{12}$ = $\Mn{A}{1}(1:\frac{n_1}{p_1}, \frac{r_1}{q_1}+1:2\frac{r_1}{q_1})$ is owned by processors $(1,*,*,2,*,*)$.
We assume that data inside these sets of processors is also evenly distributed. 
For example, in the beginning, processor ($1,1,1,2,1,3$) owns $\frac{1}{P}$th portion of each input variable:
$\frac{p}{P}$th portion of $\T{X}_{111}$, $\frac{p_1q_1}{P}$th portion of $\Mn{A}{1}_{12}$, $\frac{p_2q_2}{P}$th portion of $\Mn{A}{2}_{11}$, and $\frac{p_3q_3}{P}$th portion of $\Mn{A}{3}_{13}$. 
\Cref{fig:dataDistribution} illustrates examples of our data distribution model for two of the arrays.

\begin{figure}[!tb]
	\centering
	\begin{tikzpicture}[scale=0.4, every node/.style={transform shape}]
	\def\xref{0.6}
	\def\yref{0.5}
	
	\draw [dotted,fill=gray!20] (-1,2) -- (0,2) -- (0,3) --(-1,3) --(-1,2);
	\draw [dotted,fill=gray!20] (-1, 3) -- (-1+\xref,3+\yref) -- (0+\xref,3+\yref) -- (0,3) -- (-1, 3) ;
	\draw [dotted,fill=gray!20] (0, 2) -- (0+\xref,2+\yref) -- (0+\xref,3+\yref) -- (0,3) -- (0,2);

	\foreach \y in {0, 1, 2, 3, 4}
	\draw (-2, \y) -- (2, \y);
	
	\foreach \x in {-2, -1, 0, 1, 2}
	\draw (\x, 0) -- (\x, 4);
	
	\foreach \y in {0, 1, 2, 3, 4}
	\draw (2, \y)--(2+4*\xref, 4*\yref+\y);

	\foreach \y in {0, 1, 2, 3, 4}
	\draw (2+\y * \xref, \y * \yref) -- (2+\y * \xref, 4+\y * \yref);
	
	\foreach \x in {-2, -1, 0, 1, 2}
	\draw (\x, 4) -- (\x + 4*\xref, 4+4*\yref);
	
	\foreach \y in {0, 1, 2, 3, 4}
	\draw (-2+\y * \xref, 4+\y * \yref) -- (2+\y * \xref, 4+\y * \yref);
	
	\node [below, lightgreen, scale=2.5] at (0,0) {$\T{X}$};
	

	\node [ lightgreen, scale=1.2] at (-0.5,2.5) {$\T{X}_{231}$};
	
	\draw [->, red] (-1,-1.5) -- (1,-1.5) node [below right, scale=2] {$n_1$};
	\draw [->, red] (-2.5,1) -- (-2.5,3) node [ above left, scale=2] {$n_2$};
	\draw [->, red] (-2.5+\xref, 4+\yref) -- (-2.5 + 3*\xref, 4+3*\yref) node [above,rotate=45, scale=2] {$n_3$};
	\end{tikzpicture}$\qquad$
	\begin{tikzpicture}[scale=0.4, every node/.style={transform shape}]
	\def\xref{0.6}
	\def\yref{0.5}
	\draw [fill=gray!20] (-2,2) -- (-1,2) -- (-1,3) --(-2,3) --(-2,2);
	
	\foreach \y in {0, 1, 2, 3, 4}
	\draw (-2, \y) -- (2, \y);
	
	\foreach \x in {-2, -1, 0, 1, 2}
	\draw (\x, 0) -- (\x, 4);
	
	\node [below, lightgreen, scale=2.5] at (0,0) {$\Mn{A}{2}$};

	\node [ lightgreen, scale=1.2] at (-1.5,2.5) {$\Mn{A}{2}_{31}$};
	
	\draw [->, red] (-1,-1.5) -- (1,-1.5) node [below right, scale=2] {$r_2$};
	\draw [->, red] (-2.5,1) -- (-2.5,3) node [ above left, scale=2] {$n_2$};
	\end{tikzpicture}
	\vspace*{-0.15cm}\caption{Subtensor $\T{X}_{231}$ is distributed evenly among processors $(2,3,1,*,*,*)$. Similarly, submatrix $\Mn{A}{2}_{31}$ is distributed evenly among processors $(*,3,*,*,1,*)$.\label{fig:dataDistribution}\vspace*{-0.25cm}}
\end{figure}
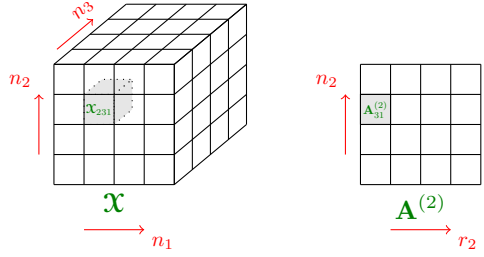


\begin{algorithm}[htb]
	\caption{\label{alg:3dmultittm}Parallel Atomic 3-dimensional Multi-TTM}
	\begin{algorithmic}[1]
		\REQUIRE $\T{X}$, $\Mn{A}{1}$, $\Mn{A}{2}$, $\Mn{A}{3}$, $p_1 \times p_2 \times p_3 \times q_1 \times q_2 \times q_3$ logical processor grid
		\ENSURE $\T{Y}$ such that $\Y = \X \times_1 {\Mn{A}{1}}^\Tra \times_2 {\Mn{A}{2}}^\Tra \times_3 {\Mn{A}{3}}^\Tra$
		\STATE $(p_1^\prime, p_2^\prime, p_3^\prime, q_1^\prime, q_2^\prime, q_3^\prime)$ is my processor id
		\STATE //All-gather input tensor $\T{X}$
		\STATE $\T{X}_{p_1^\prime p_2^\prime p_3^\prime}$ = All-Gather($\T{X}$, $(p_1^\prime, p_2^\prime, p_3^\prime, *, *, *)$)\label{alg:3dmultittm:line:allGatherInputTensor}
		\STATE //All-gather input matrices
		\STATE $\Mn{A}{1}_{p_1^\prime q_1^\prime}$ = All-Gather($\Mn{A}{1}$, $(p_1^\prime, *, *, q_1^\prime, *, *)$)\label{alg:3dmultittm:line:allGatherMatrix1}
		\STATE $\Mn{A}{2}_{p_2^\prime q_2^\prime}$ = All-Gather($\Mn{A}{2}$, $(*, p_2^\prime, *, *, q_2^\prime, *)$)\label{alg:3dmultittm:line:allGatherMatrix2}
		\STATE $\Mn{A}{3}_{p_3^\prime q_3^\prime}$ = All-Gather($\Mn{A}{3}$, $(*, *, p_3^\prime, *, *, q_3^\prime)$)\label{alg:3dmultittm:line:allGatherMatrix3}
		\STATE //Local computations in a temporary tensor $\T{T}$
		\STATE $\T{T}$ = Local-Multi-TTM($\T{X}_{p_1^\prime p_2^\prime p_3^\prime}$, $\Mn{A}{1}_{p_1^\prime q_1^\prime}$, $\Mn{A}{2}_{p_2^\prime q_2^\prime}$, $\Mn{A}{3}_{p_3^\prime q_3^\prime}$)\label{alg:3dmultittm:line:localcomputation}
		\STATE //Reduce-scatter the output tensor in $\T{Y}_{q_1^\prime q_2^\prime q_3^\prime}$
		\STATE Reduce-Scatter($\T{Y}_{q_1^\prime q_2^\prime q_3^\prime}$, $\T{T}$, $(*, *, *, q_1^\prime, q_2^\prime, q_3^\prime)$)\label{alg:3dmultittm:line:reduceScatterOutputTensor}
	\end{algorithmic}
\end{algorithm}
\Cref{alg:3dmultittm} presents a parallel algorithm to compute $3$-dimensional Multi-TTM. When it completes, $\T{Y}_{q_1^\prime q_2^\prime q_3^\prime}$ is distributed evenly among processors $(*,$ $*,$ $*,$ $q_1^\prime,$ $q_2^\prime,$ $q_3^\prime)$. \Cref{fig:stepsOfParallelAlgorithm} shows the steps of the algorithm for a single processor in a $3\times3\times3\times3\times3\times3$ grid.

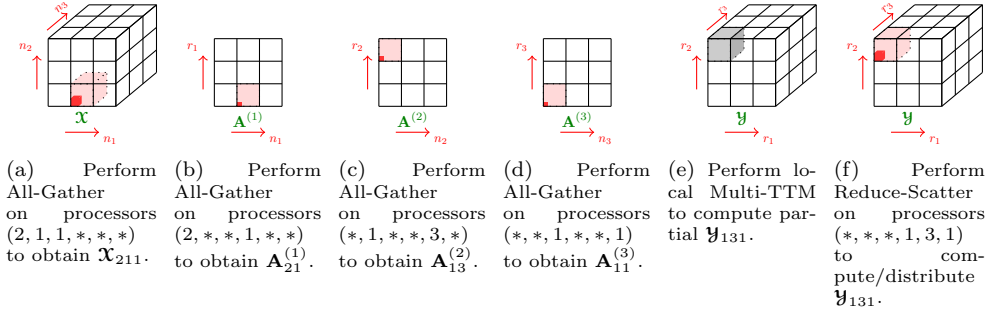
\begin{figure*}[t]
	\begin{center}
		\subfloat[{\scriptsize Perform All-Gather on processors $(2,1,1,*,*,*)$ to obtain $\T{X}_{211}$.}]{
			\begin{tikzpicture}[scale=0.3, every node/.style={transform shape}]
			\def\xref{0.6}
			\def\yref{0.5}
			
			\draw [dotted,fill=red!15] (1,0) -- (2,0) -- (2,1) --(1,1) --(1,0);
			\draw [dotted,fill=red!15] (1,1) --(1+\xref, 1+\yref) -- (2+\xref,1+\yref) -- (2,1) -- (1,1);
			\draw [dotted,fill=red!15] (2,0) -- (2+\xref, 0+\yref) -- (2+\xref, 1+\yref) -- (2,1) -- (2,0);
			
			\def\a{1}
			\def\b{0}
			\def\xstride{0.35}
			\def\ystride{0.35}
			\draw [draw=none,fill=red!80] (\a,\b) -- (\a+\xstride,\b) -- (\a+\xstride,\b+\ystride) --(\a,\b+\ystride) --(\a,\b);
			
			\def\xsmallref{0.135}
			\def\ysmallref{0.125}
			\draw [draw=none,fill=red!80] (\a,\b+\ystride) --(\a+\xsmallref, \b+\ystride+\ysmallref) -- (\a+\xstride+\xsmallref, \b+\ystride+\ysmallref) -- (\a+\xstride, \b+\ystride) -- (\a, \b+\ystride);
			
			\draw [draw=none,fill=red!80] (\a+\xstride,\b) --(\a+\xstride+\xsmallref, \b+\ysmallref) -- (\a+\xstride+\xsmallref, \b+\ystride+\ysmallref) -- (\a+\xstride, \b+\ystride) -- (\a+\xstride, \b);
			
			\foreach \y in {0, 1, 2, 3}
			\draw (0, \y) -- (3, \y);
			
			\foreach \x in {0, 1, 2, 3}
			\draw (\x, 0) -- (\x, 3);
			
			\foreach \y in {0, 1, 2, 3}
			\draw (3, \y)--(3+3*\xref, 3*\yref+\y);

			\foreach \y in {0, 1, 2, 3}
			\draw (3+\y * \xref, \y * \yref) -- (3+\y * \xref, 3+\y * \yref);
			
			\foreach \x in {0, 1, 2, 3}
			\draw (\x, 3) -- (\x + 3*\xref, 3+3*\yref);
			
			\foreach \y in {0, 1, 2, 3}
			\draw (0+\y * \xref, 3+\y * \yref) -- (3+\y * \xref, 3+\y * \yref);
			
			\node [below, lightgreen, scale=2] at (1.5,0) {$\T{X}$};

			\draw [->, red] (0.75,-1.15) -- (2.25,-1.15) node [below right, scale=1.6] {$n_1$};
			\draw [->, red] (-0.5,0.75) -- (-0.5,2.25) node [ above left, scale=1.6] {$n_2$};
			\draw [->, red] (-0.5+0.75*\xref, 3+0.75*\yref) -- (-0.5 + 2.25*\xref, 3+2.25*\yref) node [above,rotate=45, scale=1.6] {$n_3$};
			\path (0,0) -- (4.75,0);
			\end{tikzpicture}}$\ $
		\subfloat[{\scriptsize Perform All-Gather on processors $(2,*,*,1,*,*)$ to obtain $\Mn{A}{1}_{21}$.}]{
			\begin{tikzpicture}[scale=0.3, every node/.style={transform shape}]
			
			\draw [dotted,fill=red!15] (1,0) -- (2,0) -- (2,1) --(1,1) --(1,0);
			
			\def\a{1}
			\def\b{0}
			\def\xstride{0.2}
			\def\ystride{0.2}
			\draw [draw=none,fill=red!80] (\a,\b) -- (\a+\xstride,\b) -- (\a+\xstride,\b+\ystride) --(\a,\b+\ystride) --(\a,\b);
			
			\foreach \y in {0, 1, 2, 3}
			\draw (0, \y) -- (3, \y);
			
			\foreach \x in {0, 1, 2, 3}
			\draw (\x, 0) -- (\x, 3);
			
			\node [below, lightgreen, scale=2] at (1.5,0) {$\Mn{A}{1}$};
			
			\draw [->, red] (0.75,-1.15) -- (2.25,-1.15) node [below right, scale=1.6] {$n_1$};
			\draw [->, red] (-0.5,0.75) -- (-0.5,2.25) node [ above left, scale=1.6] {$r_1$};
			\path (0,0) -- (4.75,0);
			\end{tikzpicture}}$\ $
		\subfloat[{\scriptsize Perform All-Gather on processors $(*,1,*,*,3,*)$ to obtain $\Mn{A}{2}_{13}$.}]{
			\begin{tikzpicture}[scale=0.3, every node/.style={transform shape}]
			
			\draw [dotted,fill=red!15] (0,2) -- (1,2) -- (1,3) --(0,3) --(0,2);
			
			\def\a{0}
			\def\b{2}
			\def\xstride{0.2}
			\def\ystride{0.2}
			\draw [draw=none,fill=red!80] (\a,\b) -- (\a+\xstride,\b) -- (\a+\xstride,\b+\ystride) --(\a,\b+\ystride) --(\a,\b);
			
			\foreach \y in {0, 1, 2, 3}
			\draw (0, \y) -- (3, \y);
			
			\foreach \x in {0, 1, 2, 3}
			\draw (\x, 0) -- (\x, 3);
			
			\node [below, lightgreen, scale=2] at (1.5,0) {$\Mn{A}{2}$};
			
			\draw [->, red] (0.75,-1.15) -- (2.25,-1.15) node [below right, scale=1.6] {$n_2$};
			\draw [->, red] (-0.5,0.75) -- (-0.5,2.25) node [ above left, scale=1.6] {$r_2$};
			\path (0,0) -- (4.75,0);
			\end{tikzpicture}}$\ $
		\subfloat[{\scriptsize Perform All-Gather on processors $(*,*,1,*,*,1)$ to obtain $\Mn{A}{3}_{11}$.}]{
			\begin{tikzpicture}[scale=0.3, every node/.style={transform shape}]
			
			\draw [dotted,fill=red!15] (0,0) -- (1,0) -- (1,1) --(0,1) --(0,0);
			
			\def\a{0}
			\def\b{0}
			\def\xstride{0.2}
			\def\ystride{0.2}
			\draw [draw=none,fill=red!80] (\a,\b) -- (\a+\xstride,\b) -- (\a+\xstride,\b+\ystride) --(\a,\b+\ystride) --(\a,\b);
			
			\foreach \y in {0, 1, 2, 3}
			\draw (0, \y) -- (3, \y);
			
			\foreach \x in {0, 1, 2, 3}
			\draw (\x, 0) -- (\x, 3);
			
			\node [below, lightgreen, scale=2] at (1.5,0) {$\Mn{A}{3}$};
			
			\draw [->, red] (0.75,-1.15) -- (2.25,-1.15) node [below right, scale=1.6] {$n_3$};
			\draw [->, red] (-0.5,0.75) -- (-0.5,2.25) node [ above left, scale=1.6] {$r_3$};
			\path (0,0) -- (4.75,0);
			\end{tikzpicture}}$\ $
		\subfloat[{\scriptsize Perform local Multi-TTM to compute partial $\T{Y}_{131}$.\label{fig:step:partialResults}}]{
			\begin{tikzpicture}[scale=0.3, every node/.style={transform shape}]
			\def\xref{0.6}
			\def\yref{0.5}
			\draw [dotted,fill=gray!40] (0,2) -- (1,2) -- (1,3) --(0,3) --(0,2);
			\draw [dotted,fill=gray!40] (0,3) --(0+\xref, 3+\yref) -- (1+\xref,3+\yref) -- (1,3) -- (0,3);
			\draw [dotted,fill=gray!60] (1,2) -- (1+\xref, 2+\yref) -- (1+\xref, 3+\yref) -- (1,3) -- (1,2);
			
			\foreach \y in {0, 1, 2, 3}
			\draw (0, \y) -- (3, \y);
			
			\foreach \x in {0, 1, 2, 3}
			\draw (\x, 0) -- (\x, 3);
			
			\foreach \y in {0, 1, 2, 3}
			\draw (3, \y)--(3+3*\xref, 3*\yref+\y);
			
			\foreach \y in {0, 1, 2, 3}
			\draw (3+\y * \xref, \y * \yref) -- (3+\y * \xref, 3+\y * \yref);
			
			\foreach \x in {0, 1, 2, 3}
			\draw (\x, 3) -- (\x + 3*\xref, 3+3*\yref);
			
			\foreach \y in {0, 1, 2, 3}
			\draw (0+\y * \xref, 3+\y * \yref) -- (3+\y * \xref, 3+\y * \yref);
			
			\node [below, lightgreen, scale=2] at (1.5,0) {$\T{Y}$};
			
			\draw [->, red] (0.75,-1.15) -- (2.25,-1.15) node [below right, scale=1.6] {$r_1$};
			\draw [->, red] (-0.5,0.75) -- (-0.5,2.25) node [ above left, scale=1.6] {$r_2$};
			\draw [->, red] (-0.5+0.75*\xref, 3+0.75*\yref) -- (-0.5 + 2.25*\xref, 3+2.25*\yref) node [above,rotate=45, scale=1.2] {$r_3$};
			\path (0,0) -- (4.75,0);
			\end{tikzpicture}}$\ $
		\subfloat[{\scriptsize Perform Reduce-Scatter on processors $(*,*,*,1,3,1)$ to compute/distribute $\T{Y}_{131}$.}]{
			\begin{tikzpicture}[scale=0.3, every node/.style={transform shape}]
			\def\xref{0.6}
			\def\yref{0.5}
			\draw [dotted,fill=red!15] (0,2) -- (1,2) -- (1,3) --(0,3) --(0,2);
			\draw [dotted,fill=red!15] (0,3) --(0+\xref, 3+\yref) -- (1+\xref,3+\yref) -- (1,3) -- (0,3);
			\draw [dotted,fill=red!15] (1,2) -- (1+\xref, 2+\yref) -- (1+\xref, 3+\yref) -- (1,3) -- (1,2);

			\def\a{0}
			\def\b{2}
			\def\xstride{0.35}
			\def\ystride{0.35}
			\draw [draw=none,fill=red!80] (\a,\b) -- (\a+\xstride,\b) -- (\a+\xstride,\b+\ystride) --(\a,\b+\ystride) --(\a,\b);
			
			\def\xsmallref{0.135}
			\def\ysmallref{0.125}
			\draw [draw=none,fill=red!80] (\a,\b+\ystride) --(\a+\xsmallref, \b+\ystride+\ysmallref) -- (\a+\xstride+\xsmallref, \b+\ystride+\ysmallref) -- (\a+\xstride, \b+\ystride) -- (\a, \b+\ystride);
			
			\draw [draw=none,fill=red!80] (\a+\xstride,\b) --(\a+\xstride+\xsmallref, \b+\ysmallref) -- (\a+\xstride+\xsmallref, \b+\ystride+\ysmallref) -- (\a+\xstride, \b+\ystride) -- (\a+\xstride, \b);

			\foreach \y in {0, 1, 2, 3}
			\draw (0, \y) -- (3, \y);
			
			\foreach \x in {0, 1, 2, 3}
			\draw (\x, 0) -- (\x, 3);
			
			\foreach \y in {0, 1, 2, 3}
			\draw (3, \y)--(3+3*\xref, 3*\yref+\y);
			
			\foreach \y in {0, 1, 2, 3}
			\draw (3+\y * \xref, \y * \yref) -- (3+\y * \xref, 3+\y * \yref);
			
			\foreach \x in {0, 1, 2, 3}
			\draw (\x, 3) -- (\x + 3*\xref, 3+3*\yref);
			
			\foreach \y in {0, 1, 2, 3}
			\draw (0+\y * \xref, 3+\y * \yref) -- (3+\y * \xref, 3+\y * \yref);
			
			\node [below, lightgreen, scale=2] at (1.5,0) {$\T{Y}$};
			
			\draw [->, red] (0.75,-1.15) -- (2.25,-1.15) node [below right, scale=1.6] {$r_1$};
			\draw [->, red] (-0.5,0.75) -- (-0.5,2.25) node [ above left, scale=1.6] {$r_2$};
			\draw [->, red] (-0.5+0.75*\xref, 3+0.75*\yref) -- (-0.5 + 2.25*\xref, 3+2.25*\yref) node [above,rotate=45, scale=1.6] {$r_3$};
			\path (0,0) -- (4.925,0);
			\end{tikzpicture}}
		\caption{Steps of \cref{alg:3dmultittm} for processor $(2,1,1,1,3,1)$, where $p_1=p_2=p_3=q_1=q_2=q_3=3$. Highlighted areas correspond to the data blocks on which the processor is operating. The dark red highlighting represents the input/output data initially/finally owned by the processor, and the light red highlighting corresponds to received/sent data from/to other processors in All-Gather/Reduce-Scatter collectives to compute $\T{Y}_{131}$. 
			\label{fig:stepsOfParallelAlgorithm}
		}
	\end{center}	
\end{figure*}

\subsection{Cost Analysis}
\label{sec:cost-3d}

Now we analyze computation and communication costs of the algorithm.
The dimension of the local tensor $\T{X}_{p_1^\prime p_2^\prime p_3^\prime}$ is $\frac{n_1}{p_1} \times \frac{n_2}{p_2} \times \frac{n_3}{p_3}$, the dimension of the local matrix $\Mn{A}{k}_{p_k^\prime q_k^\prime}$ is $\frac{n_i}{p_i}\times \frac{r_i}{q_i}$ for $i=1,2,3$, and the dimension of the temporary tensor $\T{T}$ is $\frac{r_1}{q_1} \times \frac{r_2}{q_2} \times \frac{r_3}{q_3}$. \response{For simplicity of analysis, we assume that the numerator is divisible by the denominator for each cost expression.}

The local Multi-TTM computation in Line~\ref{alg:3dmultittm:line:localcomputation} can be performed as a sequence of TTM operations to mininimize the number of arithmetic operations. 
Assuming the TTM operations are performed in their order, first with $\Mn{A}{1}$, then with $\Mn{A}{2}$, and in the end with $\Mn{A}{3}$, then each processor performs $2\Big(\frac{n_1n_2n_3r_1}{p_1p_2p_3q_1} + \frac{n_2n_3r_1r_2}{p_2p_3q_1q_2} + \frac{n_3r_1r_2r_3}{p_3q_1q_2q_3}\Big)$ operations to perform the local computation.

Communication occurs only in the All-Gather and Reduce-Scatter collectives in Lines~\ref{alg:3dmultittm:line:allGatherInputTensor}, \ref{alg:3dmultittm:line:allGatherMatrix1}, \ref{alg:3dmultittm:line:allGatherMatrix2}, \ref{alg:3dmultittm:line:allGatherMatrix3} and~\ref{alg:3dmultittm:line:reduceScatterOutputTensor}. 
Each processor is involved in one All-Gather involving the input tensor, three All-Gathers involving input matrices and one Reduce-Scatter involving the output tensor.
\response{Therefore, the communication cost of the algorithm along the critical path is the sum of communication costs of these five collectives.} 
\response{Lines~\ref{alg:3dmultittm:line:allGatherInputTensor}, \ref{alg:3dmultittm:line:allGatherMatrix1}, \ref{alg:3dmultittm:line:allGatherMatrix2}, and \ref{alg:3dmultittm:line:allGatherMatrix3} specify $p$, $p_1q_1$, $p_2q_2$ and $p_3q_3$ All-Gathers over disjoint sets of $\frac{P}{p}$, $\frac{P}{p_1q_1}$, $\frac{P}{p_2q_2}$ and $\frac{P}{p_3q_3}$ processors respectively. Similarly, Line~\ref{alg:3dmultittm:line:reduceScatterOutputTensor} specifies $q$ Reduce-Scatters over disjoint sets of $\frac{P}{q}$ processors.}


For simplicity of discussion, we consider that the number of processors involved in the collectives is a power of $2$. We also assume that communication optimal collective algorithms are used.
The optimal latency and bandwidth costs of both collectives on $Q$ processors are $\log_2 (Q)$ and $(1-\frac{1}{Q})w$, respectively, where $w$ denotes the words of data in each processor after All-Gather or before Reduce-Scatter collective.
Each processor also performs $(1-\frac{1}{Q})w$ computations for the Reduce-Scatter collective.
We point the reader to~\cite{Thakur:CollectiveCommunications:2005, Chan:CollectiveCommunications:2007} for more details on efficient algorithms for collectives.


Hence the bandwidth costs of Lines~\ref{alg:3dmultittm:line:allGatherInputTensor}, \ref{alg:3dmultittm:line:allGatherMatrix1}, \ref{alg:3dmultittm:line:allGatherMatrix2}, \ref{alg:3dmultittm:line:allGatherMatrix3} in \cref{alg:3dmultittm} are $(1-\frac{p}{P}) \frac{n}{p}$, $(1-\frac{p_1q_1}{P}) \frac{n_1r_1}{p_1q_1}$, $(1-\frac{p_2q_2}{P}) \frac{n_2r_2}{p_2q_2}$, $(1-\frac{p_3q_3}{P})\frac{n_3r_3}{p_3q_3}$ respectively to accomplish All-Gather operations, and the bandwidth cost of performing the Reduce-Scatter operation in Line~\ref{alg:3dmultittm:line:reduceScatterOutputTensor} is $(1-\frac{q}{P}) \frac{r}{q}$. Thus the overall bandwidth cost of \cref{alg:3dmultittm} is 
\begin{equation}
\label{eq:alg-costs}
\frac{n}{p} + \frac{n_1r_1}{p_1q_1} + \frac{n_2r_2}{p_2q_2} + \frac{n_3r_3}{p_3q_3} + \frac{r}{q} - \left(\frac{n + n_1r_1 + n_2r_2 + n_3r_3 + r}{P}\right).
\end{equation} 

The latency costs of Lines~\ref{alg:3dmultittm:line:allGatherInputTensor}, \ref{alg:3dmultittm:line:allGatherMatrix1}, \ref{alg:3dmultittm:line:allGatherMatrix2}, \ref{alg:3dmultittm:line:allGatherMatrix3}, \ref{alg:3dmultittm:line:reduceScatterOutputTensor} are $\log_2 (\frac{P}{p})$, $\log_2 (\frac{P}{p_1q_1})$, $\log_2 (\frac{P}{p_2q_2})$, $\log_2 (\frac{P}{p_3q_3})$, $\log_2 (\frac{P}{q})$ respectively. Thus the overall latency cost of \cref{alg:3dmultittm} is $\log_2 \left(\frac{P}{p}\right) + \log_2 \left(\frac{P}{p_1q_1}\right) + \log_2 \left(\frac{P}{p_2q_2}\right) + \log_2 \left(\frac{P}{p_3q_3}\right) + \log_2 \left(\frac{P}{q}\right) = \log_2 \left(\frac{P^5}{p^2q^2}\right) = 3 \log_2 (P).$

Due to the Reduce-Scatter operation, each processor also performs $(1-\frac{q}{P}) \frac{r}{q}$ computations, which is \response{dominated by} the computations of Line~\ref{alg:3dmultittm:line:localcomputation} \response{(as $n_3\geq p_3$)}.

\subsection{Selection of $p_i$ and $q_i$ in Algorithm~\ref{alg:3dmultittm}}
\label{sec:parallelAlgoritm:selectionpiqi}

We must select the processor dimensions carefully such that \cref{alg:3dmultittm} is communication optimal.

We attempt to select the processor dimensions $p_i$ and $q_i$ in such a way that the terms in the communication cost match the optimal solutions of~\cref{lemma:tensorOptimalSolutions,lemma:matrixOptimalSolutions}.
In other words, we want to select $p_i$ and $q_i$ such that $\frac{n_1r_1}{p_1q_1}=\starontop{x}$, $\frac{n_2r_2}{p_2q_2}=\starontop{y}$, and $\frac{n_3r_3}{p_3q_3}=\starontop{z}$ from \cref{lemma:matrixOptimalSolutions}, and $\frac{n}{p}=\starontop{v},\frac{r}{q}=\starontop{u}$ from \cref{lemma:tensorOptimalSolutions}.

\response{We need to fix two or three processor grid dimensions for each equation, and each processor grid dimension appears in two equations.}
In general, we are able to set the processor grid dimensions in a way that is consistent with these equations.
However, they are subject to additional constraints that are not imposed by the optimization problem. 
Specifically, we have $1\le p_i\le n_i$ and $1\le q_i\le r_i$ for $1\le i \le 3$.
The lower bounds are imposed because processor grid dimensions must be at least 1.
The upper bounds are imposed to ensure that each processor performs its fair share of the computations. 
We assume that $P\leq nr$, so that every processor has at least one $4$-ary multiplication term to compute.
For simplicity, we assume that the final grid dimensions are integers and perfectly divide the corresponding input and output dimensions. However, we also discuss how to handle non-integer grid dimensions for a specific set of inputs in \cref{sec:exp:configurations}.

In order to define processor grid dimensions, we begin by determining a set of values that match the lower bound terms and denote these by $\init{p_i}, \init{q_i}$ with their products denoted $\init{p}$ and $\init{q}$.
Then, we will consider how to adapt $\init{p_i}$ and $\init{q_i}$ so that the additional constraints are met. 
During the adaption, we maintain the tensor communication costs, modify the matrix communication costs, and then bound the additional costs in terms of communication lower bounds of tensors.


As $\T{X}$ and $\T{Y}$ are $3$-dimensional tensors, we have $n_i, r_i \ge 2$ for all $1\le i\le 3$. 
For better readability, we use the notation $\odata=\frac{\sum_{j \in [3]}n_jr_j + r + n}{P}$, the amount of data owned by a single processor at the beginning and end of the algorithm.

\begin{theorem}\label{theorem:optimality:3DMultiTTMAlgorithm}
	There exist $p_i,q_i$ with $1\le p_i\le n_i, 1\le q_i\le r_i$ for $i=1,2,3$ such that \cref{alg:3dmultittm} is communication optimal to within a constant factor.
\end{theorem}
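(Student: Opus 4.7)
The plan is to select the processor-grid dimensions so that the five cost terms $n/p$, $n_jr_j/(p_jq_j)$, $r/q$ of \cref{eq:alg-costs} match (or nearly match) the five optima $\starontop{v},\starontop{x},\starontop{y},\starontop{z},\starontop{u}$ obtained from \cref{lemma:matrixOptimalSolutions,lemma:tensorOptimalSolutions}. First I would verify the consistency identity $\starontop{x}\starontop{y}\starontop{z}=\starontop{u}\starontop{v}=nr/P$, which holds directly in every regime of the two corollaries. This identity makes the joint system of five target equations in six unknowns $\init{p_i},\init{q_i}$ solvable with one degree of freedom, and this freedom can be used to split each product $\init{p_j}\init{q_j}$ between its two factors so as to respect the additional side constraints $\init{p_j}\le n_j$ and $\init{q_j}\le r_j$.

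Next I would do a case analysis indexed by the regimes of $P$ in both corollaries (at most three matrix regimes crossed with two tensor regimes). In each case I would exhibit an explicit choice of $\init{p_j},\init{q_j}$. For small $P$, the matrix optimum concentrates parallelism in the largest mode via $\init{p_3}\init{q_3}\approx P$, and the split between $\init{p_3}$ and $\init{q_3}$ is then dictated by the tensor optimum. For intermediate and large $P$, balanced factorizations across the active modes realize the geometric-mean terms $(n_2n_3r_2r_3/P)^{1/2}$ and $(nr/P)^{1/3}$, and the remaining freedom distributes mass between $\init{p_j}$ and $\init{q_j}$.

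The main obstacle is honoring the bounds $\init{p_j}\le n_j$ and $\init{q_j}\le r_j$, which are not enforced by the optimization problem. My plan is: whenever some $\init{p_j}$ exceeds $n_j$ (the opposite case being symmetric), I would cap it at $p_j=n_j$ and push the excess into $q_j=\init{p_j}\init{q_j}/n_j$. This keeps the matrix term $n_jr_j/(p_jq_j)$ unchanged while decreasing $p$ and increasing $q$; the incurred change in the tensor terms $n/p$ and $r/q$ must then be absorbed. Using the ordering $n_1r_1\le n_2r_2\le n_3r_3$ together with the observation that a violation $\init{p_j}>n_j$ forces $\init{q_j}$ to be small by a comparable factor, I would show that each such adjustment inflates a cost term by at most a constant multiple of a term already appearing in the lower bound $A+B$ of \cref{theorem:lb:3DMultiTTM}. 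Summing over the at most three adjusted modes keeps the total communication within a constant factor of $A+B-\odata$, establishing the theorem. The residual issue of making the $p_j,q_j$ positive integers that divide the relevant tensor and matrix dimensions is handled by standard rounding and is deferred to \cref{sec:exp:configurations}.
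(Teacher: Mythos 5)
Your overall plan (match the five cost terms of \cref{eq:alg-costs} to the optima of \cref{lemma:matrixOptimalSolutions,lemma:tensorOptimalSolutions}, case analysis over the regimes of $P$, then repair violations of $p_j\le n_j$, $q_j\le r_j$) mirrors the paper's proof, but the repair step you propose is the wrong direction and breaks the constant-factor claim. You cap $p_j=n_j$ and push the excess into $q_j=\init{p_j}\init{q_j}/n_j$, which preserves the matrix term $n_jr_j/(p_jq_j)$ but changes the products $p$ and $q$, inflating the tensor term $n/p$ by the factor $\init{p_j}/n_j$ — and that factor is not bounded relative to $\lowerbound$. Concretely, take $n_1=n_2=n_3=m$, $r_1=r_2=2$, $r_3=4m^{3/2}$, $P=m^{3/2}/32$ with $m$ large. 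Then $P<n/r=m^{3/2}/16$ and $P<n_3r_3/(n_2r_2)=2m^{3/2}$, so you are in Scenario I, Case 1, and the five matching equations force $\init{q_i}=1$, $\init{p_1}=\init{p_2}=1$, $\init{p_3}=P>n_3$ with no remaining freedom. Your fix gives $p_3=m$, $q_3=m^{1/2}/32$, hence $n/p=n_1n_2=m^2$, while $\lowerbound\le A+B=\bigl(132m\bigr)+\bigl(48m^{3/2}\bigr)=\Theta(m^{3/2})$. The resulting cost exceeds the lower bound by a factor $\Theta(m^{1/2})$, so the claim that "each such adjustment inflates a cost term by at most a constant multiple of a term already appearing in $A+B$" fails; the decrease in $r/q$ does not compensate.

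The paper does the opposite: it keeps the products $p=\init{p}$ and $q=\init{q}$ fixed (so the tensor terms $n/p$ and $r/q$, which dominate in exactly these regimes, are untouched) and redistributes the violating factor among the other $p_i$'s only, e.g.\ $p_l:=n_l$, $p_j:=\min\{n_j,\init{p}/(p_kp_l)\}$, $p_k:=\init{p}/(p_lp_j)$. This is legitimate because of a structural fact you would also need: by the particular choices in every case, there is never simultaneously some $\init{p_i}>n_i$ and some $\init{q_j}>r_j$, so only one family needs adjusting. The resulting increase is then confined to the matrix terms and is bounded by $2\bigl(\frac{r}{\init{q}}-\frac{r}{P}\bigr)$ (or symmetrically $2\bigl(\frac{n}{\init{p}}-\frac{n}{P}\bigr)$), quantities already present in the lower bound, giving the factor $3$. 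A secondary omission: in Scenario II, Case 2 (and Case 3) the natural solutions such as $\init{q_2}=r_2\bigl(\frac{r_1P}{nr_2r_3}\bigr)^{1/4}$ can be smaller than $1$, and making all grid dimensions at least $1$ while preserving the products $\init{p_i}\init{q_i}$, $\init{p}$, and $\init{q}$ requires its own redistribution argument, which your proposal does not address.
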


\begin{proof}
	We break our analysis into 2 scenarios which are further broken down into cases. In each case, we obtain $\init{p_i}$ and $\init{q_i}$ such that the terms in the communication cost match the corresponding lower bound terms and also satisfy at least one of the two constraints: $\forall i, 1\le \init{p_i} \le n_i, 1\le \init{q_i}$ or $\forall i, 1\le \init{q_i}\le r_i, 1\le \init{p_i}$. We handle all cases from both scenarios together in the end, and adapt these values to get $p_i$ and $q_i$ which respect both lower and upper bounds.
	
	\noindent$\bullet$ \underline{Scenario I} $\left(P < \frac{n}{r}\right)$:
	This scenario corresponds to the first case of the tensor term in $\lowerbound$.
	Thus, we set $\init{p_i}, \init{q_i}$ in such a way that the tensor terms in the communication cost match the tensor terms of $\lowerbound$:
	\begin{equation}\label{eq:s1} \init{p}=P, \init{q} = 1.\end{equation}
	This implies $\init{q_i} = 1$ for $1\le i\le 3$.
	We break this scenario into 3 cases, each corresponding to a case in the matrix term of $\lowerbound$.
	
	\noindent (Case 1) $P < \response{\frac{n_3r_3}{n_2r_2}}$:
	Setting the matrix communication costs to the matrix terms in the corresponding case of the lower bound yields
	\begin{equation}\label{eq:s1c1}\frac{n_1r_1}{\init{p_1}\init{ q_1} } = n_1r_1, \: \frac{n_2r_2}{\init{p_2}\init{q_2} } = n_2r_2, \: \frac{n_3r_3}{\init{p_3}\init{q_3}} = \frac{n_3r_3}{P}.\end{equation}
	Thus, we set $\init{p_1} = \init{p_2} = \init{q_1} = \init{q_2} = \init{q_3} = 1$ and $\init{p_3} = P$ to satisfy~\cref{eq:s1,eq:s1c1}.
	
	\noindent (Case 2) $\frac{n_3r_3}{n_2r_2} \le P < \response{\frac{n_2n_3r_2r_3}{n_1^2r_1^2}}$:
	Setting the matrix communication costs to the matrix terms in the corresponding case of the lower bound yields
	\begin{equation}\label{eq:s1c2} \frac{n_1r_1}{\init{p_1}\init{q_1}} = n_1r_1,\: \frac{n_2r_2}{\init{p_2}\init{q_2}} = \frac{n_3r_3}{\init{p_3}\init{q_3}} = \left(\frac{n_2n_3r_2r_3}{P}\right)^{1/2}.\end{equation}
	We set $\init{p_1} = \init{q_1} = \init{q_2} = \init{q_3} = 1$, $\init{p_2} = n_2r_2\left(\frac{P}{n_2n_3r_2r_3}\right)^{\frac 12}$, and $\init{p_3} = n_3r_3\left(\frac{P}{n_2n_3r_2r_3}\right)^{\frac 12}$ to satisfy~\cref{eq:s1,eq:s1c2}. \response{$\frac{n_3r_3}{n_2r_2} \le P$ implies $\init{p_2}\ge 1$ and $\init{p_3}\ge 1$.}
	
	\noindent (Case 3) $\frac{n_2n_3r_2r_3}{n_1^2r_1^2} \le P$:
	Setting the matrix communication costs to match the matrix terms in the corresponding case of the lower bound yields
	\begin{equation}\label{eq:s1c3}\frac{n_1r_1}{\init{p_1}\init{q_1}} = \frac{n_2r_2}{\init{p_2}\init{q_2}} = \frac{n_3r_3}{\init{p_3}\init{q_3}} = \left(\frac{nr}{P}\right)^{1/3}. \end{equation}
	Thus we set $\init{q_1}=\init{q_2}=\init{q_3} = 1$, $\init{p_1} = n_1r_1 \big(\frac{P}{nr}\big)^{\frac{1}{3}},$ $\init{p_2} = n_2r_2 \big(\frac{P}{nr}\big)^{\frac{1}{3}},$ and $\init{p_3} = n_3r_3 \big(\frac{P}{nr}\big)^{\frac{1}{3}}$ to satisfy~\cref{eq:s1,eq:s1c3}. \response{$\frac{n_2n_3r_2r_3}{n_1^2r_1^2} \le P$ implies $\init{p_i}\ge 1$ for $1\le i \le 3$.}

	\medskip
	\response{Note that in all the cases of this scenario we have $1\le \init{q_i}=1 < r_i, 1\le \init{p_i}$ for $1\le i \le 3$, but we cannot ensure $\init{p_i} \le n_i$.} 
	We will adapt processor grid dimensions for both scenarios in the end as they require the same steps.
	
	\medskip
	\noindent $\bullet$ \underline{Scenario II} $\left(\frac{n}{r}\le P\right)$:
	This scenario corresponds to the second case of the tensor term in $\lowerbound$.
	Thus, we set $\init{p_i}, \init{q_i}$ in such a way that
	\begin{equation}\label{eq:s2}\frac{n}{\init{p}}=\frac{r}{\init{q}}=\left(\frac{nr}{P}\right)^{1/2}.\end{equation}
	Again, we break this scenario into 3 cases each corresponding to a case in the matrix term of $\lowerbound$.
	
	\noindent (Case 1) $P < \frac{n_3r_3}{n_2r_2}$:
	Setting the matrix communication costs to the matrix terms in the corresponding case of the lower bound yields
	\begin{equation}\label{eq:s2c1}\frac{n_1r_1}{\init{p_1}\init{q_1}} = n_1r_1, \: \frac{n_2r_2}{\init{p_2}\init{q_2}} = n_2r_2, \: \frac{n_3r_3}{\init{p_3}\init{q_3}} = \frac{n_3r_3}{P}.\end{equation}
	Thus we set $\init{p_1} = \init{q_1} = \init{p_2} = \init{q_2} = 1$, $\init{p_3} = n\left(\frac{P}{nr}\right)^{1/2}$, and $\init{q_3} = r\left(\frac{P}{nr}\right)^{1/2}$ to satisfy~\cref{eq:s2,eq:s2c1}.
	As $\frac{n}{r} \leq P \le nr$ and $r \le n$, we have $1\le \init{p_3} \le n$ and $1 \le \init{q_3} \le r$, but cannot ensure $\init{p_3}\le n_3$ or $\init{q_3}\le r_3$. However, $\init{p_3}\init{q_3}=P < \frac{n_3r_3}{n_2r_2}$ implies that at least one is satisfied.
	 Therefore, we have $\forall i, 1\le \init{p_i} \le n_i, 1\le \init{q_i}$ and/or $\forall i, 1\le \init{p_i}, 1\le \init{q_i} \le r_i$.

	\noindent (Case 2) $\response{\frac{n_3r_3}{n_2r_2}} \leq P < \frac{n_2n_3r_2r_3}{n_1^2r_1^2}$:
	Setting the matrix communication costs to the matrix terms in the corresponding case of the lower bound yields
	\begin{equation}\label{eq:s2c2}\frac{n_1r_1}{\init{p_1}\init{q_1}} = n_1r_1, \frac{n_2r_2}{\init{p_2}\init{q_2}} = \frac{n_3r_3}{\init{p_3}\init{q_3}} = \left(\frac{n_2n_3r_2r_3}{P}\right)^{1/2}.\end{equation}
	\response{We set $\init{p_1}=\init{q_1}=1$. \Cref{eq:s2,eq:s2c2} do not uniquely determine $\init{p_2},\init{p_3},\init{q_2},$ and $\init{q_3}$.
	The following is one possible solution: $\init{p_2}=n_2\left(\frac{n_1P}{n_2n_3r}\right)^{1/4}$, $\init{p_3}=n_3\left(\frac{n_1P}{n_2n_3r}\right)^{1/4}$, $\init{q_2}=r_2\left(\frac{r_1P}{nr_2r_3}\right)^{1/4}$, and $\init{q_3}=r_3\left(\frac{r_1P}{nr_2r_3}\right)^{1/4}$}.
	Note that $P < \frac{n_2n_3r_2r_3}{n_1^2r_1^2}$ implies that $\init{p_2} < n_2,$ $\init{p_3} < n_3,$ $\init{q_2}<r_2,$ and $\init{q_3} < r_3$.
	We are not able to ensure $\init{p_2}, \init{p_3}, \init{q_2}, \init{q_3}$ are all at least 1 in this case.
	We will handle
	both Case 2 and Case 3 together as they require the same analysis.
	
	\noindent (Case 3) $\response{\frac{n_2n_3r_2r_3}{n_1^2r_1^2}} \le P$:
	Setting the matrix communication costs to the matrix terms in the corresponding case of the lower bound yields
	\begin{equation}\label{eq:s2c3} \frac{n_1r_1}{\init{p_1}\init{q_1}} = \frac{n_2r_2}{\init{p_2}\init{q_2}} = \frac{n_3r_3}{\init{p_3}\init{q_3}} = \left(\frac{nr}{P}\right)^{\frac{1}{3}}.\end{equation}
	Similar to Case 2, the equations \cref{eq:s2,eq:s2c3} do not uniquely determine $\init{p_i},\init{q_i}$ for $1\le i\le 3$.
	We choose a cubical distribution, namely $\frac{n_1}{p_1}=\frac{n_2}{p_2}=\frac{n_3}{p_3}=\frac{r_1}{q_1} =\frac{r_2}{q_2} =\frac{r_3}{q_3}$ and obtain the following solution, $\init{p_i} = n_i\left(\frac{P}{nr}\right)^{1/6}$, $\init{q_i} = r_i\left(\frac{P}{nr}\right)^{1/6}$ for $1\le i\le 3.$
	As $P\le nr$ we have $\init{p_i}\le n_i$ and $\init{q_i}\le r_i$ for $1\le i \le 3$.
	Again we are not able to ensure $\init{p_i}$ and $\init{q_i}$ are all greater than 1 for $1\le i\le 3$.

	Now we handle Case 2 and Case 3 of Scenario II here. \response{The communication cost for the obtained set of values matches the lower bound, and each term in the lower bound is at least 1, therefore $1\le \frac{n_ir_i}{\init{p_i}\init{q_i}}\le n_ir_i$ for $1\le i \le 3$, $1\le \frac{n}{\init{p}} \le n$ and $1\le \frac{r}{\init{q}}\le r$. This implies that $1\le \init{p_i}\init{q_i}\le n_ir_i$ for $1\le i \le 3$, $1\le \init{p} \le n$ and $1\le \init{q}\le r$.}
		\response{For $1\le i \le 3$, at most one of $\init{p}_i$ and $\init{q}_i$ can be smaller than one. In such a case, we multiply the largest by the smallest (say $\widetilde{p}_i = \init{p}_i \cdot \init{q}_i$) and set the smallest to one ($\widetilde{q}_i = 1$) so that their product remains the same ($\widetilde{p}_i \cdot \widetilde{q}_i =  \init{p}_i \cdot \init{q}_i$).
		After doing that, the products $\widetilde{p}$ and $\widetilde{q}$ might change.
		Let $f = \widetilde{q}/\init{q}$
		be the rate of change, and suppose $f>1$. 
		As $\init{q}=\widetilde{q}/f \ge 1$, we can factor $f = f_1 \cdot f_2 \cdot f_3$ with $f_i \geq 1$ so that $\init{q}_i := \widetilde{q}_i/f_i \geq 1$ and $\init{p}_i := \widetilde{p} f_i \geq 1$.
		We can obtain the factors $f_i$ by the following iterative procedure:}
		
		\begin{minipage}{0.825\linewidth}
		\begin{tcolorbox}[colframe=white]
			\begin{enumerate}
				\item for $i = 1:3$
				\item \quad if $\widetilde{q}_i \geq f$ then \quad $f_i =f$, $f=1$, $(\init{p}_i, \init{q}_i) := (\widetilde{p}_i f_i, \widetilde{q}_i/f_i)$ 
				\item \quad else \quad $f_i = \widetilde{q}_i$, $f=f/f_i$, $(\init{p}_i, \init{q}_i) := (\widetilde{p}_if_i, 1)$
			\end{enumerate}	
		\end{tcolorbox}
		\end{minipage}
		
		It is straightforward to verify that at the end of this process, we have $1 \leq \init{q}_i \leq r_i$, and $1 \leq \init{p}_i$. 
		If $f<1$, the process is applied by exchanging the $p$'s and the $q$'s so that we end up with the inequalities $1 \leq \init{p}_i \leq n_i$, and $1 \leq \init{q}_i$. 
	
        %
	
	
	
	\response{Now we consider all the cases of both scenarios. It remains to adapt $\init{p}_i$ and $\init{q}_i$ such that $\init{p}_i \le n_i$ and $\init{q}_i\le r_i$. We can note that due to our particular selections of $p_i$ and $q_i$ in each case, $\nexists i,j\in[3]$ such that $\init{p_i}>n_i$ and $\init{q_j}>r_j$. We will use this fact while assessing the additional communication cost.}
	We now obtain $p_1,p_2,p_3, q_1, q_2, q_3$ from $\init{p_i},\init{q_i}$ such that both lower and upper bounds are respected, and $p_1p_2p_3=\init{p}$ and $q_1q_2q_3=\init{q}$. The intuition is to maintain the tensor communication terms in the lower bound.

	\response{Initially we set $p_i=\init{p_i}, q_i=\init{q_i}$ for $1\le i \le 3$. If $1\le \init{q_i}\le r_i$, $1\le \init{p_i}$ for $1\le i \le 3$ and $\init{p_l} > n_l$ for some index $l$. We represent the other two indices with $j$ and $k$. As $\init{p} \le n$, therefore $\init{p_j} \le  n_j$ or/and $\init{p_k} \le n_k$. Without loss of generality, we assume that $\init{p_k} \le  n_k$. Now we first update $p_l$, and then $p_j$, and in the end $p_k$ with the following expressions: $p_l := n_l$, $p_j := \min\left\{n_j, \frac{\init{p}}{p_k p_l}\right\}$, $p_k := \frac{\init{p}}{p_l p_j}$. We note that the product is unchanged by these updates as $p_kp_lp_j=\init{p}$. The same update can be done to $q_i$'s if $1\le \init{p_i}\le n_i$, $1\le \init{q_i}$ for $1\le i \le 3$ and $\init{q_l} > r_l$ for some $l$.}

	Now we assess how much additional communication is required for the matrices. If $\nexists i\in [3]$ such that $\init{p_i} > n_i$ or $\init{q_i} > r_i$ then $\sum_{i \in [3]}\frac{n_ir_i}{p_iq_i} = \sum_{i \in [3]}\frac{n_ir_i}{\init{p_i}\init{q_i}}$. We can note that due to our particular selections of $\init{p_i}$ and $\init{q_i}$, $\nexists i,j\in [3]$ such that $\init{p_i} > n_i$ and $\init{q_j} > r_j$. Suppose $\exists i\in [3]$ such that $\init{p_i} > n_i$ then $\init{p}> 2$ and
	\begin{align*}
	\sum_{i \in [3]}\frac{n_ir_i}{p_iq_i} &\le \sum_{i \in [3]} \max\left\{\frac{n_ir_i}{\init{p_i}\init{q_i}}, \frac{r_i}{\init{q_i}}\right\}&& \text{\colorbox{shadecolor}{ $q_i=\init{q_i}$, and $p_i\ge \init{p_i}$ or $p_i=n_i$}}\\
	&= \sum_{i \in [3]} \Big(\frac{n_ir_i}{\init{p_i}\init{q_i}} + \frac{r_i}{\init{q_i}} - \min\left\{\frac{n_ir_i}{\init{p_i}\init{q_i}}, \frac{r_i}{\init{q_i}}\right\}\Big)  && \text{\colorbox{shadecolor}{$\max\{a,b\} = a+b-\min\{a,b\}$}}\\
	& < \sum_{i \in [3]}\big(\frac{n_ir_i}{\init{p_i}\init{q_i}} + \frac{r_i}{\init{q_i}}\big) -2  && \text{\colorbox{shadecolor}{$\init{p_i}\init{q_i} \le n_ir_i$ and $\init{q_i}\le r_i$}}\\
	&\le \sum_{i \in [3]}\frac{n_ir_i}{\init{p_i}\init{q_i}}+ \frac{r}{\init{q}} && \text{\colorbox{shadecolor}{$\forall a_i\ge 1, a_1+a_2+a_3$-$2 \le a_1a_2a_3$\footnotemark}}\\
	&< \sum_{i \in [3]}\frac{n_ir_i}{\init{p_i}\init{q_i}}+ 2\big(\frac{r}{\init{q}} - \frac{r}{\init{p}\init{q}}\big)&&\\
	&= \sum_{i \in [3]}\frac{n_ir_i}{\init{p_i}\init{q_i}} + 2\big(\frac{r}{\init{q}} - \frac{r}{P}\big).&&
	\end{align*}
	\response{Similarly, if $\init{q_i} > r_i$ for some $i$ then $\sum_{i \in [3]}\frac{n_ir_i}{p_iq_i}$ is bounded by $\sum_{i \in [3]} \max\{\frac{n_ir_i}{\init{p_i}\init{q_i}}, \frac{n_i}{\init{p_i}}\}$ and we can obtain $\sum_{i \in [3]}\frac{n_ir_i}{p_iq_i} < \sum_{i \in [3]}\frac{n_ir_i}{\init{p_i}\init{q_i}} + 2\big(\frac{n}{\init{p}} - \frac{n}{P}\big)$.}
	
	\footnotetext{$\forall a_i\ge 1, a_1a_2a_3=(1+a_1-1)(1+a_2-1)(1+a_3-1) \ge 1 + (a_1-1)+ (a_2-1) + (a_3-1) = a_1 + a_2+ a_3-2$.}


	\noindent Therefore, in all situations, $\sum_{i \in [3]} \frac{n_ir_i}{p_iq_i} +\frac{r}{q} + \frac{n}{p} - \odata \le 3 \Big(\sum_{i \in [3]}\frac{n_ir_i}{\init{p_i}\init{q_i}} $ $+ \frac{r}{\init{q}} + \frac{n}{\init{p}} - \odata \Big) = 3\lowerbound$.
\end{proof}

\section{Simulated Evaluation}
\label{sec:experiments}

In this section, we verify our theoretical claims on particular sets of 3D tensor dimensions with a simulated evaluation.
\response{We use \cref{eq:alg-costs} to calculate the communication cost of \cref{alg:3dmultittm}.}
In \cref{sec:exp-optimality}, we demonstrate that the communication cost of \cref{alg:3dmultittm} matches the lower bound of \cref{theorem:lb:3DMultiTTM}, and we provide intuition for relationships among the communication costs of the individual tensors and matrices.
In \cref{sec:exp-comparison}, we compare the approach of \cref{alg:3dmultittm} for evaluating Multi-TTM with a TTM-in-Sequence approach, demonstrating realistic scenarios when \cref{alg:3dmultittm} communicates significantly less data and performs a negligible amount of extra computation.

Throughout this section, we restrict to cases where all tensor dimensions and numbers of processors are powers of 2.
We vary the number of processors $P$ from $2$ to $\maxp=\min\{n_1r_1, n_2r_2, n_3r_3, n, r\}$, which ensures that each processor owns some data of every tensor and matrix. 
The costs of \cref{alg:3dmultittm} depend on the processor grid, and in these experiments, we perform an exhaustive search for the best configuration.
We describe in \cref{sec:exp:configurations} how to adapt the processor grid selection scheme described in \cref{sec:parallelAlgoritm:selectionpiqi} to obtain integer-valued processor grid dimensions, and we show that we can obtain nearly optimal configurations without exhaustive search.

\subsection{Verifying Optimality of \Cref{alg:3dmultittm}}
\label{sec:exp-optimality}

\Cref{theorem:optimality:3DMultiTTMAlgorithm} states that \cref{alg:3dmultittm} attains the communication lower bound to within a constant factor, and in this section we verify the result in a variety of scenarios.
Recall from \cref{theorem:lb:3DMultiTTM} that the lower bound is $A+B-\odata$, where 
\begin{align*}
	A &= \begin{cases} n_1r_1 + n_2r_2 + \frac{n_3r_3}{P} & \text{ if } P<\frac{n_3r_3}{n_2r_2} \\ n_1r_1 + 2\left(\frac{n_2n_3r_2r_3}{P}\right)^{\frac{1}{2}} &\text{ if } \frac{n_3r_3}{n_2r_2}\leq P < \frac{n_2n_3r_2r_3}{n_1^2r_1^2} \\ 3\left(\frac{nr}{P}\right)^{\frac{1}{3}} &\text{ if } \frac{n_2n_3r_2r_3}{n_1^2r_1^2} \leq P\end{cases} \\
	B &= \begin{cases} r + \frac{n}{P} & \text{ if } P < \frac{n}{r} \\ 2\left(\frac{nr}{P}\right)^{\frac{1}{2}} &\text{ if } \frac{n}{r} \leq P\text.\end{cases} \\
	\odata &= \frac{n_1r_1+n_2r_2+n_3r_3 + r + n}{P}.
\end{align*}
Here, $A$ corresponds to the matrix entries accessed, $B$ corresponds to the tensor entries accessed \response{or partially computed}, and $\odata$ corresponds to the data owned by a single processor.
The costs of \cref{alg:3dmultittm} are given by \cref{eq:alg-costs}, which we re-write here as
\begin{equation*}
\frac{n_1r_1}{p_1q_1} + \frac{n_2r_2}{p_2q_2} + \frac{n_3r_3}{p_3q_3} + \frac{n}{p} + \frac{r}{q} - \odata,
\end{equation*}
where $\{p_i\}$ and $\{q_i\}$ specify the processor grid dimensions.
The first three terms correspond to matrix entries and the middle two terms correspond to tensor entries.

\begin{figure*}
	\begin{center}
		$\quad$\includegraphics[scale=0.225]{./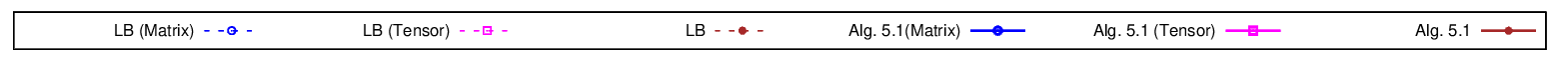}
	\end{center}
	\vspace*{-0.35cm}\begin{center}
		\subfloat[$\{n_1,n_2,n_3,r_1,r_2,r_3\}=\qquad$ $\{2^{12}, 2^{13}, 2^{19}, 2^{8},2^{13}, 2^{11}\}$.\label{fig:lb:allcases}]{\includegraphics[width=0.32\linewidth]{./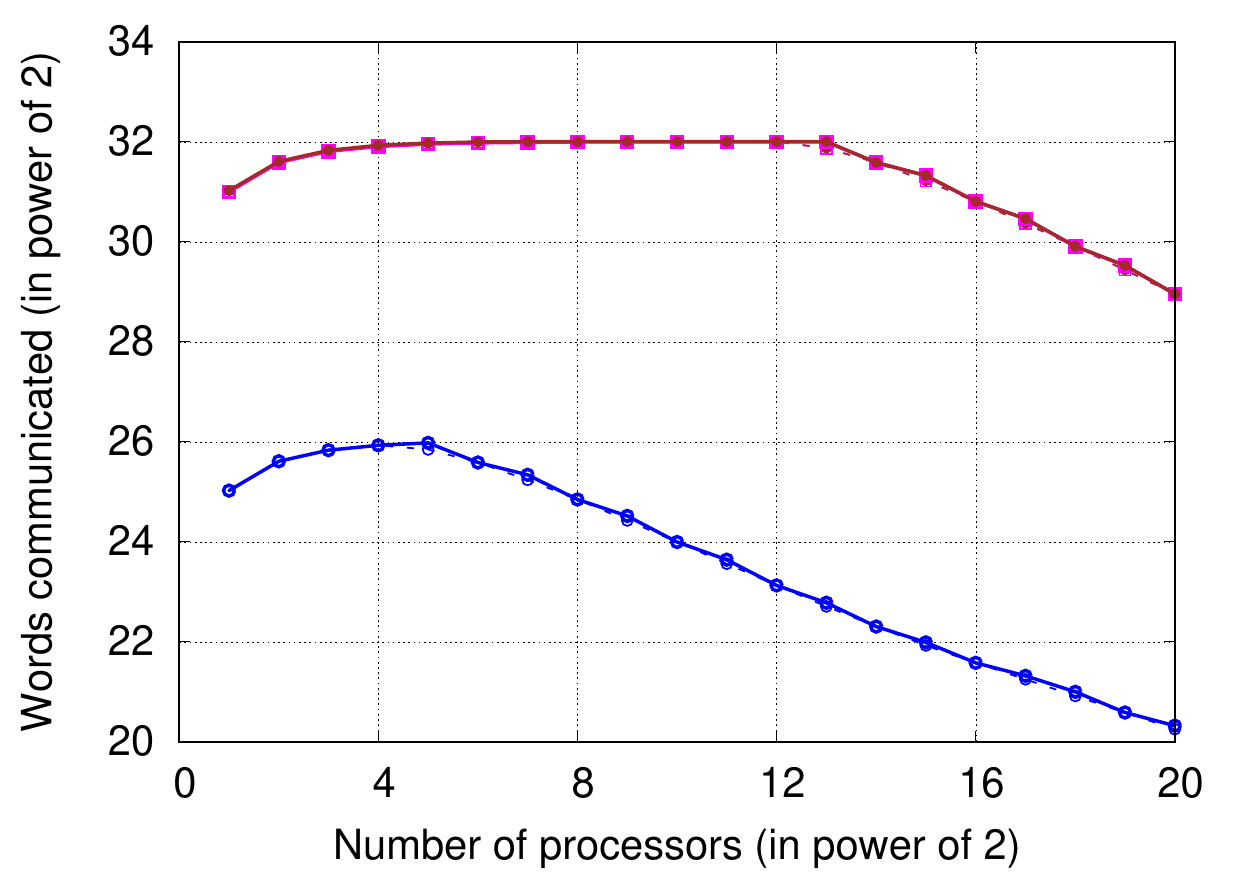}}\hfill
		\subfloat[$n_1=n_2=n_3=2^{12}$, and $r_1=r_2=r_3=2^{4}$.\label{fig:lb:matrixdominated}]{\includegraphics[width=0.32\linewidth]{./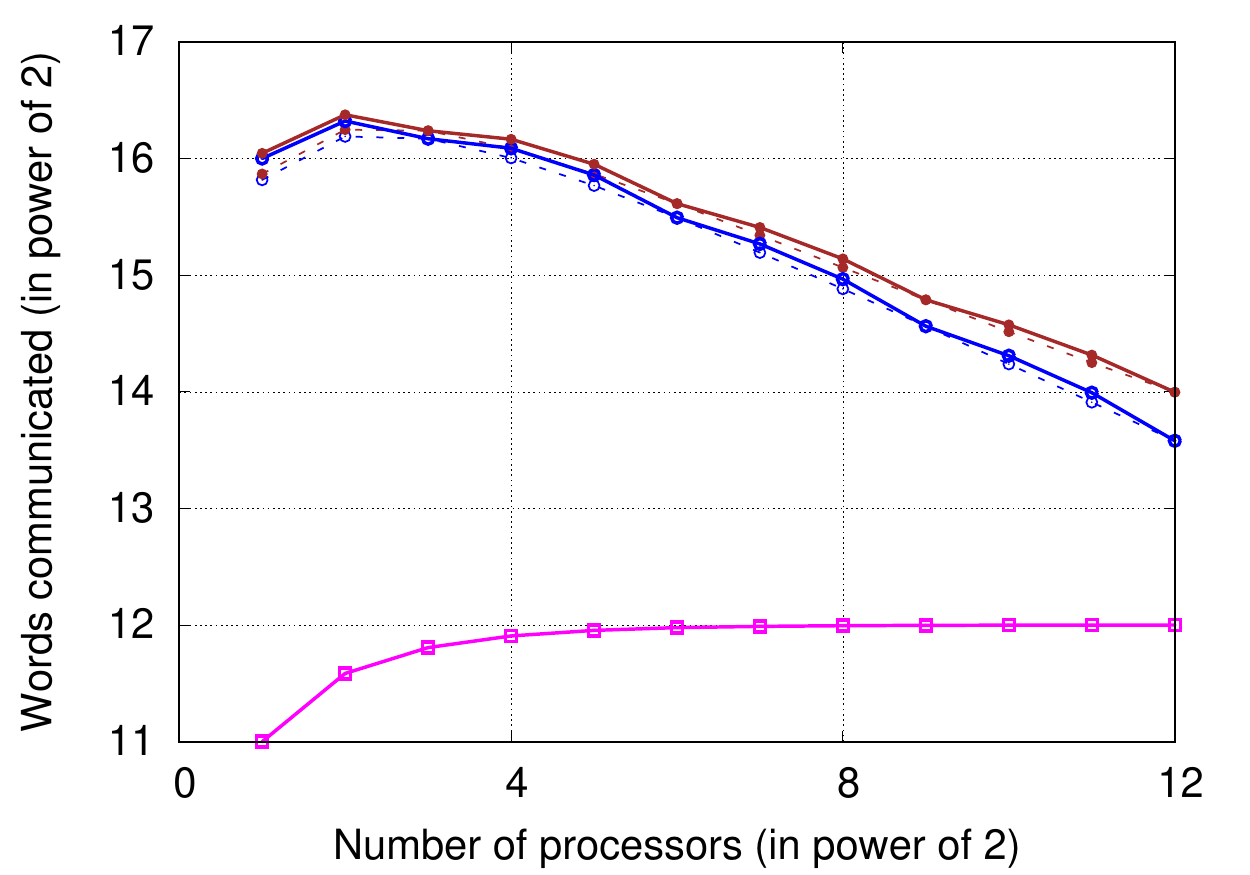}}\hfill	
		\subfloat[$n_1=n_2=n_3=2^{20}$, and $r_1=r_2=r_3=2^{8}$.\label{fig:lb:genpattern}]{\includegraphics[width=0.32\linewidth]{./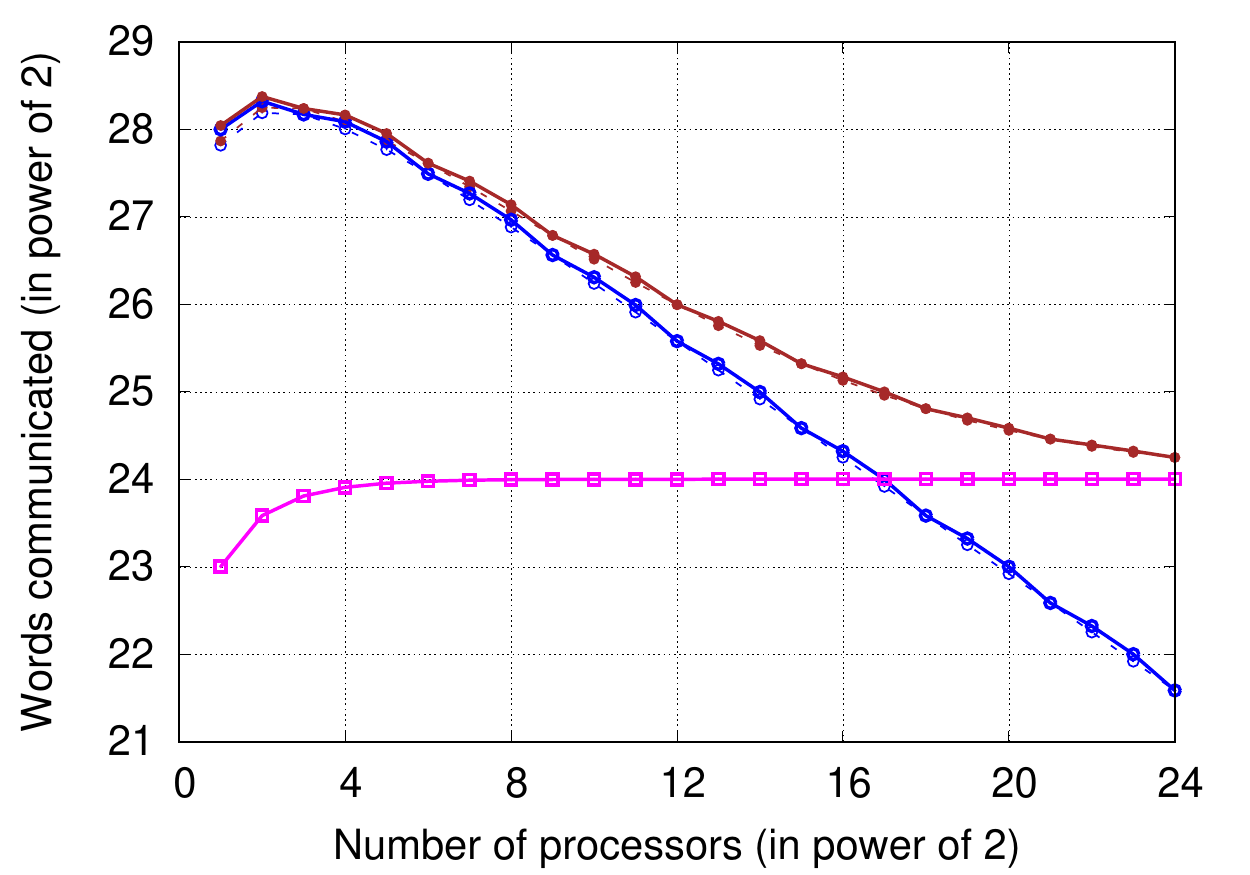}}
		\caption{Matrix and tensor communication costs in LB and \cref{alg:3dmultittm} for different configurations. The sum of $\lowerboundmatrix$ and $\lowerboundtensor$ equals to the lower bound ($\lowerbound$), and the sum of Alg. 5.1 (Matrix) and Alg. 5.1 (Tensor) equals to the upper bound (Alg. 5.1). Lower bounds are almost indistinguishable from the corresponding upper bounds.\label{fig:lb}
	 }
	\end{center}
\end{figure*}

Figure~\ref{fig:lb} shows both components, matrix and tensor communication costs, for three distinct input sizes as we vary the number of processors. 
In these plots, we show both algorithmic costs (upper bounds) and lower bounds, but they are indistinguishable because the largest differences in overall costs we observe are $9\%$ for \cref{fig:lb:allcases} at $P=2^{13}$ and $13\%$ for \Cref{fig:lb:matrixdominated,fig:lb:genpattern} at $P=2$, verifying \cref{theorem:lb:3DMultiTTM} for these scenarios.

In \cref{fig:lb:allcases}, the input and output tensors have varying dimensions: the input tensor is $2^{12}\times 2^{13}\times 2^{19}$ and the output is $2^{8}\times 2^{13}\times 2^{11}$.
We choose these dimensions so that all five cases of the values of $A$ and $B$ are represented.
For these inputs, the tensor communication cost dominates the matrix communication for all values of $P$ considered. 
When $P<2^4$, the first cases for $A$ and $B$ apply, and the algorithm selects a processor grid such that $p_3=P$, implying that only one tensor and two matrices are communicated.
In this case, both expressions simplify to $(r+n_1r_1+n_2r_2)(1-1/P)$, which is why we see initial increase as $P$ increases at the left end of the plot.
For $2^4\leq P < 2^{12}$, the second case for $A$ and the first case for $B$ apply, and the algorithm selects a processor grid with $p_2>1$ and $p_3>1$.
Here, the matrix communication begins to decrease, but it is dominated by the tensor communication, which is maintained at $r(1-1/P)$.
For $2^{12}\leq P$, the second case for $B$ applies, and we see that tensor communication decreases as $P$ increases (proportional to $P^{-1/2}$ as we see from the lower bound).
In this regime, the algorithm is selecting grids with both $p>1$ and $q>1$ and communicating both tensors.
Another transition occurs at $P=2^{16}$, switching from the second to third case of $A$, but this change in matrix cost has a negligible effect.

\Cref{fig:lb:matrixdominated} demonstrates a scenario where the matrix costs dominate the tensor costs: the input tensor is cubical with dimension $2^{12}$ and the output tensor is cubical with dimension $2^4$.
Here we scale $P$ only up to $2^{12}$, the number of entries in the output tensor.
Because the tensors are cubical, the lower bounds simplify as in \cref{corollary:lb:3DMultiTTM:cubicaltensors}, and the algorithm chooses processor grids that are as cubical as possible.
For all values of $P$ in this experiment, the third case of $A$ and the first case of $B$ apply, and the algorithm selects $p_1\approx p_2\approx p_3$ and $q=1$.
We see that the overall cost is deceasing proportional to $P^{-1/3}$ until the tensor communication cost starts to contribute more significantly.

\Cref{fig:lb:genpattern} considers cubical tensors with larger dimensions to show a more general pattern.
For tensor dimensions $n_i=2^{20}$ and $r_i=2^8$, we observe a transition point where tensor communication overtakes matrix communication.
Similar to the case of \Cref{fig:lb:matrixdominated}, matrix costs dominate for small $P$ and scale like $P^{-1/3}$.
However, for $P\geq 2^{17}$, the tensor costs dominate the matrix costs and communication costs scale less efficiently as the first case of $B$ applies.
We emphasize that for all three of these experiments, the algorithmic costs match the lower bounds nearly exactly for all values of $P$.

\subsection{Comparing \Cref{alg:3dmultittm} with TTM-in-Sequence}
\label{sec:exp-comparison}

As mentioned previously, a Multi-TTM computation may be performed as sequence of TTM operations.
In this TTM-in-Sequence approach, a single matrix is multiplied with the tensor and an intermediate tensor is computed and stored.
For each remaining matrix, single-matrix TTMs are performed in sequence until the final result is computed.
This approach can reduce the number of arithmetic operations compared to direct evaluation of atomic expression given in \cref{def:amttm}.
The computational cost depends (often significantly) on the order of the TTMs performed.
The TTM-in-Sequence approach is parallelized in the TuckerMPI library \cite{Ballard:TuckerMPI:TOMS20}.
We note that \cref{theorem:lb:3DMultiTTM} does not apply to this parallelization, as it violates the parallel atomicity assumption.

In this section, we provide a comparison between \cref{alg:3dmultittm} and the TTM-in-Sequence approach to show that our approach can significantly reduce communication in important scenarios without performing too much extra computation.
In particular, we observe greatest benefit of \cref{alg:3dmultittm} when $r$ is very small relative to $n$ (or vice versa) and $P$ is small relative to the ratio of $n$ and $r$.
These scenarios occur in the context of computing and using Tucker decompositions for highly compressible tensors that exhibit small multilinear ranks.

The computational cost of TuckerMPI's algorithm with cubical tensors is the same for all possible orderings of the TTMs. In our comparison, we consider that the TTMs are performed in increasing mode order.
While no single communication lower bound exists for all parallel TTM-in-Sequence algorithms, we show in \cref{sec:exp:lbTTM-in-Sequence} that TuckerMPI's algorithm attains nearly the same cost as tight matrix multiplication lower bounds \cite{ABGKR22} applied to each TTM it chooses to perform. 
Thus, no other parallelization of the TTM-in-Sequence approach can reduce communication without breaking the assumptions of the matrix multiplication lower bounds (e.g., using fast matrix multiplication).

The TuckerMPI parallelization uses a 3D logical processor grid with dimensions $\tilde{p_1}\times \tilde{p_2}\times \tilde{p_3}$.
When the TTMs are performed in increasing mode order, the overall communication cost of their algorithm is 
\begin{align}
\label{eq:TuckerMPI-cost}
\frac{r_1n_2n_3}{\tilde{p_2}\tilde{p_3}} + \frac{n_1r_1}{\tilde{p_1}} + \frac{r_1r_2n_3}{\tilde{p_1}\tilde{p_3}} +\frac{n_2r_2}{\tilde{p_2}} + \frac{r_1r_2r_3}{\tilde{p_1}\tilde{p_2}} + \frac{n_3r_3}{\tilde{p_3}} \phantom{\frac{r_1r_2n_3}{\tilde{p_1}\tilde{p_3}}}\qquad\qquad \\ \phantom{\frac{r_1r_2n_3}{\tilde{p_1}\tilde{p_3}}}\qquad\qquad - \frac{r_1n_2n_3+r_1r_2n_3+r_1r_2r_3 + n_1r_1 +n_2r_2+n_3r_3}{P}, \notag
\end{align}
as specified in \cite[Section 6.3]{Ballard:TuckerMPI:TOMS20}, though we include the cost of communicating the matrices (their analysis assumes the matrices are already redundantly distributed).
We use exhaustive search to determine the processor grid that minimizes the cost of \cref{eq:TuckerMPI-cost} in our comparisons.

\begin{figure*}
	\begin{center}
		\includegraphics[scale=0.195]{./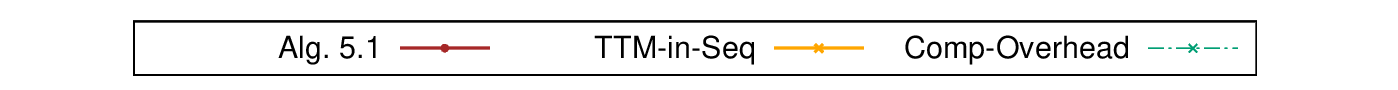}
	\end{center}
	\vspace*{-0.375cm}\begin{center}
		\subfloat[$n_i=2^{12}$, $r_i=2^{4}$.\label{fig:commcostcomparison:12-4}]{\includegraphics[width=0.31\linewidth]{./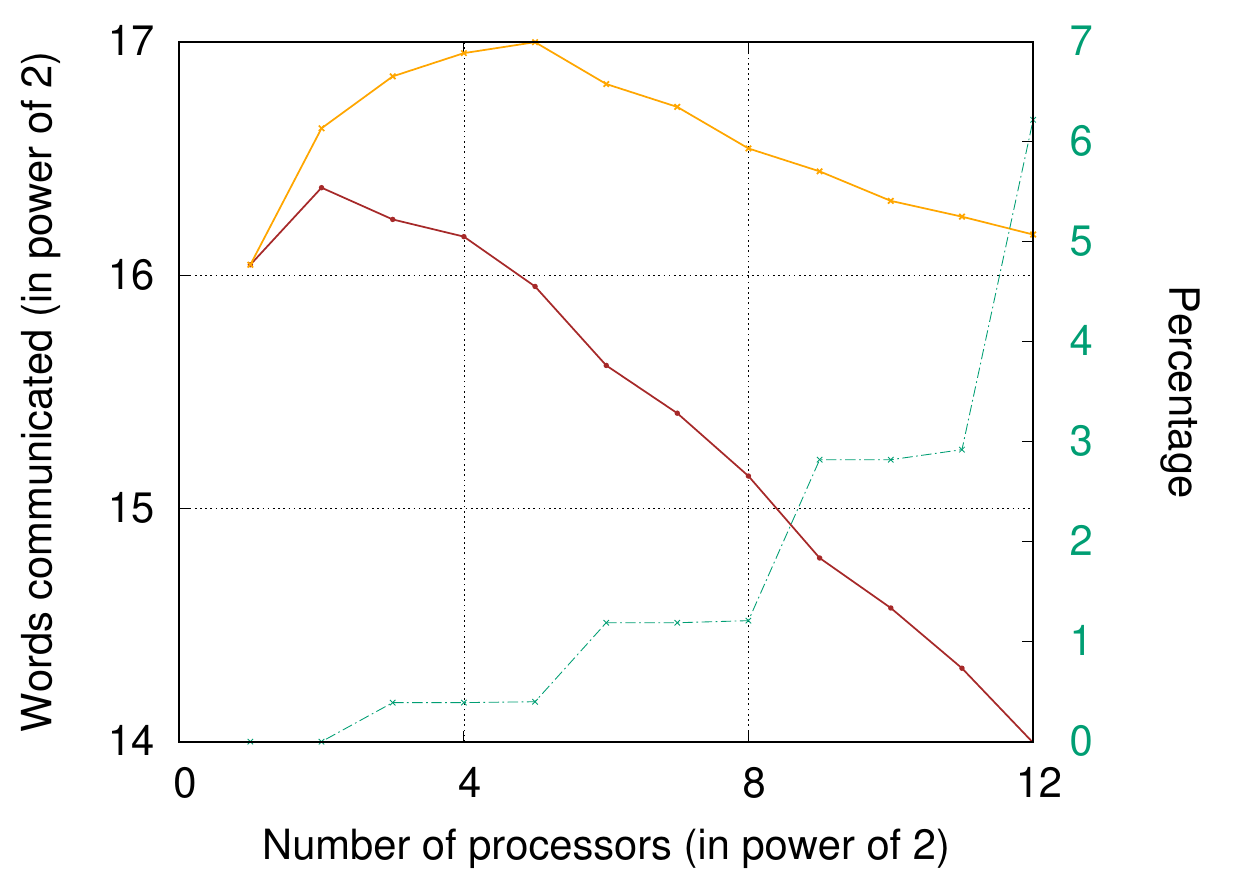}}\hfill
		\subfloat[$n_i=2^{13}$, $r_i=2^{6}$.\label{fig:commcostcomparison:13-6}]{\includegraphics[width=0.31\linewidth]{./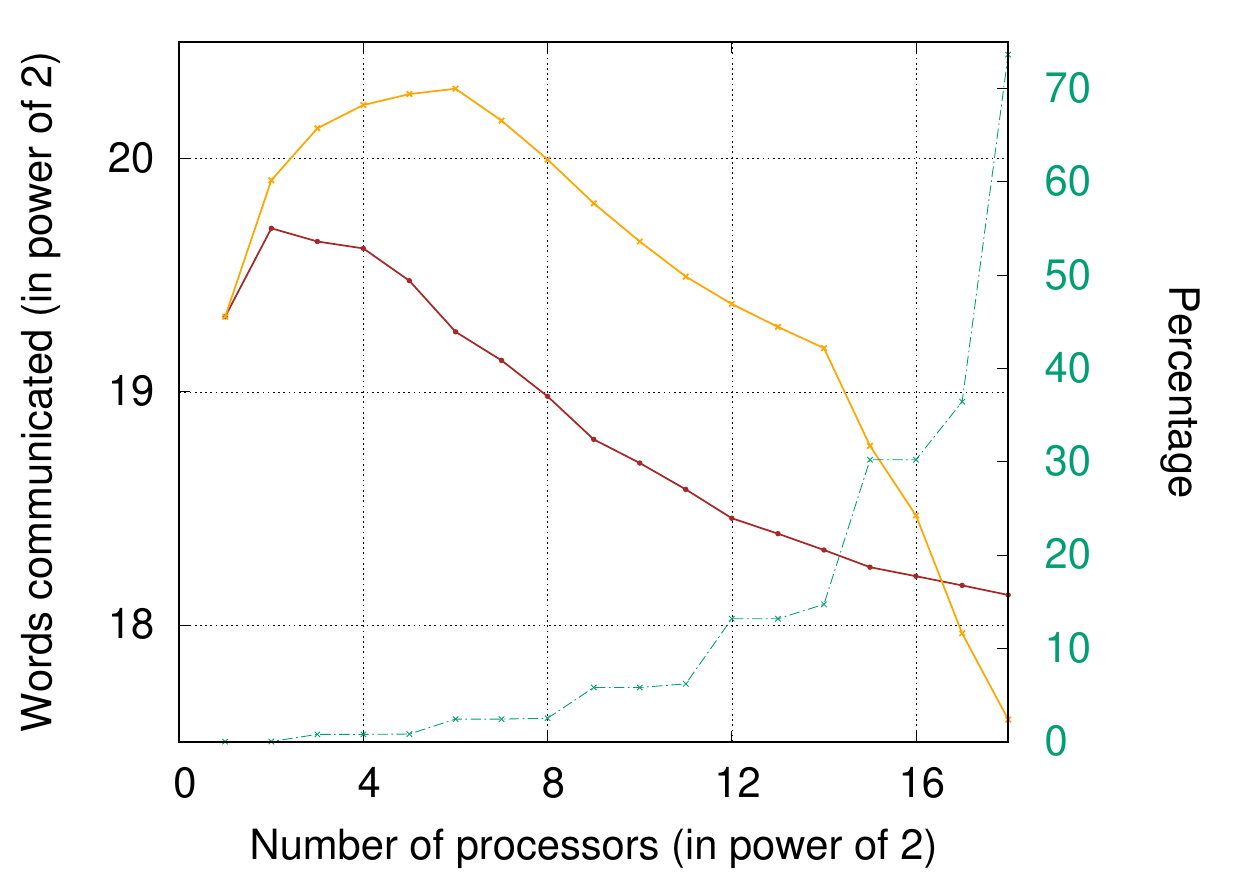}}\hfill
		\subfloat[$n_i=2^{20}$, $r_i=2^{8}$.\label{fig:commcostcomparison:25-10}]{	\includegraphics[width=0.31\linewidth]{./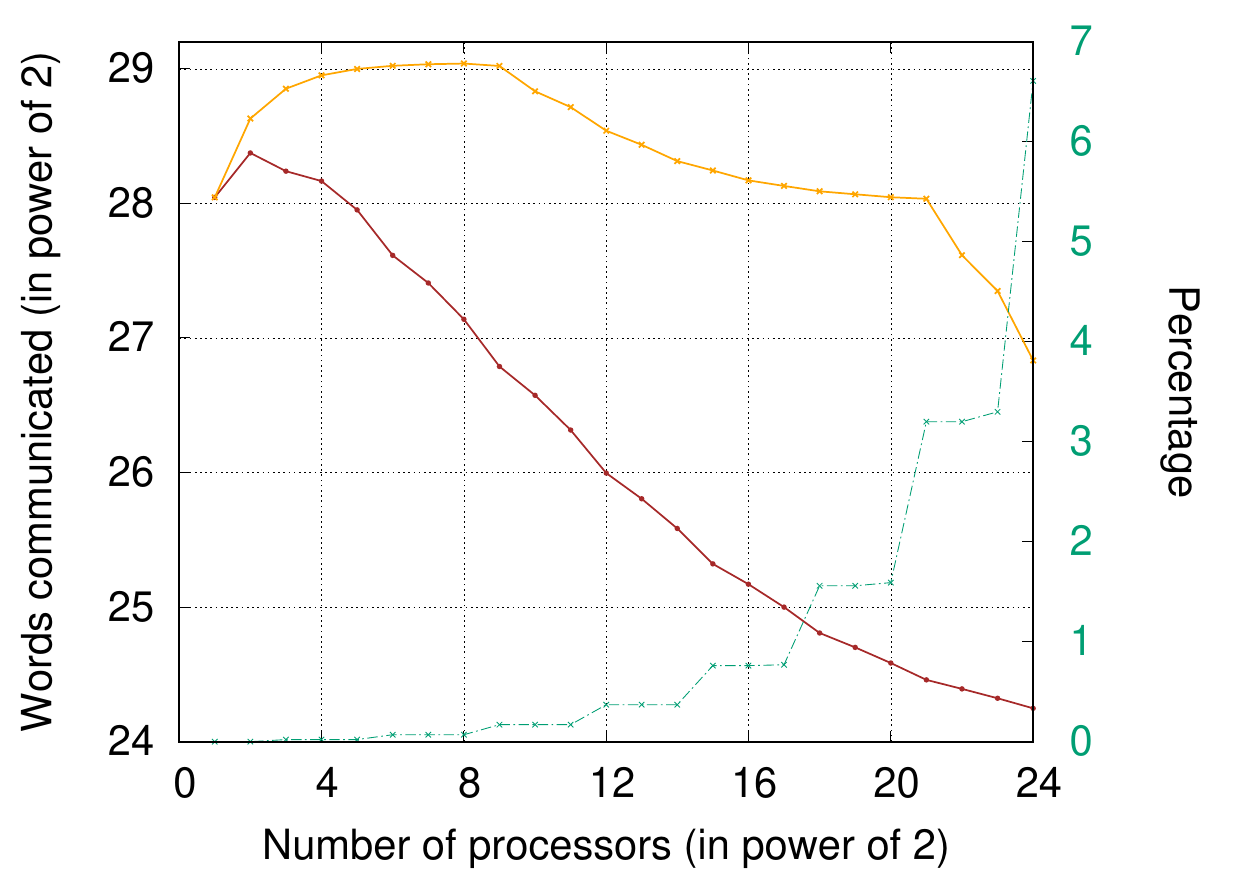}}
		\caption{Communication cost comparison of \cref{alg:3dmultittm} and TTM-in-Sequence \cite{Ballard:TuckerMPI:TOMS20}. $Comp-Overhead$ shows the percentage of computational overhead of \cref{alg:3dmultittm} with respect to the TTM-in-Sequence approach. \label{fig:commcostcomparison}
	}
		\label{fig:comp}
	\end{center}
\end{figure*}

\begin{figure}
	\begin{center}
		\includegraphics[scale=0.56]{./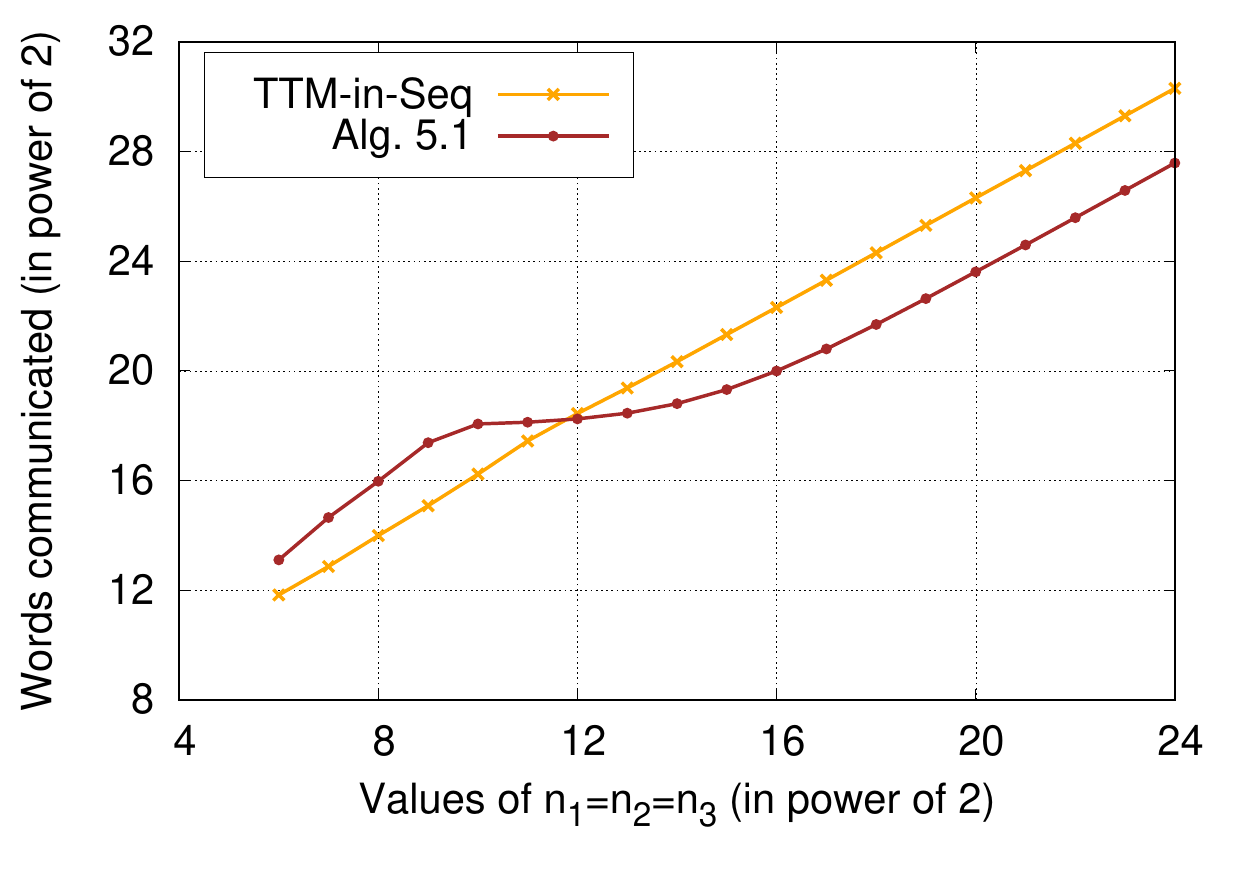}
		\vspace*{-0.215cm}\caption{Comparison of \cref{alg:3dmultittm} and the TTM-in-Sequence approach for fixed $r_1=r_2=r_3=2^6$ and $P=2^{12}$.\label{fig:commcostcomparison:riPfixed}}
	\end{center}
\end{figure}

\subsubsection{Communication Cost}

To compare communications costs, we perform 4 experiments involving cubical tensors.
The first three simulated evaluations consider strong scaling and are presented in \cref{fig:comp}.
Two of these experiments use the same tensor dimensions as the two cubical examples in \cref{fig:lb}.
The first experiment involves an input tensor of dimension $n_i=2^{12}$ and output dimension $r_i=2^4$ (\cref{fig:commcostcomparison:12-4}), the second has dimensions $n_i=2^{13}$ and $r_i=2^6$ (\cref{fig:commcostcomparison:13-6}), and the third has the largest dimensions $n_i=2^{20}$ and $r_i=2^8$ (\cref{fig:commcostcomparison:25-10}).

\Cref{fig:commcostcomparison:12-4} shows that \cref{alg:3dmultittm} performs less communication than TTM-in-Sequence for $P\leq 2^{12}<n/r$.
The largest communication reduction occurs at $P=2^{12}$ and is approximately $5\times$.
In the second experiment, we see cases where TTM-in-Sequence performs less communication than \cref{alg:3dmultittm} and in fact beats the lower bound of \cref{theorem:lb:3DMultiTTM} (which is possible because it breaks the atomicity assumption).
\Cref{alg:3dmultittm} is more communication efficient for $P\leq 2^{16}$, achieving a speedup of up to $2\times$, but communicates more for larger $P$.
In the third experiment with larger tensors, \cref{fig:commcostcomparison:25-10} demonstrates similar qualitative behavior to the first, with \cref{alg:3dmultittm} outperforming TTM-in-Sequence and a maximum communication reduction of approximately $12\times$ at $P=2^{21}$.

In the fourth experiment, with results shown in \cref{fig:commcostcomparison:riPfixed}, we fix the output tensor dimension $r_i=2^6$ and number of processors $P=2^{12}$ and vary the input tensor dimension $n_i$.
We observe that for $2^6\leq n_i< 2^{12}$, the TTM-in-Sequence approach communicates less data than \cref{alg:3dmultittm}.
For $n_i \geq 2^{12}$, \cref{alg:3dmultittm} communicates less data, and the factor of improvement is maintained at approximately $6\times$ as $n_i$ scales up.

\subsubsection{Computation Cost}
\label{sec:computation-cost-comparison}

Assuming TuckerMPI uses increasing mode order, the parallel computational cost is
$$2 \cdot \frac{r_1n_1n_2n_3+r_1r_2n_2n_3+r_1r_2r_3n_3}{P}=2\left(\frac{r^{1/3}n}{P}+\frac{r^{2/3}n^{2/3}}{P}+\frac{rn^{1/3}}{P}\right),$$
where the right hand side is simplified under the assumption of cubical tensors.
In these experiments where $n\gg r$, \cref{alg:3dmultittm} selects a processor grid such that $q=1$ and $p_1\approx p_2\approx p_3$.
In this case the computation cost given in \cref{sec:cost-3d} simplifies to
$$2\left(\frac{r^{1/3}n}{P}+\frac{r^{2/3}n^{2/3}}{P^{2/3}}+\frac{rn^{1/3}}{P^{1/3}}\right).$$
Note that this cost is much smaller than $4nr/P$, the cost of evaluating \cref{eq:ourMultiTTMDef} directly with computational load balance, and it is achieved by performing local computation using a TTM-in-Sequence approach.

While the first terms of the two computational cost expressions match, we observe greater computational cost from \cref{alg:3dmultittm} in the second and third terms.
\response{These terms are lower order when $P\ll n/r$, in which case the extra computational cost of \cref{alg:3dmultittm} is negligible. This is also validated by \cref{fig:commcostcomparison} for the first three experiments.
When $P=n/r$, the extra computational cost is no more than $2\times$.}

In the first three experiments, when our approach reduces communication, the extra computational costs were at most $6\%$, $30\%$, and $7\%$, respectively.
The extra computation required for the greatest reductions in communication in those experiments were $6\%$, $2\%$, and $7\%$.
For the fourth experiment, the extra computation is approximately $13\%$ at $n_i=2^{13}$, where \cref{alg:3dmultittm} provides communication reduction, and decreases as $n_i$ increases. 

In all these experiments, we see that when \cref{alg:3dmultittm} provides a reduction in communication costs, the extra computational costs remain negligible.

\subsection{Details for Evaluation of Our Algorithm}
Here we provide more details for the simulated evaluation of our algorithm and its comparison to the TTM-in-Sequence approach.
The analysis of the communication optimality of \cref{alg:3dmultittm} did not consider integrality constraints on the processor grid dimensions.
The simulated evaluation in the previous subsection considered all possible processor grid configurations using exhaustive search; we explain in \cref{sec:exp:configurations} a more efficient process for determining an optimal grid when $P$ is a power of two.
In the previous subsection, we also compare \cref{alg:3dmultittm} against an implementation of the TTM-in-Sequence approach as implemented by TuckerMPI \cite{Ballard:TuckerMPI:TOMS20}.
We argue in \cref{sec:exp:lbTTM-in-Sequence} that this implementation is nearly communication optimal given the computation that it performs, validating our comparison against it.
\Cref{fig:commcostcomp-bestvfast} presents results relevant to both \cref{sec:exp:configurations,sec:exp:lbTTM-in-Sequence}.

\subsubsection{Obtaining Integral Processor Grids for \cref{alg:3dmultittm}}
\label{sec:exp:configurations}

In order to determine the communication cost of \cref{alg:3dmultittm}, one must determine the processor grid.
Obtaining $p_i$ and $q_i$ from the procedure in \cref{sec:3dUpperBounds} may yield non-integer values.
The following procedure allows us to convert these to integers under our assumption that all parameters are powers of $2$.
Recall that we consider $P=pq$ with $p=p_1p_2p_3$ and $q=q_1q_2q_3$.

If $\lfloor \log_2(p)+0.5\rfloor = \lfloor \log_2(p)\rfloor$, then we set $p=2^{\lfloor\log_2(p)\rfloor}$, otherwise we set $p=2^{\lceil\log_2(p)\rceil}$, distributing the modification evenly between $p_1, p_2,$ and $p_3$.
Now, we keep $p=p_1p_2p_3$ constant, and convert each $p_i$ to an integer.
We set $p_1=2^{\lfloor\log_2(p_1)+0.5\rfloor}$ distributing the changes evenly among $p_2$ and $p_3$.
To see that our new value of $p_1$ must still be smaller than $n_1$, we note that our original $p_1$ was less than $n_1$ which is a power of 2 by our assumption.
If we increased $p$ in our first step, then distributing the modifications evenly between $p_1, p_2$ and $p_3$ increased them by at most $2^{1/6}$.
Thus $p_1\leq n_1$ will imply that $\lfloor\log_2(p_1 \cdot 2^{1/6})+0.5\rfloor\leq \log_2(n_1)$. Note that this most recent modification to $p_1$ changes $p_2$ and $p_3$. Then, we set $p_2=2^{\lfloor\log_2(p_2)+0.5\rfloor}$ and adapt $p_3$ accordingly. 
A similar argument to what is used for $p_1$ will show that $p_2$ and $p_3$ are also not larger than their corresponding dimensions. 
Having completed our work on the processor dimensions associated with the first tensor, we set $q=\frac{P}{p}$ distributing the changes evenly among the $q_i$, then force each $q_i$ to be an integer following the same procedure as for the $p_i$. 

We denote the communication cost of \cref{alg:3dmultittm} for the grid determined using this method by \lbbasedpartition and the communication cost using exhaustive search by \bestconfigAAO. 
We note that this procedure can increase the total number of accessed elements of any variable at most $4$ times, but we see in \cref{fig:commcostcomp-bestvfast} that the communication costs of both procedures are exactly the same for the examples we consider.
These problems match those presented in \cref{fig:commcostcomparison}.

\begin{figure*}
	\begin{center}
		$\quad$\includegraphics[scale=0.235]{./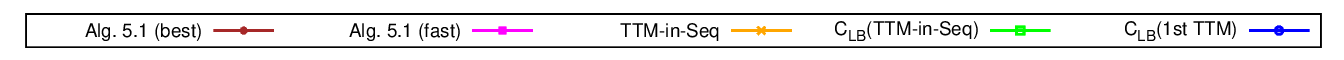}
	\end{center}
	\vspace*{-0.425cm}\begin{center}
		\subfloat[$n_i=2^{12}$, $r_i=2^{4}$.\label{fig:commcostcomp-bestvfast:12-4}]{\includegraphics[width=0.31\linewidth]{./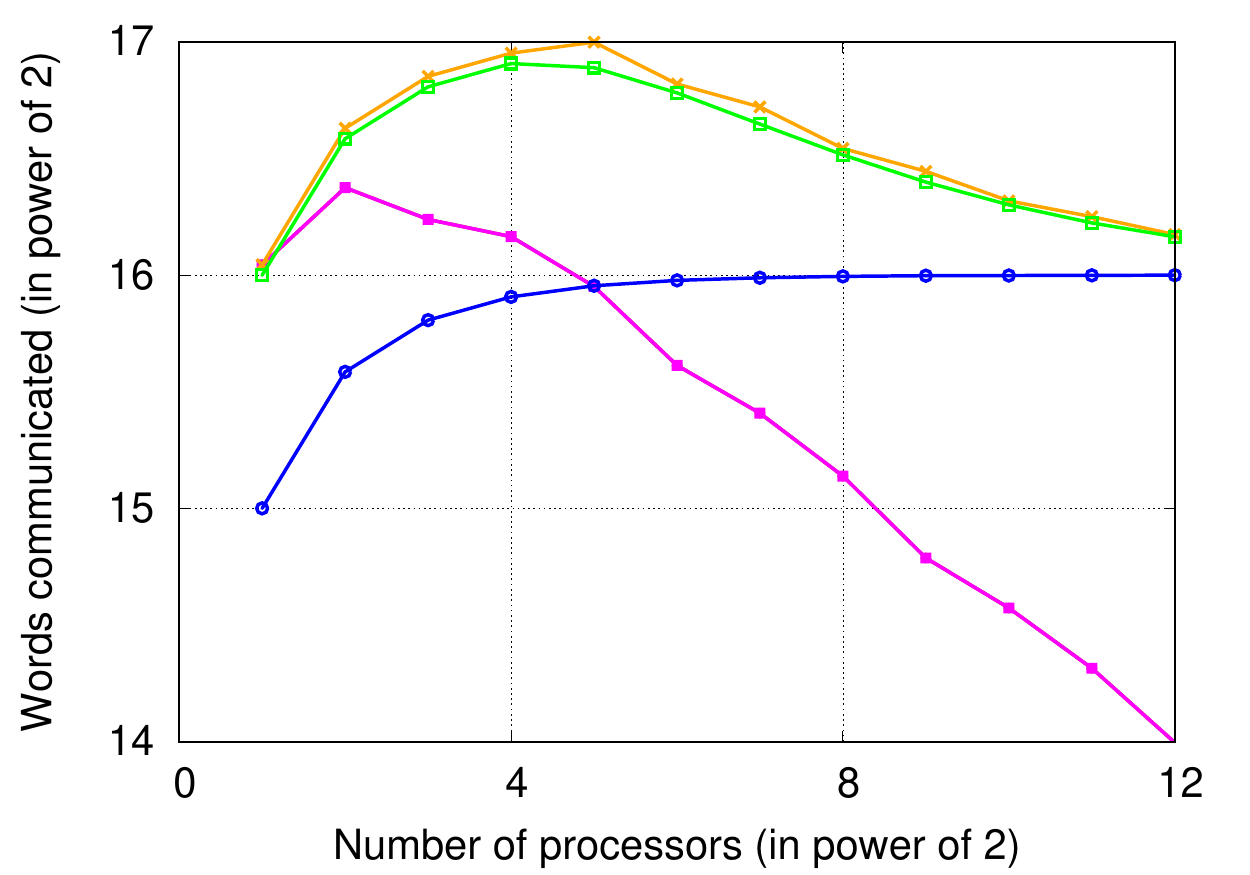}}\hfill
		\subfloat[$n_i=2^{13}$, $r_i=2^{6}$.\label{fig:commcostcomp-bestvfast:13-6}]{\includegraphics[width=0.31\linewidth]{./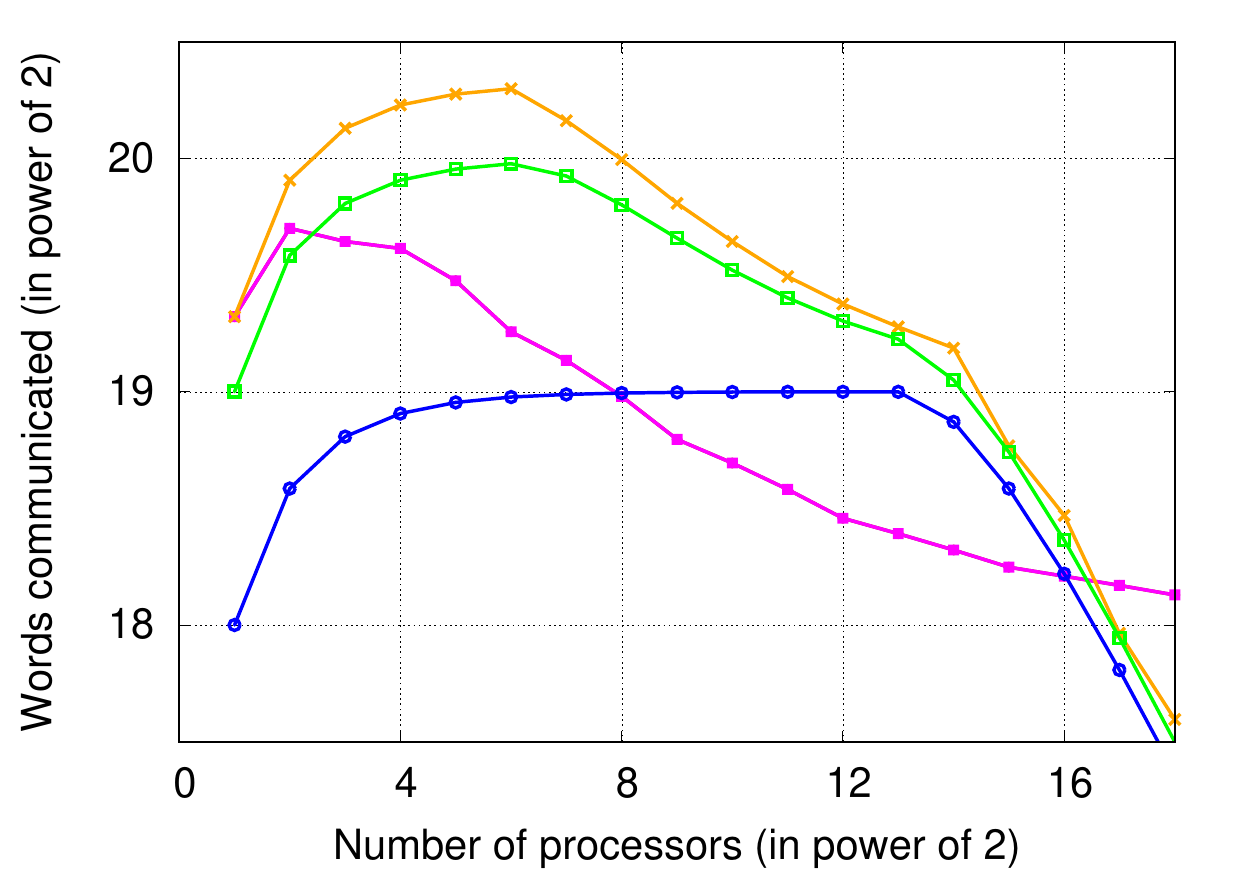}}\hfill
		\subfloat[$n_i=2^{20}$, $r_i=2^{8}$.\label{fig:commcostcomp-bestvfast:20-8}]{	\includegraphics[width=0.31\linewidth]{./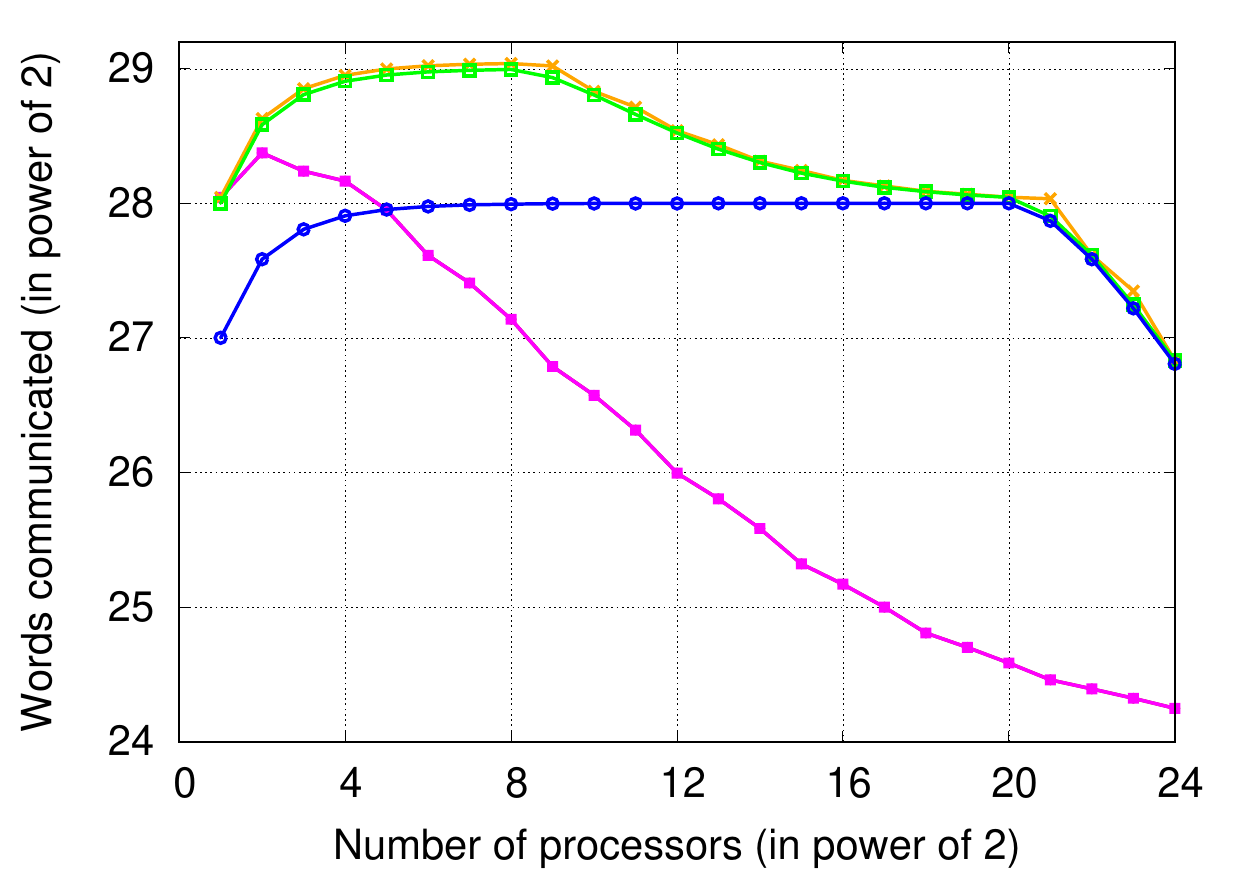}}
		\caption{Communication cost comparison of \cref{alg:3dmultittm} using best processor grid against fast method and of the \emph{TTM-in-Sequence} approach implemented by TuckerMPI against the lower bounds. \lbbasedpartition and \bestconfigAAO are the same for all the configurations.}
		\label{fig:commcostcomp-bestvfast}
	\end{center}
\end{figure*}

\subsubsection{TTM-in-Sequence Lower Bounds}
\label{sec:exp:lbTTM-in-Sequence}

Here we discuss communication low\-er bounds for the TTM-in-Sequence approach with cubical tensors. 
There has not been any proven bound for this approach other than individual bounds for each TTM (a single matrix multiply) computation, assuming the sequence of TTMs has been specified. 
The sum of individual bounds provides a communication lower bound for this approach. 
We obtain the tightest (and obtainable) lower bound for each TTM from~\cite{ABGKR22-report}, which depends on the relative matrix dimensions and number of processors, and represent the sum by \lbSeq. 
We also note that \lbSeq may not be always attainable as data distributions for two successive TTMs may be non-compatible and require extra communication. 
When the input tensor dimensions are much larger than the output tensor dimensions, most of the computation and communication occur in the first TTM, so we also consider the communication lower bound of only that matrix multiplication, which also provides a valid lower bound for the entire TTM-in-Sequence computation.
Recall that we obtain the algorithmic cost of TTM-in-Sequence by exhaustively searching for the best processor grid configuration given the communication costs specified by \cref{eq:TuckerMPI-cost}.
\Cref{fig:commcostcomp-bestvfast} shows a comparison of \bestconfigSeq and \lbSeq for the tensor dimensions presented in \cref{fig:commcostcomparison}. 
We can see that the communication costs of \bestconfigSeq are very close to \lbSeq,
the largest differences are $7.9\%$ for \cref{fig:commcostcomp-bestvfast:12-4} at $P=2^5$, $25\%$ for \cref{fig:commcostcomp-bestvfast:13-6} at $P=2^6$, and $9.3\%$ for \cref{fig:commcostcomp-bestvfast:20-8} at $P=2^{21}$.
Comparing \lbFirstTTM and \lbSeq, we see that for these examples at least half the communication of the entire TTM-in-Sequence is required by the first TTM, and it is completely dominated by the first TTM when $P$ is large.

\section{Lower Bounds of General Multi-TTM}
\label{sec:genLowerBounds}
We present our lower bound results for $d$-dimensional tensors in this section.
Similar to the 3-dimensional lower bound proof, \response{we consider a single processor that performs $1/P$th of the computation and owns at most $1/P$th of the data.
We again seek to minimize the number of elements of the matrices and tensors that the processor must access or partially compute in order to execute its computation subject to the constraints of the structure of Multi-TTM by solving two independent problems, one for the matrix data and one for the tensor data.}

\subsection{General Constrained Optimization Problems}
\label{sec:lb:genOptimizationProblem}

Here we present a generalization of \cref{lemma:matrixOptimalSolutions} for $d$ dimensions. As before, this corollary is a direct result of \cref{lem:mttkrpOpt}.
Recall the notation $N_i=\prod_{j=d-i+1}^dn_j$ and $R_i = \prod_{j=d-i+1}^d r_i$.

\begin{corollary}\label{lemma:genMatrixOptimalSolutions}
	Consider the following optimization problem:
	$$\min_{\V{x}} \sum_{i \in [d]} x_i$$
 such that 
$$\frac{nr}{P} \le \prod_{i\in[d]}x_i  \quad\text{and}\quad 0 \le x_i \le n_ir_i \quad \text{for all}\quad 1\leq i\leq d, $$
where $n_i, r_i,P \geq 1$ and $n_ir_i \le n_{i+1}r_{i+1}$. 
	The optimal solution $\mathbf{x} = [\starontop{x_1}$ $\cdots$ $\starontop{x_d}]$ depends on the values of constants, yielding $d$ cases.
	\smallskip
	\begin{center}
		\begin{tikzpicture}[scale=0.495, every node/.style={transform shape}]
		\draw [thick] (-0.1,0) -- (10.5,0);
		\draw [thick, dashed] (10.5,0) -- (14.5,0);
		\draw [->, thick] (14.5,0) -- (25,0) node [below right,scale=1.6] {$P$};
		
		\draw (0, 0.1) -- node [below, pastelred, scale=1.6]{$1$}(0,-0.1);
		\draw (5, 0.1) -- node [below, pastelred, scale=1.6]{$\frac{N_1R_1}{n_{d-1}r_{d-1}}$}(5,-0.1);
		\draw (10, 0.1) -- node [below, pastelred, scale=1.6] {$\frac{N_2R_2}{(n_{d-2}r_{d-2})^2}$}(10,-0.1);
		\draw (15, 0.1) -- node [below, pastelred, scale=1.6] {$\frac{N_{d-2}R_{d-2}}{(n_2r_2)^{d-2}}$}(15,-0.1);
		\draw (20, 0.1) -- node [below, pastelred, scale=1.6] {$\frac{N_{d-1}R_{d-1}}{(n_1r_1)^{d-1}}$}(20,-0.1);
		
		\node[align=left,below, scale=1.5] at (2.5, -0.4) { $\starontop{x_1}=n_1r_1$\\ $\qquad\vdots$\\ $\starontop{x_{d-1}}= n_{d-1}r_{d-1}$\\$\starontop{x_d} =\frac{N_1R_1}{P}$};
		\node[align=left,below, scale=1.5] at (8.15, -0.75) {$\starontop{x_1}=n_1r_1$\\ $\qquad\vdots$\\ $\starontop{x_{d-2}}= n_{d-2}r_{d-2}$\\$\starontop{x_{d-1}} = \starontop{x_d}$=\\ $\quad\left(\frac{N_2R_2}{P}\right)^{1/2}$};
		\node[align=center,below, scale=1.5] at (16.5, -0.75) {$\starontop{x_1}=n_1r_1$\\$\starontop{x_2}= \cdots= \starontop{x_d}=$\\ $\qquad\quad \big(\frac{N_{d-1}R_{d-1}}{P}\big)^\frac{1}{d-1}$};
		\node[align=center,below, scale=1.5] at (22.5, -1.25) {$\starontop{x_1}= \cdots = \starontop{x_d}$=\\ $\qquad\quad \big(\frac{N_dR_d}{P}\big)^{1/d}$};	
		\end{tikzpicture}
	\end{center}
	
	\begin{itemize}[label=$\bullet$]
		\item If $P <\frac{N_1R_1}{n_{d-1}r_{d-1}}$, then 
		$$\starontop{x_j} = n_jr_j \quad\text{for}\quad 1\le j\le d-1 \quad\text{and}\quad \starontop{x_d}=\frac{N_1R_1}{P}.$$
		\item If $\frac{N_{i-1}R_{i-1}}{(n_{d+1-i}r_{d+1-i})^{i-1}} \leq P < \frac{N_iR_i}{(n_{d-i}r_{d-i})^i}$ for some $i=2,\cdots, d-1$, then
		$$\starontop{x_j} = n_jr_j \quad\text{for}\quad  1\le j\le d-i \quad\text{and}\quad \starontop{x_{d+1-i}}=\cdots=\starontop{x_d} =\left(N_iR_i/P\right)^{1/i}.$$
		\item If $\frac{N_{d-1}R_{d-1}}{(n_1r_1)^{d-1}} \leq P$, then 
		$$\starontop{x_1}=\cdots=\starontop{x_d} = \left(N_dR_d/P\right)^{1/d}.$$
	\end{itemize}
\end{corollary}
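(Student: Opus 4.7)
The plan is to apply \cref{lem:mttkrpOpt} directly, taking $k_j = n_j r_j$ for each $j \in [d]$. The required hypotheses of \cref{lem:mttkrpOpt} are immediate from those of the corollary: the ordering $k_1 \le k_2 \le \cdots \le k_d$ is exactly the assumption $n_j r_j \le n_{j+1} r_{j+1}$, and $\prod_{j\in[d]} k_j = \prod_{j\in[d]} n_j r_j = nr$ by the definition of $n$ and $r$. The constraint $\tfrac{nr}{P} \le \prod_{j\in[d]} x_j$ together with $0 \le x_j \le n_j r_j$ then matches the optimization problem in \cref{lem:mttkrpOpt} verbatim.

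Next I would translate the auxiliary quantities $K_I$ and $L_j$ from \cref{lem:mttkrpOpt} into the $N_i, R_i$ notation used in the corollary. By definition,
$$K_I \;=\; \prod_{j=d-I+1}^d k_j \;=\; \prod_{j=d-I+1}^d n_j r_j \;=\; N_I R_I,$$
and the threshold becomes $L_j = K_j/(k_{d-j+1})^j = \tfrac{N_j R_j}{(n_{d-j+1} r_{d-j+1})^j}$. A short check of the first few values shows $L_1 = n_d r_d/(n_d r_d) = 1$, $L_2 = \tfrac{N_1 R_1}{n_{d-1} r_{d-1}}$, $L_3 = \tfrac{N_2 R_2}{(n_{d-2} r_{d-2})^2}$, and in general the interval $L_i \le P < L_{i+1}$ from \cref{lem:mttkrpOpt} is precisely the interval $\tfrac{N_{i-1}R_{i-1}}{(n_{d+1-i}r_{d+1-i})^{i-1}} \le P < \tfrac{N_i R_i}{(n_{d-i}r_{d-i})^i}$ appearing in the corollary, and the terminal interval $L_d \le P$ becomes $\tfrac{N_{d-1}R_{d-1}}{(n_1 r_1)^{d-1}} \le P$.

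Finally, the optimizer given by \cref{lem:mttkrpOpt} is $x_j^* = k_j$ for $1 \le j \le d-I$ and $x_\ell^* = (K_I/P)^{1/I}$ for $d-I < \ell \le d$. Substituting $k_j = n_j r_j$ and $K_I = N_I R_I$ yields exactly the three cases listed: $I = 1$ produces the first bullet (with $\starontop{x_d} = N_1 R_1/P$), $I = i$ for $2 \le i \le d-1$ produces the middle cases, and $I = d$ gives the fully cubical solution $\starontop{x_1} = \cdots = \starontop{x_d} = (N_d R_d/P)^{1/d}$. Since every quantity translates cleanly, there is no real obstacle in the argument; the only work is bookkeeping to align the indexing in $K_I$ and $L_j$ with the corollary's $N_i, R_i$ notation, after which the result is a direct specialization of \cref{lem:mttkrpOpt}.
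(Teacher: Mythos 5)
Your proposal is correct and matches the paper's approach exactly: the paper also obtains this corollary as a direct specialization of \cref{lem:mttkrpOpt} with $k_j = n_jr_j$, so that $K_I = N_IR_I$ and the thresholds $L_i = K_i/(k_{d-i+1})^i$ collapse to $\frac{N_{i-1}R_{i-1}}{(n_{d+1-i}r_{d+1-i})^{i-1}}$, giving the stated case boundaries. Your bookkeeping of the interval endpoints and of the optimizer in each case is accurate, so nothing is missing.
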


\subsection{Communication Lower Bounds}
\label{sec:lb:genLB}
We now present the lower bounds for the general Multi-TTM computation. We prove this by applying \Cref{lemma:genMatrixOptimalSolutions,lemma:tensorOptimalSolutions} and extending the arguments of \Cref{theorem:lb:3DMultiTTM} in a straightforward way (though with more complicated notation).


\begin{theorem}
	\label{theorem:lb:genMultiTTM}
	Any computationally load balanced atomic Multi-TTM algorithm that starts and ends with one copy of the data distributed across processors and involves $d$-dimensional tensors with dimensions $n_1, n_2, \ldots, n_d$ and $r_1, r_2, \ldots, r_d$ performs at least $A+B-\left(\frac{n}{P}+\frac{r}{P} +\sum_{j=1}^d \frac{n_jr_j}{P}\right)$ sends or receives where
	\begin{align*}
	A &= \begin{cases} \sum_{j=1}^{d\text{-}1}n_jr_j + \frac{N_1R_1}{P} &\text{ if } P<\frac{N_1R_1}{n_{d\text{-}1}r_{d\text{-}1}}\text,\\
	\sum_{j=1}^{(d\text{-}i)} n_jr_j + i\left(\frac{N_{i}R_{i}}{P}\right)^\frac{1}{i} &\text{ if } \frac{N_{i\text{-}1}R_{i\text{-}1}}{(n_{d+1\text{-}i}r_{d+1\text{-}i})^{i\text{-}1}} \leq P < \frac{N_{i}R_{i}}{(n_{d\text{-}i}r_{d\text{-}i})^i}\text, \\
	& \hfill \text{for some } 2\leq i\leq d-1, \\
	d\left(\frac{N_dR_d}{P}\right)^\frac{1}{d} &\text{ if } \frac{N_{d\text{-}1}R_{d\text{-}1}}{(n_1r_1)^{d\text{-}1}} \leq P\text.
	\end{cases}\\
	B &= \begin{cases} r + \frac{n}{P} & \text{ if } P < \frac{n}{r}\text, \\ 2\left(\frac{nr}{P}\right)^{\frac{1}{2}} &\text{ if } \frac{n}{r} \leq P\text.\end{cases}
	\end{align*}
\end{theorem}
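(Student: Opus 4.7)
The plan is to generalize the proof of \cref{theorem:lb:3DMultiTTM} essentially unchanged, replacing each ingredient specialized to $d=3$ with its $d$-dimensional analogue that is already available in the excerpt. I would consider a processor $\pi$ that performs a $1/P$ fraction of the $(d+1)$-ary multiplications (which exists by the computational load balance assumption) and that initially owns at most a $1/P$ fraction of the total data (which must exist because there is exactly one copy of the data at the start, and hence also at the end). Let $F$ be the set of loop indices in $[n_1]\times\cdots\times[n_d]\times[r_1]\times\cdots\times[r_d]$ of the atomic multiplications executed by $\pi$, so $|F|=nr/P$, and let $\phi_\X(F),\phi_\Y(F),\phi_j(F)$ be the projections onto the index sets of $\X,\Y$, and $\Mn{A}{j}$ respectively.

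To apply \cref{lem:hbl} I would write down the $(2d)\times(d+2)$ projection matrix
$$\M{\Delta} = \begin{bmatrix} \M{I}_{d\times d} & \V{1}_d & \V{0}_d \\ \M{I}_{d\times d} & \V{0}_d & \V{1}_d\end{bmatrix},$$
whose first $d$ columns encode the matrices $\Mn{A}{j}$ (each depending on one ``$n$-index'' and one ``$r$-index'') and whose last two columns encode $\X$ and $\Y$. The polytope $\mathcal{C}$ of admissible exponent vectors is again not full-rank; the vectors $\V{s}$ with $\M{\Delta}\cdot \V{s}=\V{1}$ are precisely $\V{s}=[a\ \cdots\ a\ 1{-}a\ 1{-}a]^\Tra$ for $0\le a\le 1$. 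By the same argument used in the 3D case (extremes $a=0$ and $a=1$ give one direction, and a weighted product gives the converse), the HBL inequality is equivalent to the pair of decoupled inequalities
$$\frac{nr}{P}\le \prod_{j\in[d]} |\phi_j(F)| \qquad\text{and}\qquad \frac{nr}{P} \le |\phi_\X(F)|\cdot|\phi_\Y(F)|,$$
together with the trivial upper bounds $|\phi_j(F)|\le n_jr_j$, $|\phi_\X(F)|\le n$, $|\phi_\Y(F)|\le r$.

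The number of array elements that $\pi$ must access or partially compute is at least $\sum_{j\in[d]}|\phi_j(F)|+|\phi_\X(F)|+|\phi_\Y(F)|$, and because the two groups of constraints are disjoint I would minimize each sum separately. Setting $x_j=|\phi_j(F)|$ and applying \cref{lemma:genMatrixOptimalSolutions} yields the lower bound $A$ (the case split on $P$ matches exactly the four cases of that corollary, using the ordering $n_1r_1\le\cdots\le n_dr_d$ and the products $N_i,R_i$). Setting $u=|\phi_\Y(F)|$ and $v=|\phi_\X(F)|$ and applying \cref{lemma:tensorOptimalSolutions} (which is dimension-independent) yields $B$. Summing the two gives a lower bound of $A+B$ on the total number of array elements required by $\pi$; subtracting the at most $\frac{n}{P}+\frac{r}{P}+\sum_{j\in[d]}\frac{n_jr_j}{P}$ elements that $\pi$ can own completes the send/receive lower bound.

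I do not anticipate any real obstacle beyond bookkeeping, since every conceptual ingredient (the HBL lemma, the decoupling trick via the structure of $\mathcal{C}$, the two constrained optimization results, and the ``processor with little data'' selection) has already been established or explicitly stated in the paper. The main things that require care are verifying that the generalized $\M{\Delta}$ indeed forces $\V{s}$ of the claimed form, and checking that the three-case break used for $d=3$ lifts to the four-case break in \cref{lemma:genMatrixOptimalSolutions} so that the value of $A$ matches the theorem statement in every regime of $P$.
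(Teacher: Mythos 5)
Your proposal matches the paper's own proof essentially step for step: the same projection matrix $\M{\Delta}$, the same observation that vectors with $\M{\Delta}\V{s}=\V{1}$ have the form $[a\ \cdots\ a\ 1{-}a\ 1{-}a]^\Tra$, the same decoupling into matrix and tensor constraints solved via \cref{lemma:genMatrixOptimalSolutions} and \cref{lemma:tensorOptimalSolutions}, and the same subtraction of the at most $1/P$ fraction of data a suitably chosen processor owns. The only nitpick is bookkeeping: \cref{lemma:genMatrixOptimalSolutions} has $d$ cases (presented in three groups), not four, but this does not affect the argument.
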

\begin{proof}
	Let $F$ be the set of loop indices associated with the \mult{s} performed by a processor.
	As we assumed the algorithm is computationally load balanced, $|F| = nr/P$.
	\response{We define $\phi_{\X}(F)$, $\phi_{\Y}(F)$ and $\phi_j(F)$ to be the projections of $F$ onto the indices of the arrays $\X, \Y$, and $\Mn{A}{j}$ for $1\leq j\leq d$ which correspond to the elements of the arrays that must be accessed or partially computed by the processor.}
	
	\response{We use \Cref{lem:hbl} to obtain a lower bound on the number of array elements that must be accessed or partially computed by the processor.}
	The matrix corresponding to the projections above is given by
	$$\M{\Delta} = \begin{bmatrix}\M{I}_{d\times d} & \V{1}_d & \V{0}_d\\ \M{I}_{d\times d} & \V{0}_d & \V{1}_d \end{bmatrix}\text. $$
	Here $\V{1}_d$ and $\V{0}_d$ denote the $d$-dimensional vectors of all ones and zeros, respectively, and $\M{I}_{d\times d}$ denotes the $d\times d$ identity matrix. As before we define
	$$\mathcal{C} =  \big\{\V{s}=\response{[s_1\ \cdots \ s_{d+2}]^\Tra: 0\leq s_i \leq 1 \text{ for } i=1,2,\cdots,d+2 \text{ and }} \M{\Delta}\cdot\V{s}\ge\V{1}\big\}\text.$$
	
	We recall that $\V{1}$ represents a vector of all ones. As in the proof of \cref{theorem:lb:3DMultiTTM}, $\M{\Delta}$ is not full rank, so we again consider each vector $\V{v} \in \mathcal{C}$ such that $\M{\Delta}\cdot\V{v}=\V{1}$.
	Such a vector $\V{v}$ is of the form $\begin{bmatrix} a & \cdots & a & 1-a & 1-a \end{bmatrix}$ where $0\le a\le 1$.
	Thus, we obtain 
	$$\frac{nr}{P} \leq \Big(\prod_{j\in[d]}|\phi_j(F)|\Big)^a \big(|\phi_{\X}(F)||\phi_{\Y}(F)|\big)^{1\text{-}a}\text.$$
	
	Similar to the 3D case, the above constraint is equivalent to $\frac{nr}{P} \leq \prod_{j\in[d]}|\phi_j(F)|$ and $\frac{nr}{P} \leq |\phi_{\X}(F)||\phi_{\Y}(F)|$.
	
	Clearly a projection onto an array can not be larger than the array itself, thus $|\phi_{\X}(F)| \leq n$, $|\phi_{\Y}(F)|\leq r$, and $|\phi_j(F)|\leq n_jr_j$ for $1\leq j \leq d$.
	
	\response{As the constraints related to the projections of matrices and tensors are disjoint, we solve them separately and then sum the results to get a lower bound on the number of elements that must be accessed or partially computed by the processor.
	We obtain a lower bound on $A$, the number of relevant elements of the matrices by using \Cref{lemma:genMatrixOptimalSolutions}, and a lower bound on $B$, the number of relevant elements of the tensors by using \Cref{lemma:tensorOptimalSolutions}.
	By summing both, we get the positive terms of the lower bound.}
	
	
	To bound the sends or receives, we consider how much data the processor could have had at the beginning or at the end of the computation.
	\response{Assuming there is exactly one copy of the data at the beginning and at the end of the computation, there must exist a processor which owns at most $1/P$ of the elements of the arrays at the beginning or at the end of the computation.
	By employing the previous analysis, this processor must access or partially compute $A+B$ elements of the arrays, but can only own $\frac{n}{P}+\frac{r}{P} +\sum_{j\in[d]} \frac{n_jr_j}{P}$ elements of the arrays.
	Thus it must perform the specified amount of sends or receives.}
\end{proof}


\section{Parallel Algorithm for General Multi-TTM}
\label{sec:genUpperBounds}
We present a parallel algorithm to compute $d$-dimensional Multi-TTM in \cref{alg:genMultittm}, which is analogous to
\cref{alg:3dmultittm}. We organize $P$ processors into a $2d$-dimensional logical processor grid with dimensions $p_1 \times \cdots\times p_d \times q_1 \times \cdots \times q_d$. As before, we consider that $\forall i \in [d]$, $p_i$ and $q_i$ evenly divide $n_i$ and $r_i$, respectively. A processor coordinate is represented as $(p_1^\prime, \cdots, p_d^\prime, q_1^\prime,\cdots, q_d^\prime)$, where $\forall i \in [d]$, $1\le p_{i}^\prime \le p_i$ and $1\le q_{i}^\prime \le q_i$. 



\begin{algorithm}[H]
	\caption{\label{alg:genMultittm}Parallel Atomic d-dimensional Multi-TTM}
	\begin{algorithmic}[1]
		\REQUIRE $\T{X}$, $\Mn{A}{1}$, $\cdots$, $\Mn{A}{d}$, $p_1 \times \cdots \times p_d \times q_1 \times \cdots \times q_d$ logical processor grid
		\ENSURE $\T{Y}$ such that $\Y = \X \times_1 {\Mn{A}{1}}^\Tra \cdots \times_d {\Mn{A}{d}}^\Tra$
		\STATE $(p_1^\prime, \cdots , p_d^\prime, q_1^\prime, \cdots, q_d^\prime)$ is my processor id
		\STATE //All-gather input tensor $\T{X}$
		\STATE $\T{X}_{p_1^\prime \cdots p_d^\prime}$ = All-Gather($\T{X}$, $(p_1^\prime, \cdots , p_d^\prime, *, \cdots, *)$)\label{alg:genMultittm:line:allGatherInputTensor}
		\STATE //All-gather all input matrices
		\FOR{$i=1,\cdots, d$} 
		\STATE $\Mn{A}{i}_{p_i^\prime q_i^\prime}$ = All-Gather($\Mn{A}{i}$, $(*,\cdots,*, p_i^\prime,* \cdots,*, q_i^\prime, *)$)\label{alg:genMultittm:line:allGatherMatrixi} 
		\ENDFOR
		\STATE //Perform local computations in a temporary tensor $\T{T}$
		\STATE $\T{T}$ = Local-Multi-TTM($\T{X}_{p_1^\prime \cdots p_d^\prime}$, $\Mn{A}{1}_{p_1^\prime q_1^\prime}$,$\cdots$, $\Mn{A}{d}_{p_d^\prime q_d^\prime}$)\label{alg:genMultittm:line:localcomputation}
		\STATE //Reduce-scatter the output tensor in $\T{Y}_{q_1^\prime \cdots q_d^\prime}$
		\STATE Reduce-Scatter($\T{Y}_{q_1^\prime \cdots q_d^\prime}$, $\T{T}$, $(*, \cdots , *, q_1^\prime, \cdots, q_d^\prime)$)\label{alg:genMultittm:line:reduceScatterOutputTensor}
	\end{algorithmic}
\end{algorithm}

Here we discuss our data distribution model for \cref{alg:genMultittm}, which is similar to that of \cref{alg:3dmultittm}.
$\T{X}_{p_1^\prime \cdots p_d^\prime}$ and $\T{Y}_{q_1^\prime \cdots q_d^\prime}$ denote the subtensors of $\T{X}$ and $\T{Y}$ owned by processors $(p_1^\prime,\cdots, p_d^\prime, *,\cdots, *)$ and $(*, \cdots, *, q_1^\prime,\cdots, q_d^\prime)$, respectively. $\Mn{A}{i}_{p_i^\prime q_i^\prime}$ denotes the submatrix of $\Mn{A}{i}$ owned by processors $(*,\cdots,*, p_i^\prime,*,\cdots, *, q_i^\prime, *,\cdots, *)$.
We impose that there is one copy of data in the system at the beginning
and the end of the computation,
and each subarray is distributed evenly among the set of processors which own the data.

When \cref{alg:genMultittm} completes, $\T{Y}_{q_1^\prime \cdots q_d^\prime}$ is distributed evenly among processors $(*, \cdots, *,\linebreak q_1^\prime, \cdots, q_d^\prime)$. 
We recall that $\prod_{i=1}^dp_i$ and $\prod_{i=1}^dq_i$ are denoted by $p$ and $q$, respectively.

\subsection{Cost Analysis}
\label{app:sec:genUpperBounds:costAnalysis}
Now we analyze computation and communication costs of the algorithm.
As before, the local Multi-TTM computation in Line~\ref{alg:genMultittm:line:localcomputation} can be performed as a sequence of TTM operations to mininimize the number of arithmetic operations. Assuming the TTM operations are performed in their order, first with $\Mn{A}{1}$, then with $\Mn{A}{2}$, and so on until the last is performed with $\Mn{A}{d}$, then each processor performs $\sum_{k=1}^d \left(2\prod_{i=1}^k \frac{r_i}{q_i}\prod_{j=k}^d\frac{n_j}{p_j}\right)$ operations.
In Line~\ref{alg:genMultittm:line:reduceScatterOutputTensor}, each processor also performs $(1-\frac{q}{P}) \frac{r}{q}$ computations due to the Reduce-Scatter operation.

Communication occurs only in All-Gather and Reduce-Scatter collectives in \linebreak Lines~\ref{alg:genMultittm:line:allGatherInputTensor}, \ref{alg:genMultittm:line:allGatherMatrixi}, and \ref{alg:genMultittm:line:reduceScatterOutputTensor}.
\response{Line~\ref{alg:genMultittm:line:allGatherInputTensor} specifies $p$ All-Gathers over disjoint sets of $\frac{P}{p}$ processors, Line~\ref{alg:genMultittm:line:allGatherMatrixi} specifies ${p_iq_i}$ All-Gathers over disjoint sets of $\frac{P}{p_iq_i}$ processors in the $i$th loop iteration, and Line~\ref{alg:genMultittm:line:reduceScatterOutputTensor} specifies $q$ Reduce-Scatters over disjoint sets of $\frac{P}{q}$ processors.}
Each processor is involved in one All-Gather involving the input tensor, $d$ All-Gathers involving input matrices and one Reduce-Scatter involving the output tensor.

As before, we assume bandwidth and latency optimal algorithms are used for the All-Gather and Reduce-Scatter collectives.
Hence the bandwidth costs of the All-Gather operations are $(1-\frac{p}{P}) \frac{n}{p}$ for Line~\ref{alg:3dmultittm:line:allGatherInputTensor}, and $\sum_{i=1}^d(1-\frac{p_iq_i}{P}) \frac{n_ir_i}{p_iq_i}$ for the $d$ iterations of Line~\ref{alg:genMultittm:line:allGatherMatrixi}.
The bandwidth cost of the Reduce-Scatter operation in Line~\ref{alg:genMultittm:line:reduceScatterOutputTensor} is $(1-\frac{q}{P}) \frac{r}{q}$.
Hence the overall bandwidth cost of \cref{alg:genMultittm} along the critical path is $\frac{n}{p} + \frac{r}{q} + \sum_{i=1}^d\frac{n_ir_i}{p_iq_i} - \left(\frac{n+r+\sum_{i=1}^d n_ir_i}{P}\right)$. The latency costs are $\log_2\left(\frac{P}{p}\right)$ and $\log_2\left(\frac{P}{q}\right)$ for Lines~\ref{alg:3dmultittm:line:allGatherInputTensor} and \ref{alg:genMultittm:line:reduceScatterOutputTensor} respectively, and $\sum_{i=1}^d \log_2\left(\frac{P}{p_iq_i}\right)$ for the $d$ iterations of Line~\ref{alg:genMultittm:line:allGatherMatrixi}. Thus the overall latency cost of \cref{alg:genMultittm} along the critical path is $\log_2\left(\frac{P}{p}\right)+ \sum_{i=1}^d \log_2\left(\frac{P}{p_iq_i}\right) + \log_2\left(\frac{P}{q}\right) = d\log_2(P).$


 

	


We can prove the following theorem by extending the arguments of \Cref{theorem:optimality:3DMultiTTMAlgorithm}.

\begin{theorem}\label{theorem:optimality:genMultiTTMAlgorithm}
	There exist $p_i,q_i$ with $1\le p_i\le n_i, 1\le q_i\le r_i$ for $i=1,\cdots,d$ such that \cref{alg:genMultittm} is communication optimal to within a constant factor.
\end{theorem}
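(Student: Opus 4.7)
The plan is to follow the structure of the proof of \cref{theorem:optimality:3DMultiTTMAlgorithm}, splitting into two scenarios (Scenario I: $P<n/r$; Scenario II: $n/r\le P$) dictated by the two cases of $B$ in \cref{theorem:lb:genMultiTTM}, and within each scenario treating the $d$ matrix cases of $A$ from \cref{lemma:genMatrixOptimalSolutions} separately. For a fixed subcase, I would define candidate grid dimensions $\hat{p}_i,\hat{q}_i$ so that the terms $n/\hat{p}$, $r/\hat{q}$ and $n_i r_i/(\hat{p}_i\hat{q}_i)$ match the optimal values in \cref{lemma:tensorOptimalSolutions,lemma:genMatrixOptimalSolutions} term by term. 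In Scenario I, setting $\hat{q}_i=1$ for all $i$ and $\hat{p}=P$ is natural; the $\hat{p}_i$'s are then read off each matrix case, e.g.\ $\hat{p}_j=n_jr_j(P/(N_iR_i))^{1/i}$ for $d-i<j\le d$ and $\hat{p}_j=1$ otherwise in case $i$. In Scenario II, one imposes $n/\hat{p}=r/\hat{q}=(nr/P)^{1/2}$ together with the matrix equalities; as in the 3D proof these under-determine $\hat{p}_i,\hat{q}_i$, so I would resolve the freedom by taking a cubical-style assignment $n_i/\hat{p}_i=r_i/\hat{q}_i$ whenever possible, which yields closed forms $\hat{p}_i=n_i(P/nr)^{1/(2d)}$, $\hat{q}_i=r_i(P/nr)^{1/(2d)}$ in the fully cubical case and analogous expressions for the intermediate cases. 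The boundary conditions on $P$ in each subcase give the lower bound $\hat{p}_i,\hat{q}_i\ge 1$ wherever possible, exactly as in the 3D argument.

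Next I would perform the two-step adaptation that converts $(\hat{p}_i,\hat{q}_i)$ into a legal grid with $1\le p_i\le n_i$ and $1\le q_i\le r_i$. The first step handles indices $i$ where $\hat{p}_i<1$ or $\hat{q}_i<1$: at most one of the two can be below $1$ (by the same argument as in 3D), so I collapse the deficient one into the other while preserving the product $\hat{p}_i\hat{q}_i$, then run the iterative scan over $i=1,\dots,d$ (verbatim the tcolorbox procedure in \cref{theorem:optimality:3DMultiTTMAlgorithm}, just with upper index $d$) to restore each coordinate above $1$ without changing the global products $\hat{p}$ and $\hat{q}$. The second step handles indices violating the upper bound: if $\hat{p}_l>n_l$ for some $l$, then $\hat{p}\le n$ forces some other $\hat{p}_k\le n_k$, so I set $p_l:=n_l$ and push the excess into one or more $p_k$'s using a greedy cap-and-overflow sweep (set $p_k:=\min\{n_k,\hat{p}/(p_l\prod_{j\ne k,l}p_j)\}$ in sequence). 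A symmetric sweep handles $\hat{q}_l>r_l$. Because the construction ensures that one cannot simultaneously have $\hat{p}_i>n_i$ and $\hat{q}_j>r_j$ for any $i,j$ (as each case of $A$ either places all weight in the $p$-factors or in the $q$-factors of a matched cubical split), at most one of the two sweeps is needed.

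The final step is to bound the adapted cost by a constant multiple of $\lowerbound$. The tensor contributions $n/p$ and $r/q$ are preserved by construction, so only the matrix sum needs attention. Following the 3D chain of inequalities with the generalized estimate
\begin{equation*}
\prod_{i=1}^{d}a_i=\prod_{i=1}^{d}\bigl(1+(a_i-1)\bigr)\ge 1+\sum_{i=1}^{d}(a_i-1)=\sum_{i=1}^{d}a_i-(d-1)
\end{equation*}
for $a_i\ge 1$, I would derive
\begin{equation*}
\sum_{i\in[d]}\frac{n_ir_i}{p_iq_i}\ \le\ \sum_{i\in[d]}\frac{n_ir_i}{\hat{p}_i\hat{q}_i}\ +\ 2\Bigl(\frac{r}{\hat{q}}-\frac{r}{P}\Bigr)
\end{equation*}
in the $\hat{p}_i>n_i$ case, and the symmetric bound with $n/\hat{p}$ in place of $r/\hat{q}$ in the other case. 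Summing with the tensor terms then yields a total cost at most $(d)\cdot\lowerbound$ (the constant $3$ in the 3D proof becomes $d$ in general). Combined with the cost analysis in \cref{app:sec:genUpperBounds:costAnalysis}, this establishes optimality to within a factor of $d$.

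The main obstacle I anticipate is the bookkeeping in the second adaptation step. In three dimensions one can explicitly name ``the violating index $l$'' and the single other index that absorbs the excess; for general $d$, multiple $\hat{p}_i$ can exceed $n_i$ simultaneously and the greedy overflow must be shown to terminate without violating any $p_k\le n_k$. The key fact I would lean on is the global inequality $\hat{p}\le n=\prod n_i$, which guarantees enough slack across the unfixed coordinates for the redistribution to succeed; turning this counting observation into a clean inductive statement is where the bulk of the technical work lies. Every other step is a direct transcription of the 3D argument with the index $3$ replaced by $d$ and \cref{lemma:matrixOptimalSolutions} replaced by \cref{lemma:genMatrixOptimalSolutions}.
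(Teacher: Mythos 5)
Your overall plan follows the paper's proof closely (same two scenarios, same term-by-term matching against \cref{lemma:tensorOptimalSolutions,lemma:genMatrixOptimalSolutions}, same lower-bound repair, same final cap-and-overflow adaptation and the same chain of inequalities using $\sum_i a_i-(d-1)\le\prod_i a_i$), but there is one genuine gap: you dismiss the need to repair \emph{simultaneous} upper-bound violations on both sides, claiming the construction ensures one cannot have $\hat{p}_i>n_i$ and $\hat{q}_j>r_j$ for any $i,j$ because ``each case of $A$ places all weight in the $p$-factors or in the $q$-factors.'' That justification is false in Scenario II, where both $\hat{p}_j$ and $\hat{q}_j$ are nontrivial for $d-I<j\le d$; more importantly, even though the raw closed-form assignment cannot violate both sides at once, your lower-bound repair (collapsing deficient coordinates and then redistributing a factor to restore the products $\hat{p}$ and $\hat{q}$) \emph{increases} some $\hat{p}_j$ (or $\hat{q}_j$), and can therefore create a violation $\hat{p}_j>n_j$ while a pre-existing violation $\hat{q}_k>r_k$ survives. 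The paper inserts a dedicated intermediate step (redistributing within the products $p^t$, $q^t$ of the ``active'' coordinates, justified by $1\le\hat{p}_j\hat{q}_j\le n_jr_j$) precisely to restore one-sidedness before the final sweep, and this one-sidedness is load-bearing: your final cost bound, like the paper's, splits into the case ``only $p$'s violate'' (extra cost $2(r/\hat{q}-r/P)$) or ``only $q$'s violate'' (extra cost $2(n/\hat{p}-n/P)$) and does not cover mixed violations. Without either proving that mixed violations cannot occur after your repair (which is not obvious and is not what the paper does) or supplying the paper's two-sided redistribution step, the argument is incomplete.

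Two smaller points. First, your guiding principle for Scenario II intermediate cases, $n_i/\hat{p}_i=r_i/\hat{q}_i$, is inconsistent with the tensor constraint $n/\hat{p}=r/\hat{q}=(nr/P)^{1/2}$ unless $\prod_{j\le d-I}n_j=\prod_{j\le d-I}r_j$; the paper instead uses the asymmetric choice $\hat{p}_j=n_j\bigl(nP/(rN_I^2)\bigr)^{1/2I}$, $\hat{q}_j=r_j\bigl(rP/(nR_I^2)\bigr)^{1/2I}$. Since any feasible solution of the matching equations suffices, this is a matter of specifying a correct formula rather than a structural problem. Second, your own displayed inequality yields a total cost of at most $3\,\lowerbound$, exactly as in the paper; the $(d-1)$ is absorbed by the product inequality, so the claim that ``the constant $3$ becomes $d$'' is unnecessary and weakens the statement (a dimension-dependent factor is not what the theorem asserts).
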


\begin{proof}
	As we did previously, we break our analysis into 2 scenarios which are further broken down into all possible cases.
	
	In each case, we obtain $\init{p_j}$ and $\init{q_j}$ such that the terms in the communication cost match the corresponding lower bound terms and satisfy at least one of the two sets of constraints, $1\le \init{p_j} \le n_j$, $1\le \init{q_j}$ or $1\le \init{q_j}\le r_j$, $1\le \init{p_j}$ for $1\le j \le d$.
	We handle all cases of both scenarios together in the end, and adapt these values to get $p_j$ and $q_j$ which respect both lower and upper bounds for all values of $j$. Then we determine how much additional communication may be required. We denote $\prod_{i=1}^d\init{p_i}$ and $\prod_{i=1}^d\init{q_i}$ by $\init{p}$ and $\init{q}$. 
	
	\noindent$\bullet$ \underline{Scenario I} $\left(P < \frac{n}{r}\right)$:
	This scenario corresponds to the first case of the tensor term in $\lowerbound$.
	Thus, we set $\init{p_j}, \init{q_j}$ in such a way that the tensor terms in the communication cost match the tensor terms of $\lowerbound$:
	\begin{equation}\label{eq:genS1} \init{p}=P, \init{q} = 1.\end{equation}
	This implies $\init{q_j} = 1$ for $1\le j\le d$.
	We break this scenario into $d$ cases parameterized by $I$: $\frac{N_{I-1}R_{I-1}}{(n_{d-I+1}r_{d-I+1})^{I-1}} \le P < \min\big\{\frac{N_IR_I}{(n_{d-I}r_{d-I})^I}, \frac{n}{r}\big\}$.
	The cases degenerate to $P < \min\big\{\frac{N_1R_1}{n_{d-1}r_{d-1}}, \frac{n}{r}\big\}$, when $I=1$, and $\frac{N_{d-1}R_{d-1}}{(n_1r_1)^{d-1}} \le P < \frac{n}{r}$ when $I=d$.
	
	\noindent Setting the matrix communication costs to the matrix terms of the lower bound in the corresponding cases yields
	\begin{equation}\label{eq:genS1Ci}\frac{n_jr_j}{\init{p_j}\init{q_j}} = n_jr_j\text{ if } 1\le j\le d-I,\qquad \frac{n_jr_j}{\init{p_j}\init{q_j}} = \left(\frac{N_IR_I}{P}\right)^\frac{1}{I}\text{ if }d-I < j \le d.\end{equation}
	Thus, $\init{q_j}=1$ for all $1\le j\le d$, $\init{p_j} = 1$ if $1\le j\le d-I$ and $\init{p_j} = n_jr_j\big(\frac{P}{N_IR_I}\big)^\frac{1}{I}$ if $d-I < j\le d$ to satisfy \cref{eq:genS1,eq:genS1Ci}.
	Note that when $I=1$, $p_d=P\ge 1$, and for the other values of $I$, $p_j \ge 1$ because $n_1r_1 \le \cdots \le n_dr_d$ and $\frac{N_{I-1}R_{I-1}}{(n_{d-I+1}r_{d-I+1})^{I-1}} \le P$.
	Additionally we have that $1= \init{q_j} < r_j$ for $1\le j\le d$.
	However, we are not able to ensure $\init{p_j}\le n_j$ when $d-I < j\le d$. We will handle all cases of both scenarios together as they require the same analysis.
	
	\medskip
	\noindent $\bullet$ \underline{Scenario II} $\left(\frac{n}{r}\le P\right)$:
	This scenario corresponds to the second case of the tensor term in $\lowerbound$.
	Thus, we set $\init{p_i}, \init{q_i}$ in such a way that
	\begin{equation}\label{eq:genS2}\frac{n}{\init{p}}=\frac{r}{\init{q}}=\left(\frac{nr}{P}\right)^{1/2}.\end{equation}
	Again, we break this scenario into $d$ cases parameterized by $I$: $$\max\left\{\frac{N_{I-1}R_{I-1}}{(n_{d-I+1}r_{d-I+1})^{I-1}},\frac{n}{r}\right\} \le P < \frac{N_IR_I}{(n_{d-I}r_{d-I})^I}\;\cdot$$
	The cases degenerate to $P < \frac{N_1R_1}{n_{d-1}r_{d-1}}$, when $I=1$, and $\max\left\{\frac{N_{d-1}R_{d-1}}{(n_1r_1)^{d-1}}, \frac{n}{r}\right\} \le P$ when $I=d$.
	
	\noindent Setting the matrix communication costs to match the corresponding matrix terms in the lower bound yields
	\begin{equation}\label{eq:genS2Ci}\frac{n_jr_j}{\init{p_j}\init{q_j}} = n_jr_j\text{ if }1\le j\le d-I, \qquad \frac{n_jr_j}{\init{p_j}\init{q_j}} = \left(\frac{N_IR_I}{P}\right)^\frac{1}{I}\text{ if }d-I < j \le d.\end{equation}
	Thus we set $\init{q_j} = \init{p_j} = 1$ for all $1\le j\le d-I$.
	When $2\le I \le d$, the equations above do not uniquely determine $\init{p_j},\init{q_j}$ for $d-I<j\le d$.
	However, setting $\init{p_j} = n_j\left(\frac{nP}{rN_I^2}\right)^{1/2I}$ and $\init{q_j}=r_j\left(\frac{rP}{nR_I^2}\right)^{1/2I}$ for $d-I < j\le d$ satisfies equations \ref{eq:genS2}, \ref{eq:genS2Ci} in all cases. Note that we cannot ensure lower and upper bounds on $\init{p_j}$ and $\init{q_j}$. We now look for new solutions to the equations twice. First, we ensure that all lower bounds are respected, i.e., $1\le\init{p_j}$ and $1\le\init{q_j}$, and then we guarantee that all upper bounds of $\init{p_j}$ or $\init{q_j}$ are satisfied, i.e., $\init{p_j} \le n_j$ or $\init{q_j} \le r_j$.

	As we compared communication cost of each term with its corresponding lower bound to obtain $\init{p_j}$ and $\init{q_j}$, we have $1\le \init{p_j}\init{q_j} \le n_jr_j$ for $1\le j \le d$, $1\le \init{p}\le n$ and $1\le \init{q}\le r$ in all $d$ cases. However, we may not have $1\le \init{p_j}$ or $1\le \init{q_j}$ for some $j$. We now seek new solutions that are all greater than $1$. First we will increase all $\init{q_j}, \init{p_j}$ that are less than $1$ in a way that preserves products $\init{p_j}\init{q_j}$ but does not preserve $\init{p}$ and $\init{q}$. Then we will adjust $\init{p_j}$ and $\init{q_j}$ to force the products $\init{p}$ and $\init{q}$ back to their initial values.
	
	Let $q^b$ denote the product of all $\init{q_j}$ such that $\init{q_j} < 1$, and $p^b$ denote the product of all $\init{p_j}$ such that $\init{p_j} < 1$. Without loss of generality, if $q^b \leq p^b$, set $a=1$ $\init{p}^{orig} = \init{p}$, and $\init{q}^{orig}=\init{q}$. We perform the following updates:\\
	Looping over the index $j$ from 1 to $d$, if $\init{q_j} < 1$ then set $a=a\cdot \init{q_j}, \init{p_j}=\init{p_j}\init{q_j}, \init{q_j}=1$; else if $\init{p_j} < 1$ then set $a=a/\init{p_j}, \init{q_j}=\init{p_j}\init{q_j}, \init{p_j}=1.$
	This step preserves all products $\init{p_j}\init{q_j}$ and enforces $1\le\init{p_j}, 1\le\init{q_j}$ for $1\le j\le d$, but it does not preserve $\init{p}, \init{q}$.
	At the end of this step, we have $a = q^b/p^b < 1$, $\init{p} = a\cdot\init{p}^{orig}$, and $\init{q}=\init{q}^{orig}/a$.
	In order to force $\init{p}$ and $\init{q}$ to match their initial values, we decrease some $\init{q_j}$ in such a way that $\init{q}$ is decreased by a factor of $a$. This is possible because $1\le\init{q}^{orig} =a\cdot\init{q}$.
	Looping over the index $j$ from 1 to $d$, if $\init{q_j} > 1$ then set $\init{q_j}^{prev}= \init{q_j}, \init{q_j} = \max(1, a\cdot\init{q_j}), a = a\left(\frac{\init{q_j}^{prev}}{\init{q_j}}\right), \init{p_j} = \init{p_j}\left(\frac{\init{q_j}^{prev}}{\init{q_j}}\right)$.
	At the end of this step, $\init{q}$ has been decreased by a factor of $q^b/p^b$, $\init{p_j}\init{q_j}$ were all preserved, and thus, $\init{p}$ has been increased by a factor of $p^b/q^b$, hence $\init{q}=\init{q}^{orig}$ and $\init{p}=\init{p}^{orig}$.
	After the above updates, we have $1\le \init{p_j}, 1\le \init{q_j}$ for $1\le j \le d$, and the products match the initial products thus are valid solutions to the original equations.
	If $p_b < q_b$ we would perform the same process, but changing the actions on the $\init{q_j}$ to be performed on the $\init{p_j}$ and vice versa.
	
	
	We now handle upper bounds of $\init{p_j}$ and $\init{q_j}$. If $\exists j,k$ such that $\init{p_j} > n_j$ and $\init{q_k}>r_k$, we again seek new solutions such that all $\init{p_j}$ or all $\init{q_j}$ satisfy upper bounds while respecting lower bounds of all variables. Let $p^t$ and $q^t$ denote the products of all $\init{p_j}$ and $\init{q_j}$, respectively, such that $\init{p_j}\init{q_j} \ne 1$. As $\forall j, 1\le \init{p_j}\init{q_j} \le n_jr_j$, therefore $p^t$ is not more than the product of the corresponding $n_j$ and/or $q^t$ is not more than the product of the corresponding $r_j$. If the first constraint is satisfied, then we perform the following updates:\\
	Looping over the index $j$ from 1 to $d$, if $p^t >1$ and $\init{p_j} \init{q_j} \ne 1$ then set $a=\init{p_j} \init{q_j}, \init{p_j}= \min(p^t, n_j), \init{q_j}=\frac{a}{\init{p_j}}, p^t = \frac{p^t}{\init{p_j}}$. After the above updates, we have $1\le \init{p_j} \le n_j, 1\le \init{q_j}$ for $1\le j \le d$, and $\init{p_j}\init{q_j}$, $\init{p}$ and $\init{q}$ are back to their original values. If the first constraint is not satisfied, we would perform the same process on $\init{q_j}$ instead of $\init{p_j}$.
	
	
	\medskip
	Now for all cases of both scenarios, we know that $1\le \init{p}_j$ and $1\le \init{q_j}$ for $1\le j\le d$, and either $\init{p}_j\le n_j$ for $1\le j\le d$ or $\init{q_j}\le r_j$ for $1\le j\le d$.
	It remains to adapt $\init{p}_j$ and $\init{q}_j$ such that both $\init{p}_j \le n_j$ and $\init{q}_j\le r_j$ for $1\le j\le d$.
	We obtain $p_1,\ldots, p_d, q_1,\ldots,q_d$ from $\init{p_j}$ and $\init{q_j}$ such that $p_1\cdots p_d=\init{p}$ and $q_1\cdots q_d=\init{q}$.
	The intuition is to maintain the tensor communication terms in the lower bound.
	
	Initially, we set $p_j = \init{p}_j$ and $q_j = \init{q}_j$ for $1\le j \le d$.
	If $1\le p_j\le n_j$ and $1\le q_j\le r_j$ for $1\le j\le d$, then $\sum_{j \in [d]}\frac{n_jr_j}{p_jq_j} = \sum_{j \in [d]}\frac{n_jr_j}{\init{p_j}\init{q_j}}$ and the communication cost exactly matches the lower bound. Otherwise, we adapt the values and determine the effects on the matrix communication costs. We recall that due of our particular selections of $\init{p_j}$ and $\init{q_j}$, $\nexists j,\ell\in [d]$ such that $\init{p_j} > n_j$ and $\init{q_\ell} > r_\ell$.
	If $\init{p}_j > n_j$ for some $j\in[d]$, then we iterate over the index $j$ from $d$ to 1 setting $p_j=\min\left\{n_j, \frac{\init{p}}{\prod_{\ell\in[d]-\{j\}}p_\ell}\right\}$. We iterate again from $d$ to 1 with the same expression. Iterating twice ensures that all updates are visible to all $p_j$.

	Now we assess how much additional communication is required for the matrices.
	As $\init{p_j} > n_j$ for some $j$, it must be the case that $\init{p} \ge 2$.
	Thus
	\begin{align*}
	\sum_{j \in [d]}\frac{n_jr_j}{p_jq_j} &\le \sum_{j \in [d]}\max\left\{\frac{n_jr_j}{\init{p_j}\init{q_j}}, \frac{r_j}{\init{q_j}}\right\}\\
	& = \sum_{j \in [d]}\Big(\frac{n_jr_j}{\init{p_j}\init{q_j}} + \frac{r_j}{\init{q_j}} - \min\big\{\frac{n_jr_j}{\init{p_j}\init{q_j}},\frac{r_j}{\init{q_j}}\big\}\Big)\\
	&< \sum_{j \in [d]}\left(\frac{n_jr_j}{\init{p_j}\init{q_j}} + \frac{r_j}{\init{q_j}}\right) -(d-1)\\
	&\le \sum_{j \in [d]} \frac{n_jr_j}{\init{p_j}\init{q_j}} + \frac{r}{\init{q}}\\
	& <  \sum_{j \in [d]} \frac{n_jr_j}{\init{p_j}\init{q_j}} + 2 \left(\frac{r}{\init{q}} - \frac{r}{\init{p}\init{q}}\right)\\
	&= \sum_{j \in [d]} \frac{n_jr_j}{\init{p_j}\init{q_j}} + 2 \left(\frac{r}{\init{q}} - \frac{r}{P}\right)\text.
	\end{align*}

	Similarly, if $\exists j\in[d]$ such that $\init{q_j} > r_j$, the same update can be performed to the $q_j$, and we obtain $\sum_{j \in [d]}\frac{n_jr_j}{p_jq_j} < \sum_{j \in [d]} \frac{n_jr_j}{\init{p_j}\init{q_j}} + 2 \left(\frac{n}{\init{p}} - \frac{n}{P}\right)$.

	\smallskip
	\noindent Therefore, $\sum_{j \in [d]}\frac{n_jr_j}{p_jq_j} + \frac{r}{q} + \frac{n}{p} -\odata \le 3\left(\sum_{j \in [d]}\frac{n_jr_j}{\init{p_j}\init{q_j}} + \frac{r}{\init{q}} + \frac{n}{\init{p}}-\odata\right) = 3\lowerbound$.
\end{proof}
\subsection{Simulated Evaluation}
\label{app:sec:generalMultiTTM:evaluation}
Similar to \cref{sec:experiments}, we compare communication costs of our algorithm and a TTM-in-Sequence approach implemented in the TuckerMPI library. We again restrict to cases where all dimensions are powers of $2$, and vary the number of processors $P$ from $2$ to $\maxp$ in multiples of $2$, where $\maxp=\min\{n_1r_1, \cdots, n_dr_d, n, r\}$.

Like \cref{sec:experiments}, we look at all possible processor grid dimensions and represent the minimum communication costs of our algorithm and TuckerMPI algorithm by \bestconfigAAOGen and \bestconfigSeq, respectively.
The TTM-in-Sequence approach described in~\cite{Ballard:TuckerMPI:TOMS20} organizes $P$ in a $d$-dimensional $\tilde{p_1}\times \cdots\times\tilde{p_d}$ logical processor grid. Assuming TTMs are performed in increasing mode order, the overall communication cost of this algorithm is 
{\small\begin{align*}
	\label{eq:TuckerMPIgen-cost}
	\frac{r_1n_2\cdots n_d}{\frac{P}{\tilde{p_1}}} + \frac{r_1r_2n_3\cdots n_d}{\frac{P}{\tilde{p_2}}} +\cdots + \frac{r_1r_2\cdots r_d}{\frac{P}{\tilde{p_d}}} -\frac{r_1n_2\cdots n_d + r_1r_2n_3\cdots n_d + \cdots + r_1r_2\cdots r_d}{P}  \\ \qquad\qquad 
	+\frac{n_1r_1}{\tilde{p_1}} +\cdots +\frac{n_dr_d}{\tilde{p_d}} - \frac{n_1r_1+\cdots+n_dr_d}{P}. \notag
	\end{align*}}
The first line corresponds to tensor communication and the second line corresponds to matrix communication. As mentioned earlier, the TTM-in-Sequence approach forms a tensor after each TTM computation. Each positive term of the first line corresponds to the number of entries of such a tensor partially computed by a processor in TuckerMPI.

\begin{figure*}[htb]
	\begin{center}
		$\quad$\includegraphics[scale=0.175]{./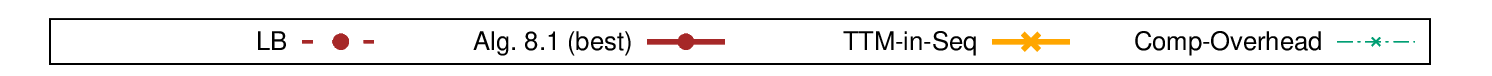}
	\end{center}
	\vspace*{-0.35cm}\begin{center}
		\subfloat[$n_1=n_2= n_3=2^{20},r_1=r_2=r_3=2^{6}$]{	\includegraphics[width=0.3\linewidth]{./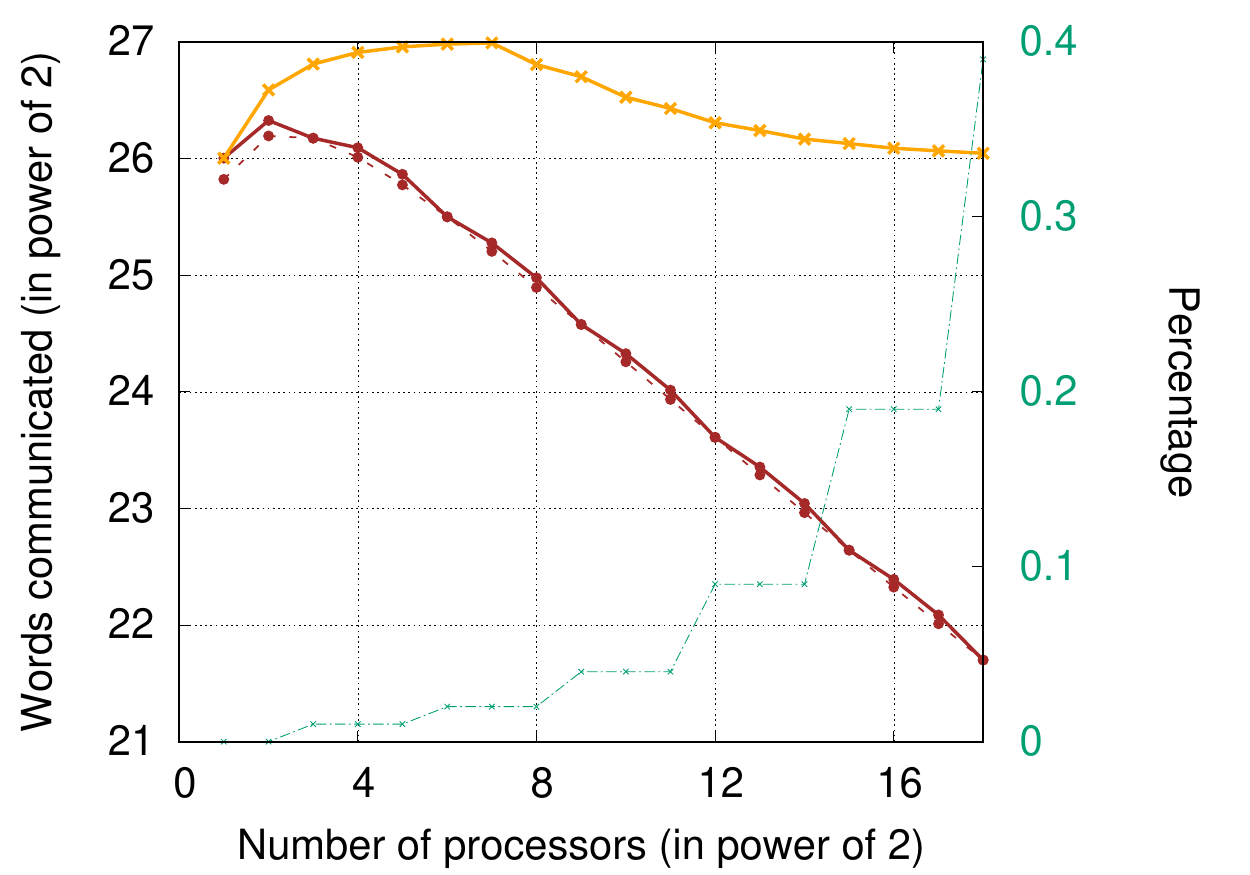}}$\quad$
		\subfloat[$n_1=n_2= n_3=n_4=2^{20},r_1=r_2=r_3=r_4=2^{6}$]{	\includegraphics[width=0.3\linewidth]{./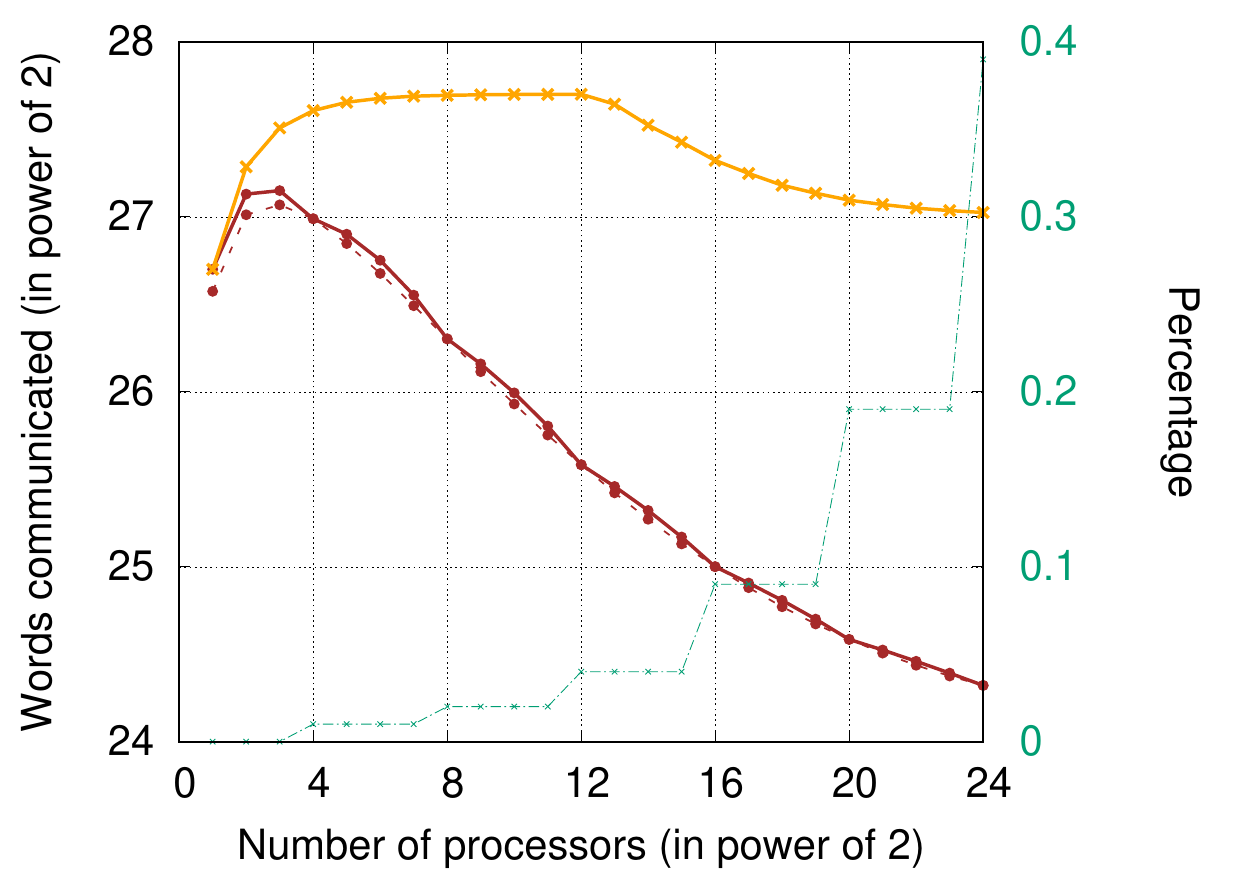}}$\quad$
		\subfloat[$n_1=\cdots=n_5=2^{20},r_1=\cdots=r_5=2^{6}$\label{subfig:TTMinSequenceOutPerorms}]{	\includegraphics[width=0.3\linewidth]{./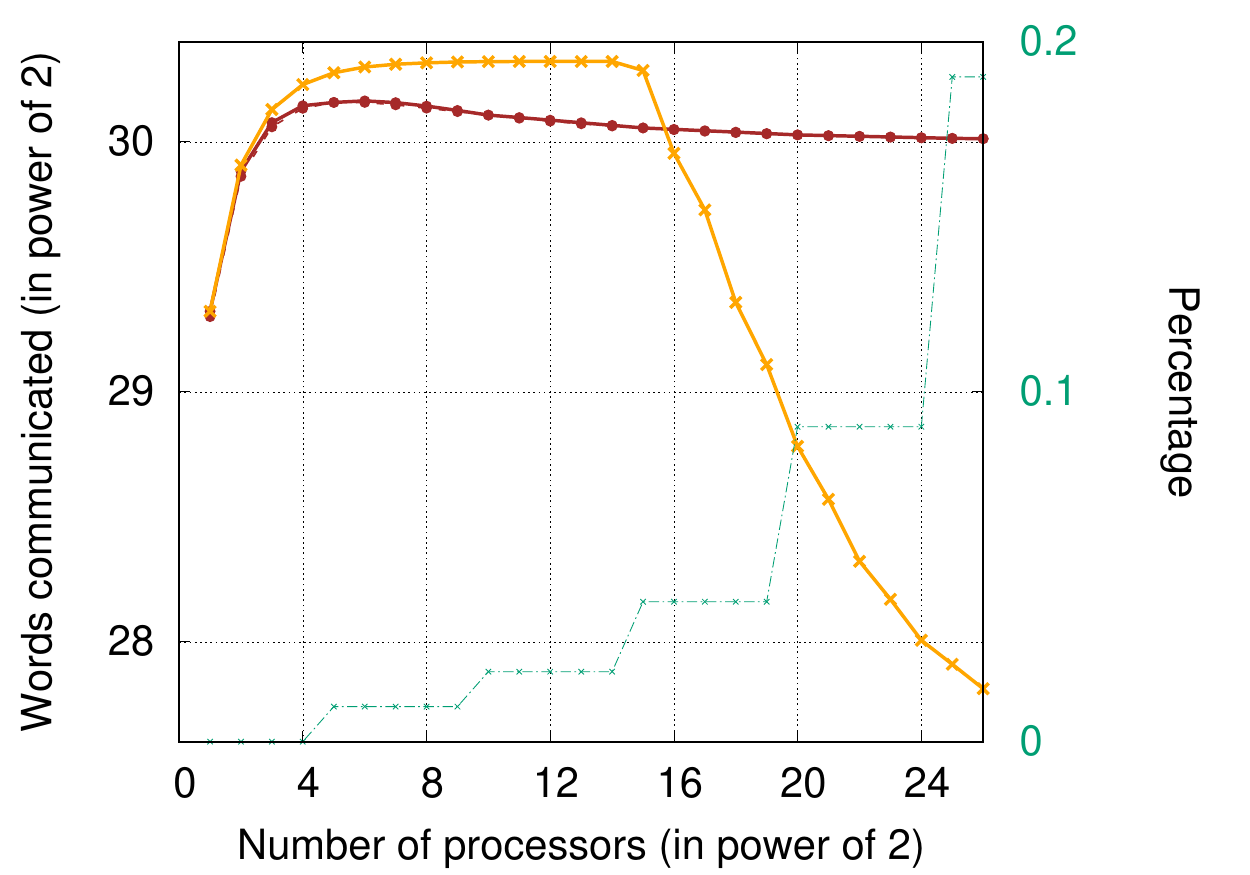}}
		\caption{Communication cost comparison of \cref{alg:genMultittm} and the TTM-in-Sequence approach implemented by the TuckerMPI library. Note that $\lowerbound$ is a communication lower bound for atomic Multi-TTM algorithms, not for the TTM-in-Sequence approach. Communication cost of our approach (\bestconfigAAOGen) is very close to the lower bound ($\lowerbound$).\label{fig:genMultiTTM:experiments:niriconstants}}
	\end{center}
\end{figure*}

\begin{figure*}[htb]
	\begin{center}
		\subfloat[$n_1=n_2= n_3=2^{20},r_1=r_2=r_3=2^{6}$]{	\includegraphics[width=0.3\linewidth]{./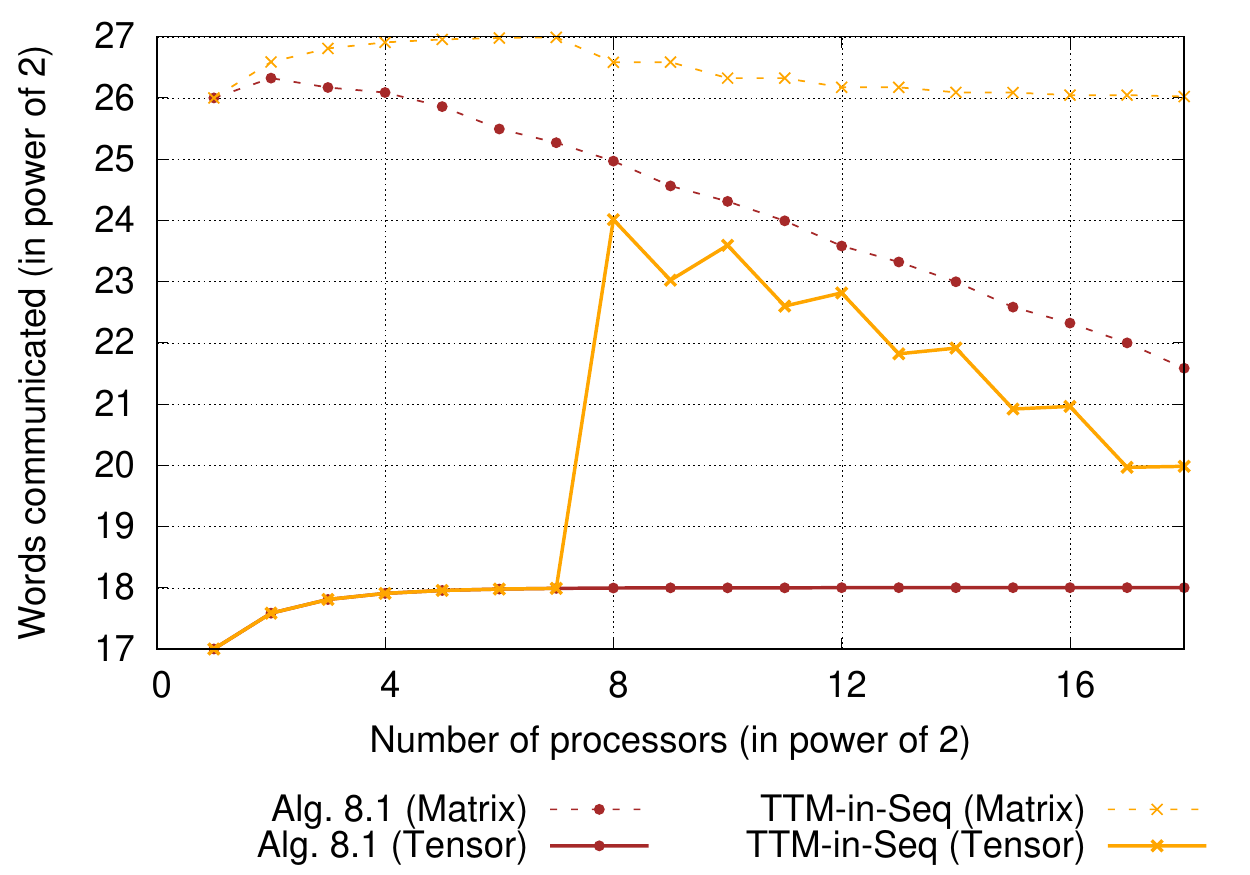}}$\quad$
		\subfloat[$n_1=n_2= n_3=n_4=2^{20},r_1=r_2=r_3=r_4=2^{6}$]{	\includegraphics[width=0.3\linewidth]{./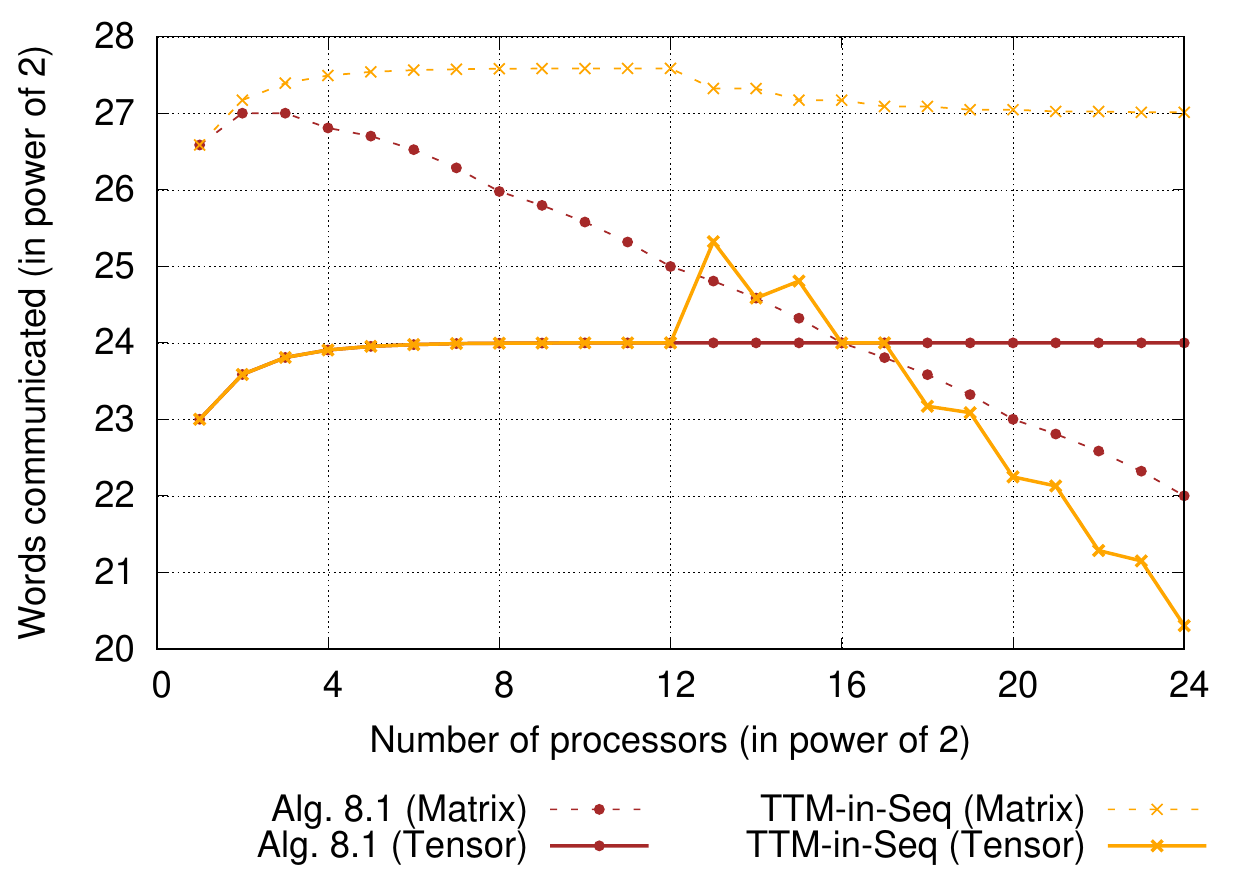}}$\quad$
		\subfloat[$n_1=\cdots=n_5=2^{20},r_1=\cdots=r_5=2^{6}$]{	\includegraphics[width=0.3\linewidth]{./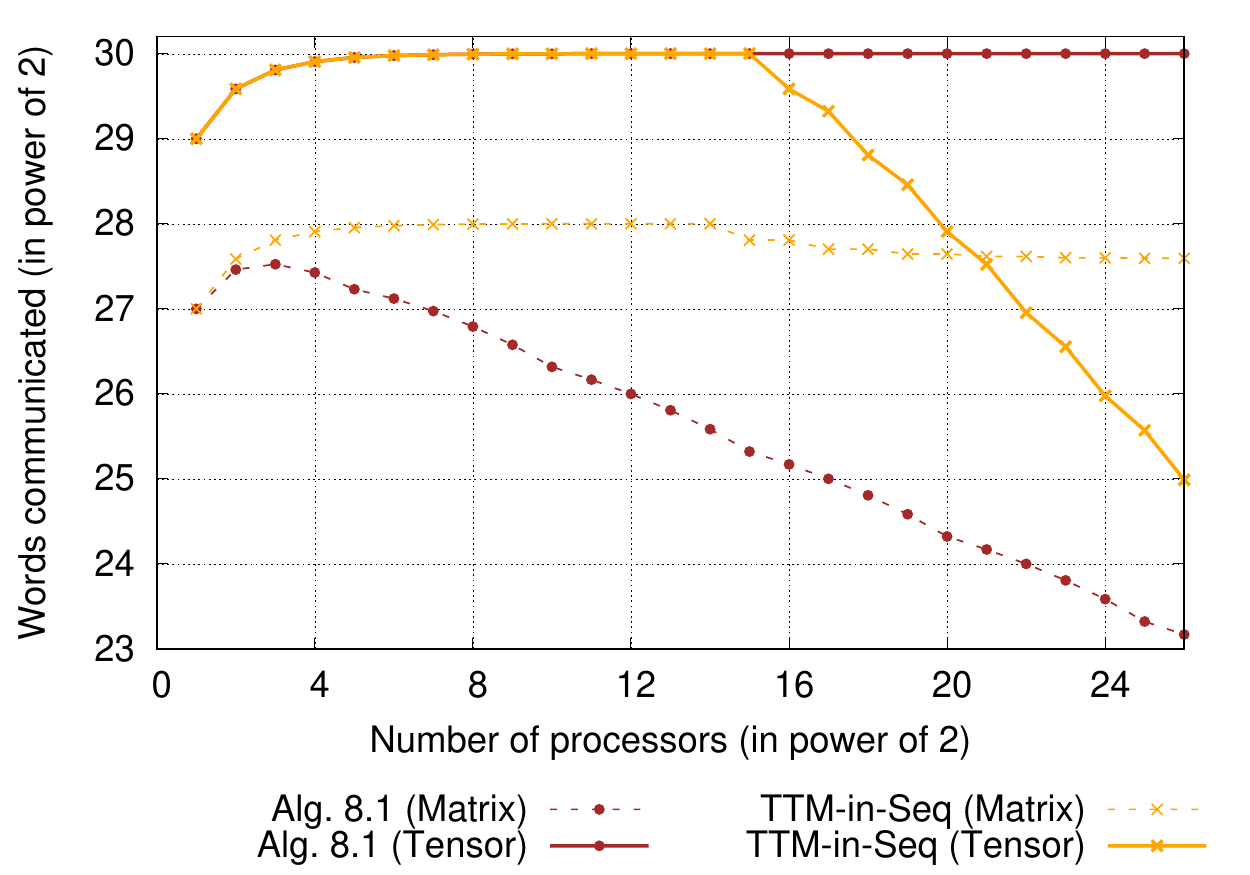}}
		\caption{Matrix and Tensor communication costs in \cref{alg:genMultittm} and the TTM-in-Sequence approach. \label{fig:genMultiTTM:experiments:niriconstants:distribution}}
	\end{center}
\end{figure*}

\begin{figure*}[htb]
	\begin{center}
		\subfloat[$n_1=n_2= n_3=2^{10}, r_1=r_2=r_3=2^{4}$]{	\includegraphics[width=0.3\linewidth]{./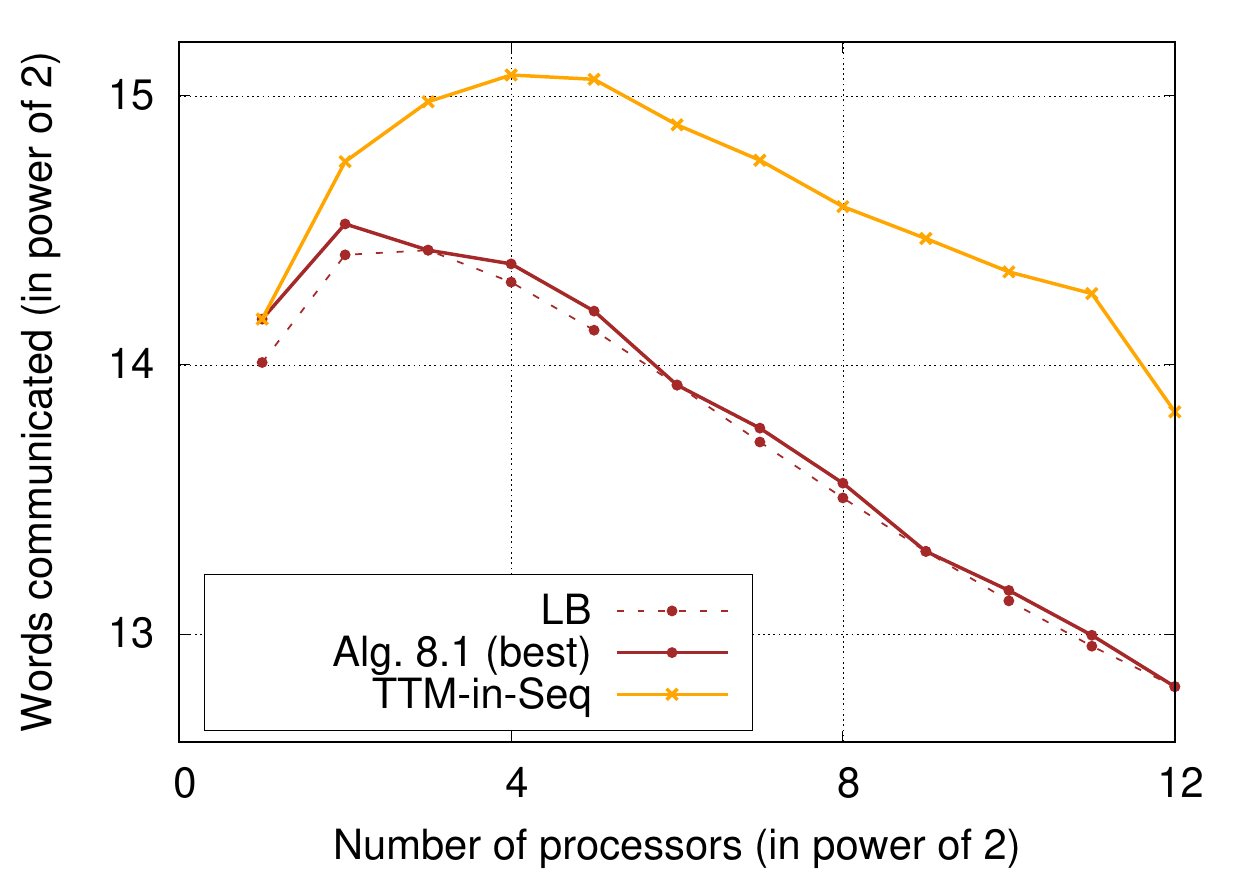}}$\quad$
		\subfloat[$n_1=n_2= n_3=n_4=2^{10},r_1=r_2=r_3=r_4=2^{3}$]{	\includegraphics[width=0.3\linewidth]{./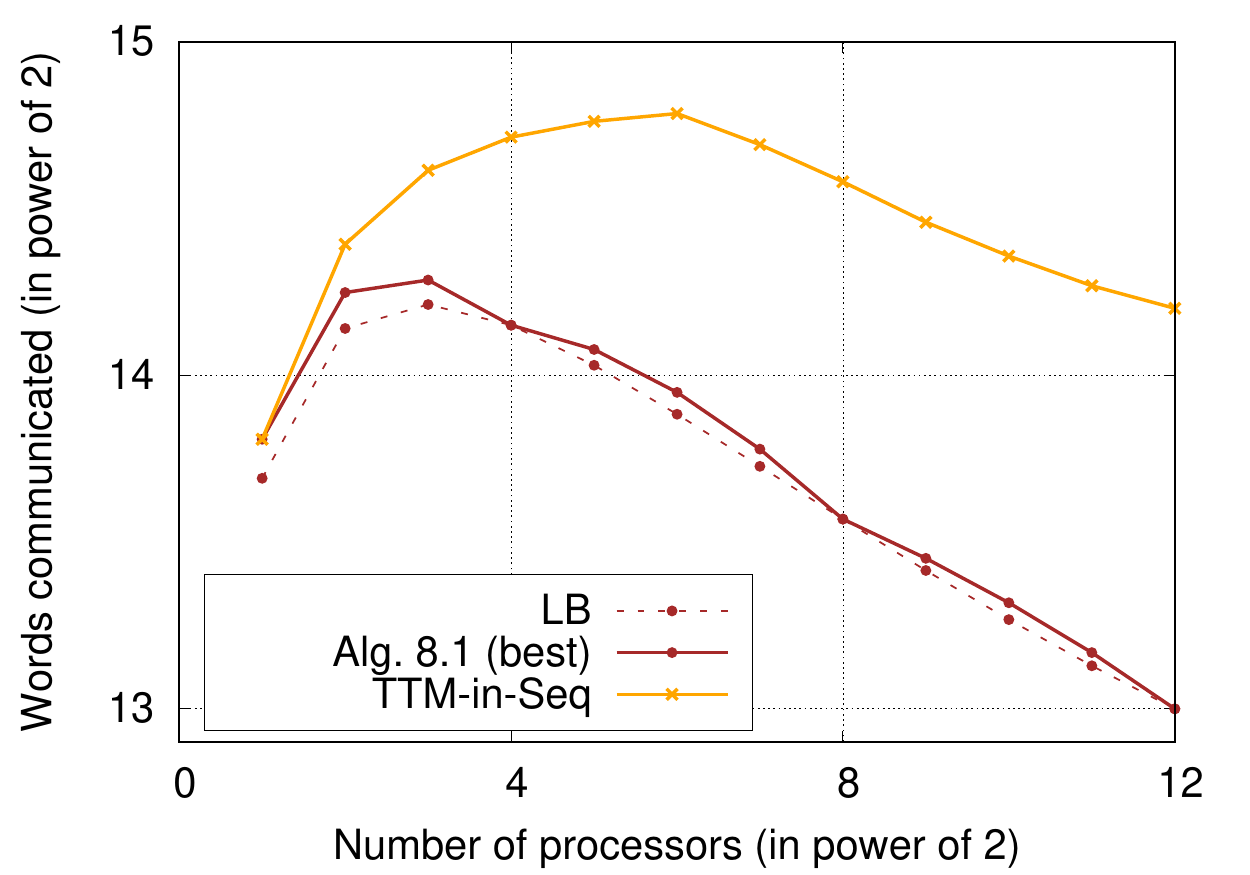}}$\quad$
		\subfloat[$n_1=\cdots=n_6=2^{10},r_1=\cdots=r_6=2^{2}$]{	\includegraphics[width=0.3\linewidth]{./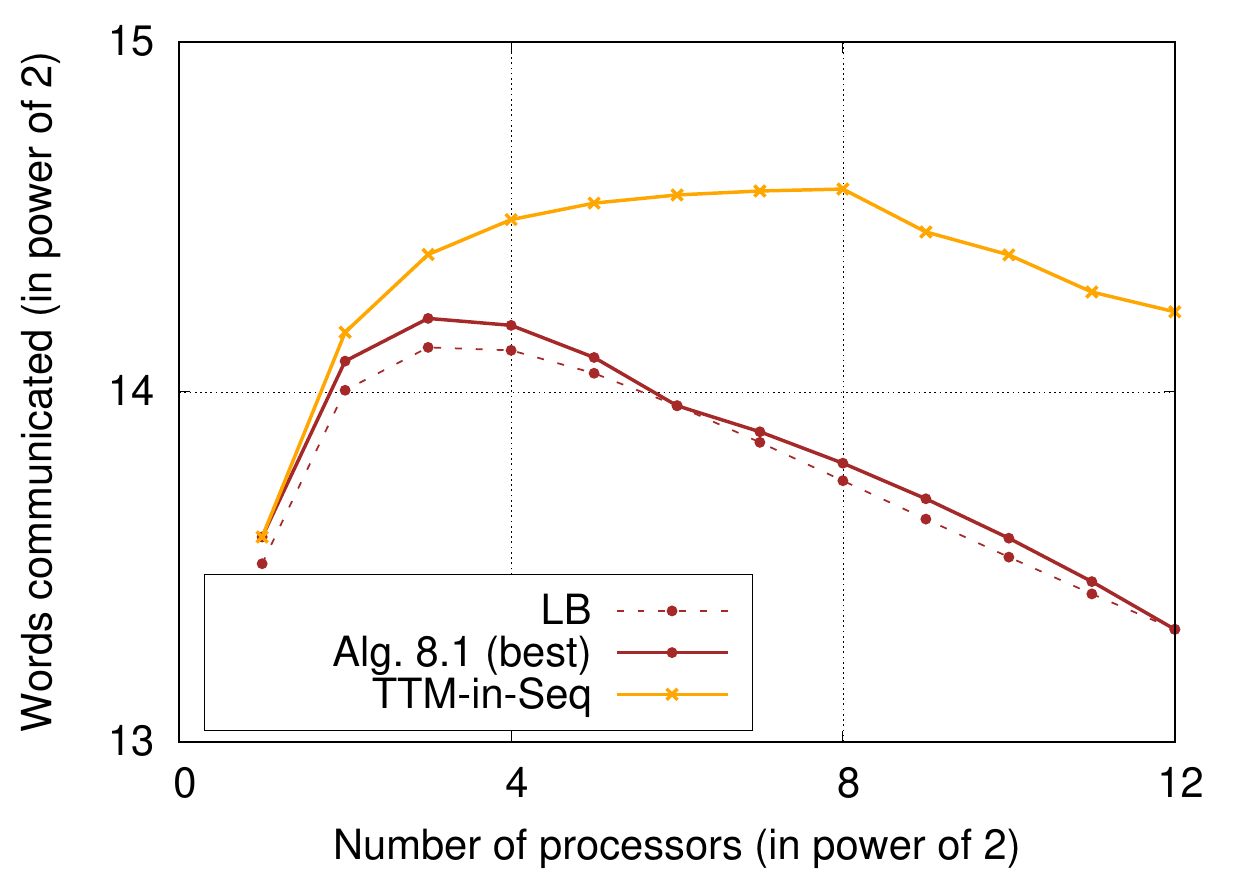}}
		\caption{Communication cost comparison of \cref{alg:genMultittm} and the TTM-in-Sequence approach for 3/4/6-dimensional Multi-TTM computations.\label{fig:genMultiTTM:experiments:2-12}}
	\end{center}
\end{figure*}

We again look at cases where the input tensors are large and the output tensors are small. \Cref{fig:genMultiTTM:experiments:niriconstants} shows comparison of \bestconfigAAOGen and \bestconfigSeq with our communication lower bounds ($\lowerbound$) for 3/4/5-dimensional Multi-TTM computations. For $P=2$, both approaches perform the same amount of communication. After that, the total number of accessed elements in both approaches decreases, however the rate of owned elements decreases at the faster rate. Hence we see slight increase in both curves. This behavior continues roughly till $2^{n_i-r_i}$ processors for \bestconfigSeq curve.
In this region, the TTM-in-Sequence approach selects $\tilde{p_1}=\cdots =\tilde{p_{d-1}} = 1$ and $\tilde{p_d}=P$, and our algorithm selects $p_1\approx\cdots\approx p_d$ and $q_1=\cdots=q_d=1$. These processor grid dimensions result in the same tensor communication cost for both approaches. However our approach reduces matrix communication cost roughly \linebreak$(1-\frac{1}{d})P^\frac{1}{d}$ times, hence it is better than the TTM-in-Sequence approach. \cref{fig:genMultiTTM:experiments:niriconstants:distribution} shows the distribution of matrix and tensor communication costs in both approaches. In general, our approach significantly minimizes the matrix communication costs in all the plots and is better when the number of the entries in the output tensor is less than that of the matrices. When the communication cost is dominated by the output tensor,
our approach is outperformed by the TTM-in-Sequence approach, which is the case in \cref{subfig:TTMinSequenceOutPerorms}.

\response{The parallel computational cost of TuckerMPI with cubical tensors is 
$$2\left(\frac{r^{\frac{1}{d}}n}{P}+\frac{r^{\frac{2}{d}}n^{\frac{d-1}{d}}}{P}+ \cdots + \frac{rn^{\frac{1}{d}}}{P}\right).$$
When $n\gg r$, \cref{alg:genMultittm} selects a processor grid such that $q=1$ and $p_1\approx p_2\approx \cdots \approx p_d$. In this case the computation cost given in \cref{app:sec:genUpperBounds:costAnalysis} simplifies to
$$2\left(\frac{r^{\frac{1}{d}}n}{P}+\frac{r^{\frac{2}{d}}n^{\frac{d-1}{d}}}{P^\frac{d-1}{d}}+ \cdots + \frac{rn^{\frac{1}{d}}}{P^\frac{1}{d}}\right).$$}

\response{While the first terms of the two computational cost expressions match, we observe greater computational cost from \cref{alg:genMultittm} in the remaining terms. These terms are lower order when $P\ll n/r$, in which case the extra computational cost of \cref{alg:genMultittm} is negligible. We plot computational overheads of our algorithm, with $Comp-Overhead$ label, in \cref{fig:genMultiTTM:experiments:niriconstants}. We can note that the overheads are negligible (less than 0.5\%) for the considered experiments.}

Now we consider a different set of experiments. Here the number of entries in the input tensor is $(2^{10})^d$ for $d$-dimensional computations. We fix the number of entries in the output tensor to $2^{12}$ and present comparisons of the considered approaches in \Cref{fig:genMultiTTM:experiments:2-12} for 3/4/6-dimensional Multi-TTM computations. These dimensions allow both tensors to be cubical. As the number of entries in the matrices are greater than the number of entries in the output tensor, our approach is always superior to the TTM-in-Sequence approach.


Our results are consistent with what we observe for $3$-dimensional Multi-TTM computations in \cref{sec:experiments}. When the input tensor is much larger than the output tensor and the number of entries in the output tensor is less than that of the matrices, our algorithm significantly reduces communication compared to the TTM-in-Sequence approach. As in the 3D case, when $P\ll n/r$, the extra computation is negligible when the TTM-in-Sequence approach is used locally to reduce computation.


\section{Conclusions}
\label{sec:conclusions}
In this work, we establish communication lower bounds for the parallel Multi-TTM computation and present an optimal parallel algorithm that organizes the processors in a $2d$-dimensional grid for $d$-dimensional tensors. 
By judiciously selecting the processor grid dimensions, we prove that our algorithm attains the lower bounds to within a constant factor. 
To verify the theoretical analysis, we simulate Multi-TTM computations using a variety of values for the number of processors, $P$, the dimension, $d$, and sizes, $n_i$ and $r_i$; compute the communication costs of our algorithm corresponding to each simulation; and compute the optimal communication cost provided by the theoretical lower bound. These simulations show that the communication costs of the proposed algorithm are close to optimal.
When one of the tensors is much larger than the other tensor, which is typical in compression algorithms based on the Tucker decomposition, our algorithm significantly reduces communication costs over the conventional approach of performing the computation as a sequence of tensor-times-matrix operations. 


Motivated by the simulated communication cost comparisons, our next goal is to implement the parallel atomic algorithm and verify the performance improvement in practice.
Further, because neither the atomic or TTM-in-sequence approach is always superior in terms of communication, we wish to explore hybrid algorithms to account for significant dimension reduction in some modes but modest reduction in others.
Given the computation and communication capabilities of a parallel platform, it would also be interesting to study the computation-communication tradeoff for these two approaches and how to minimize the overall execution time in practice.
Finally, this work considers that each processor has enough memory. 
A natural extension is to study communication lower bounds for Multi-TTM computations with limited memory sizes.

\appendix

\section{Proof of \cref{lem:mttkrpOpt}}
\label{app:sec:optLemmaProof}

In this section, we prove \cref{lem:mttkrpOpt} as it is written, instead of relying on the reader to derive this result from \cite[Lemma 5.1]{Ballard:PPSC:MTTKRP}.
This proof relies on two additional results.
The first, \cite[Lemma 2.2]{Ballard:PPSC:MTTKRP}, states that the first constraint of the optimization problem is quasiconvex~\cite{ABGKR22-report}.
The second, \cite[Lemma 3]{ABGKR22-report}, states that satisfying the Karush-Kuhn-Tucker (KKT) conditions is sufficient for a solution to the optimization problem to be optimal as the optimization problem minimizes a differentiable convex function and the contraints are all differentiable quasiconvex functions.



%

\begin{proof}[Proof of \Cref{lem:mttkrpOpt}]
	To begin we note that the objective and all but the first constraint are affine functions, which are differentiable, convex, and quasiconvex.
	The first constraint is differentiable, and it is quasiconvex in the positive orthant by~\cite[Lemma 2.2]{Ballard:PPSC:MTTKRP}.
	Thus the KKT conditions are sufficient to demonstrate the optimality of any solution by ~\cite[Lemma 3]{ABGKR22-report}, and we will prove the optimality of the solution $\mathbf{\starontop{x}}=\begin{bmatrix}\starontop{x_1} &\starontop{x_2} & \cdots & \starontop{x_d}\end{bmatrix}$ by finding dual variables $\starontop{\mu_i}$ for $0\leq i\leq d$ such that the KKT conditions are satisfied.
	
	We now convert the problem to standard notation.
	The minimization objective function is
	$$f(\mathbf{x}) = \sum_{j \in [d]} x_j,$$
	and the constraints are given by
	\begin{align*}
	g_0(\mathbf{x}) &= \frac{nr}{P} - \prod_{j \in [d]} x_j\text,\\
	g_i(\mathbf{x}) &= x_i -k_i \text{ for all }i\in[d]\text.
	\end{align*}
	Partial derivatives for $j\in[d]$ are given by 
	\begin{align*}
	\frac{\partial f}{\partial x_j} (\mathbf{x}) &=1 \text, \\
	\frac{\partial g_0}{\partial x_j} (\mathbf{x}) &= -\prod_{\ell \in [d]-\{j\} }x_\ell\text, \\
	\frac{\partial g_i}{\partial x_j} (\mathbf{x}) &= \begin{cases} 1 & \text{if} \quad i=j \\ 0 & \text{else}. \end{cases} \\
	\end{align*}
	
	The KKT conditions of $(\starontop{\mathbf{x}}, \starontop{\mathbf{\mu}})$ are:
	\begin{itemize}
		\item \emph{Primal feasibility}: $g_i(\mathbf{\starontop{x}}) \le 0$, for $0\leq i\leq d$.
		\item \emph{Stationarity}: $\frac{\partial f}{\partial x_j}(\mathbf{\starontop{x}}) + \sum_{i=0}^{d}\starontop{\mu_i} \frac{\partial g_i}{\partial x_j}(\mathbf{\starontop{x}}) = 0$, for $j\in[d]$.
		\item \emph{Dual feasibility}: $\starontop{\mu_i} \ge 0$, for $0\leq i\leq d$.
		\item \emph{Complementary slackness}: $\starontop{\mu_i} g_i(\mathbf{\starontop{x}})=0$, for $0\leq i\leq d$.
	\end{itemize}
	
	Recall from the statement of \cref{lem:mttkrpOpt} that $K_I=\prod_{j=d-I+1}^d k_j$ and  $1\le I \le d$ is defined such that $L_I \leq P < L_{I+1}$.
	$$\text{Here} \quad L_j=\frac{K_j}{(k_{d-j+1})^j} \quad \text{for} \quad 1 \leq j \leq d \quad \text{and} \quad L_{d+1} = \infty.\qquad\qquad\qquad$$
	We claim that the optimal primal solution is
	$$\starontop{x_j} = \begin{cases} k_j &\qquad\text{if } j \le d-I\text, \\ (K_I/P)^{1/I} &\qquad\text{if } d-I < j \le d\text.\end{cases}$$
	and the optimal dual solution is 
	$$\starontop{\mu_i} = \begin{cases}
	\frac{(K_I/P)^{1/I}}{nr} & \qquad\text{if } i=0\text, \\
	\frac{(K_I/P)^{1/I}}{k_i} -1 &\qquad\text{if } 0<i\le d-I\text,\\ 
	0&\qquad\text{if } d-I < i \le d\text.
	\end{cases}$$
	
	We now check that $\mathbf{\starontop{x}}$ satisfies the primal feasibility condition.
	By direct verification (and the fact that $nr=\prod_{j\in[d]} x_j$), we have $g_0(\starontop{\mathbf{x}})=0$.
	Clearly $g_i(\mathbf{\starontop{x}})= 0$ for all $i\in[d-I]$, as $\starontop{x_i} = k_i$ for all $i\in[d-I]$.
	To see that $g_i(\mathbf{\starontop{x}})\le 0$ for $d-I<i\leq d$, it is sufficient to recall that $\frac{K_I}{(k_{d-I+1})^I}\le P $ by the definition of $I$, and that $\starontop{x_i} = (K_I/P)^{1/I}$ and $k_{d-I+1}\le k_i$ for $d-I<i\leq d$. 
	
	Stationarity follows from direct verification of the condition for $j\in [d]$.
	
	To check dual feasibility, we note that all the factors of $\starontop{\mu_0}$ are positive, thus $\starontop{\mu_0} > 0$.
	To show that $\starontop{\mu_i} >0$ for $i\in[d-I]$, it is sufficient to show that $k_{d-I} < (K_I/P)^{1/I}$ as $k_1\le\cdots\le k_{d-I}$.
	This is implied by $P<\frac{K_{I+1}}{(k_{d-I})^{I+1}} = \frac{k_{d-I}K_{I}}{(k_{d-I})^{I+1}}$, where the inequality comes from the definition of $I$, and the equality from the definition of the right products $K_j$.
	
	Finally, complementary slackness is satisfied because $g_i(\mathbf{\starontop{x}}) = 0$ for $0\le i\le d-I$, and $\mu_i = 0$ for $d-I < i \le d$.
	\begin{align*}
	\phantom{This is here to move the proof symbol to its correct location...}
	\end{align*}
	
\end{proof}

\section*{Acknowledgments}
This work is supported by the National Science Foundation under
Grant No. CCF-1942892 and OAC-2106920. This project has received
funding from the European Research Council (ERC) under the
European Union’s Horizon 2020 research and innovation program
(Grant agreement No. 810367).


\bibliographystyle{siamplain}
\bibliography{report}

\end{document}